\font\msbm=msbm10
\numberwithin{equation}{section}
\theoremstyle{plain}
\newtheorem{theorem}{Theorem}[section]
\newtheorem{lemma}[theorem]{Lemma}
\newtheorem{proposition}[theorem]{Proposition}
\newtheorem{remark}[theorem]{Remark}
\def\mathbb#1{\hbox{\msbm{#1}}}
\renewcommand{\epsilon}{\varepsilon}
\newcommand{\ba}{\boldsymbol{a}}
\newcommand{\bb}{\boldsymbol{b}}
\newcommand{\bd}{\boldsymbol{d}}
\newcommand{\be}{\boldsymbol{e}}
\newcommand{\bff}{\boldsymbol{f}}
\newcommand{\bg}{\boldsymbol{g}}
\newcommand{\bh}{\boldsymbol{h}}
\newcommand{\bsm}{\boldsymbol{m}}
\newcommand{\bs}{\boldsymbol{s}}
\newcommand{\bu}{\boldsymbol{u}}
\newcommand{\bv}{\boldsymbol{v}}
\newcommand{\bw}{\boldsymbol{w}}
\newcommand{\bx}{\boldsymbol{x}}
\newcommand{\by}{\boldsymbol{y}}
\newcommand{\bz}{\boldsymbol{z}}
\newcommand{\bone}{\boldsymbol{1}}
\newcommand{\BA}{\boldsymbol{A}}
\newcommand{\BB}{\boldsymbol{B}}
\newcommand{\BC}{\boldsymbol{C}}
\newcommand{\BD}{\boldsymbol{D}}
\newcommand{\BE}{\boldsymbol{E}}
\newcommand{\BF}{\boldsymbol{F}}
\newcommand{\BG}{\boldsymbol{G}}
\newcommand{\BH}{\boldsymbol{H}}
\newcommand{\BM}{\boldsymbol{M}}
\newcommand{\BP}{\boldsymbol{P}}
\newcommand{\BQ}{\boldsymbol{Q}}
\newcommand{\BS}{\boldsymbol{S}}
\newcommand{\BU}{\boldsymbol{U}}
\newcommand{\BV}{\boldsymbol{V}}
\newcommand{\BX}{\boldsymbol{X}}
\newcommand{\BY}{\boldsymbol{Y}}
\newcommand{\BZ}{\boldsymbol{Z}}
\newcommand{\BLam}{\boldsymbol{\Lambda}}
\newcommand{\bphi}{\boldsymbol{\phi}}
\newcommand{\beps}{\boldsymbol{\eps}}
\newcommand{\tbe}{\widetilde{\boldsymbol{e}}}
\newcommand{\tbw}{\widetilde{\boldsymbol{w}}}
\newcommand{\bzero}{\boldsymbol{0}}
\newcommand{\A}{\mathcal{A}}
\newcommand{\CC}{\mathbb{C}}
\newcommand{\CZ}{\mathcal{Z}}
\newcommand{\MS}{\mathcal{S}}
\newcommand{\I}{\boldsymbol{I}}
\newcommand{\RR}{\mathbb{R}}
\newcommand{\lag}{\left\langle}
\newcommand{\rag}{\right\rangle}
\newcommand{\Tr}{\text{Tr}}
\newcommand{\eps}{\epsilon}
\newcommand{\mi}{\mathrm{i}}
\renewcommand{\Pr}{\mathbb{P} }
\DeclareMathOperator{\Var}{Var}
\DeclareMathOperator{\E}{\mathbb{E}}
\DeclareMathOperator{\diag}{diag}
\DeclareMathOperator{\SNR}{SNR}
\DeclareMathOperator{\minimize}{\text{minimize}}
\DeclareMathOperator{\rank}{\text{rank}}
\DeclareMathOperator{\Corr}{Corr}
\newcommand{\vct}[1]{\bm{#1}}
\definecolor{xl}{RGB}{200,50,50}
\begin{document}
\title{\bf Self-Calibration and Bilinear Inverse Problems \\ via Linear Least Squares\thanks{This research was supported by the NSF via Award Nr.~DTRA-DMS 1322393 and Award Nr.~DMS 1620455.}}

\author{Shuyang Ling\thanks{Courant Institute of Mathematical Sciences and the Center for Data Science, New York University, NY 10003 (Email: sling@cims.nyu.edu)}, 
Thomas Strohmer\thanks{Department of Mathematics, University of California Davis, CA 95616 (Email: strohmer@math.ucdavis.edu).}}

\maketitle

\begin{abstract}
Whenever we use devices to take measurements, calibration is indispensable. While the purpose of calibration is to reduce bias and uncertainty in the measurements, it can be quite difficult, expensive, and sometimes even impossible to implement.  We study a challenging problem called \emph{self-calibration}, i.e., the task of designing an algorithm for devices so that the algorithm is able to perform calibration automatically. More precisely, we consider the setup $\boldsymbol{y} = \mathcal{A}(\boldsymbol{d}) \boldsymbol{x} + \boldsymbol{\epsilon}$ where only partial information about the sensing matrix $\mathcal{A}(\boldsymbol{d})$ is known and  where $\mathcal{A}(\boldsymbol{d})$ linearly depends on $\boldsymbol{d}$. The goal is to estimate the calibration parameter $\boldsymbol{d}$ (resolve the uncertainty in the sensing process) and the signal/object of interests $\boldsymbol{x}$ simultaneously.  For three different models of practical relevance, we show how such a \emph{bilinear} inverse problem, including blind deconvolution as an important example, can be solved via a simple \emph{linear least squares} approach. As a consequence, the proposed algorithms are numerically extremely efficient, thus potentially allowing for real-time deployment. We also present a variation of the least squares approach, which leads to a~\emph{spectral method}, where the solution to the bilinear inverse problem can be found by computing the singular vector associated with the smallest singular value of a certain matrix derived from the bilinear system.
Explicit theoretical guarantees and stability theory are derived for both techniques; and the number of sampling complexity is nearly optimal (up to a poly-log factor). Applications in imaging sciences and signal processing are discussed and numerical simulations are presented to demonstrate the effectiveness and efficiency of our approach. 

\end{abstract}




\section{Introduction}\label{s:intro}
Calibration is ubiquitous in all fields of science and engineering. It is an essential step to guarantee that the devices measure accurately what scientists and engineers want. 
If sensor devices are not properly calibrated, their measurements are likely of little use to the application. While calibration is mostly done by specialists, it often can be expensive, time-consuming and sometimes even impossible to do in practice. Hence, one may wonder whether it is possible to enable machines to calibrate themselves automatically with a smart algorithm and give the desired measurements. This leads to the challenging field of \emph{self-calibration} (or \emph{blind calibration}). It has a long history in imaging sciences, such as camera self-calibration~\cite{PKV99,ISV14}, 
blind image deconvolution~\cite{campisi2007blind}, self-calibration in medical imaging~\cite{shin2014calibrationless}, and the well-known phase retrieval problem (phase calibration)~\cite{Fienup78}. It also plays an important role in signal processing~\cite{FS14} and wireless communications~\cite{Tong95,mrydlsb12}.

Self-calibration is not only a challenging problem for engineers, but also for mathematicians. It means that one needs to estimate the calibration parameter of the devices to adjust the measurements as well as recover the signal of interests. 
More precisely, many self-calibration problems are expressed in the following mathematical form,
\begin{equation}\label{eq:model}
\by = \A(\bd)\bx + \beps,
\end{equation}
where $\by$ is the observation, $\A(\bd)$ is a partially unknown sensing matrix, which depends on an unknown parameter $\bd$ and $\bx$ is the desired signal.
An uncalibrated sensor/device directly corresponds to ``\textit{imperfect sensing}", i.e., uncertainty exists within the sensing procedure and we do not know everything about $\A(\bd)$ due to the lack of calibration. The purpose of self-calibration is to resolve the uncertainty i.e., to estimate $\bd$ in $\A(\bd)$ and to recover the signal $\bx$ at the same time. 

The general model~\eqref{eq:model} is too hard to get meaningful solutions without any further assumption since there are many variants of the general model under different settings. 
In~\eqref{eq:model}, $\A(\bd)$ may depend on $\bd$ in a nonlinear way, e.g., $\bd$ can be the unknown orientation of a protein molecule and $\bx$ is the desired object~\cite{wang2013orientation};  in  phase retrieval, $\bd$ is the unknown phase information of the Fourier transform of the object~\cite{Fienup78}; in direction-of-arrival estimation $\bd$ represents unknown offset, gain, and phase of the sensors~\cite{WF90}.
Hence, it is impossible to resolve every issue in this field, but we want to understand several scenarios of self-calibration which have great potential in real world applications. Among all the cases of interest, we assume that $\A(\bd)$ \emph{linearly} depends on the unknown $\bd$  and will explore three different types of self-calibration models that are of considerable practical relevance.
However, even for  linear dependence, the problem is already quite challenging, since in fact we are dealing with \emph{bilinear (nonlinear) inverse problems}. All those three models have wide applications in imaging sciences, signal processing, wireless communications, etc., which will be addressed later. Common to these applications is the desire or need for {\em fast} algorithms, which ideally should be accompanied by theoretical performance guarantees. 
We will show under certain cases,  these bilinear  problems can be solved by \emph{linear least squares} exactly and efficiently if no noise exists, which is guaranteed by rigorous mathematical proofs.  Moreover, we prove that the solution is also robust to noise with tools from random matrix theory. Furthermore, we show that a variation of our approach leads to a spectral method, where the solution to the bilinear problem can be found by computing the singular vector associated with the smallest singular matrix of a certain matrix derived from the bilinear system.

\subsection{State of the art}

By assuming that $\A(\bd)$ linearly depends on $\bd$,~\eqref{eq:model} becomes a bilinear inverse problem, i.e., we want to estimate $\bd$ and $\bx$ from $\by$, where $\by$ is the output of a bilinear map from $(\bd, \bx).$ Bilinear inverse problems, due to its importance, are getting more and more attentions over the last few years. On the other hand, they are also notoriously difficult  to solve in general. 
Bilinear inverse problems are closely related to low-rank matrix recovery, see~\cite{DR16} for a comprehensive review. There exists extensive literature on this topic and it is not possible do justice to all these contributions. Instead we will only highlight some of the works which have inspired us.

Blind deconvolution might be one of the most important examples of bilinear inverse problems~\cite{campisi2007blind}, i.e., recovering $\bff$ and $\bg$ from $\by = \bff\ast \bg$, where ``$\ast$" stands for convolution. 
If both $\bff$ and $\bg$ are inside known low-dimensional subspaces, the blind deconvolution can be rewritten as $\mathcal{F}(\by) = \diag(\BB\bd)\BA\bx$, where $\mathcal{F}(\bff) = \BB\bd$, $\mathcal{F}(\bg) = \BA\bx$ and ``$\mathcal{F}$" denotes the Fourier transform. 
In the inspiring work~\cite{ARR12}, Ahmed, Romberg and Recht  apply the ``lifting" techniques~\cite{CESV15} and convert the problem into estimation of the rank-one matrix $\bd\bx^*$. 
It is shown that solving a convex relaxation  enables recovery of $\bd\bx^*$ under certain choices of $\BB$ and $\BA$. Following a similar spirit,~\cite{LS15} uses ``lifting" combined with a convex approach to solve the scenarios with sparse $\bx$ and~\cite{LS15Blind} studies the so called ``blind deconvolution and blind demixing" problem. 
The other line of blind deconvolution follows a nonconvex optimization approach~\cite{AKR16,LLSW16,LeeLJB17}. In~\cite{AKR16}, Ahmed, Romberg and Krahmer,  using tools from generic chaining, obtain local convergence of a sparse power factorization algorithm to solve this blind deconvolution problem when $\bh$ and $\bx$ are sparse and $\BB$ and $\BA$ are Gaussian random matrices. Under the same setting as~\cite{AKR16}, Lee et al.~\cite{LeeLJB17} propose a projected gradient descent algorithm based on matrix factorizations and provide a convergence analysis to recover sparse signals from subsampled convolution. However, this projection step can be hard to implement. As an alternative, the expensive projection step is replaced by a heuristic approximate projection, but then the global convergence is not fully guaranteed. Both~\cite{AKR16,LeeLJB17} achieve  nearly optimal sampling complexity.~\cite{LLSW16} proves global convergence of a gradient descent type algorithm when $\BB$ is a deterministic Fourier type matrix and $\BA$ is Gaussian. Results about identifiability issue of bilinear inverse problems can be found in~\cite{LiLB16b,KecK17,LiLB16}.

Another example of self-calibration focuses on the setup $\by_l = \BD\BA_l\bx$, where $\BD = \diag(\bd)$. The difference from the previous model consists in replacing the subspace assumption by multiple measurements. There are two main applications of this model. One application deals with blind deconvolution in an imaging system which uses randomly coded masks~\cite{BR15,TB14}. The measurements are obtained by (subsampled) convolution of an unknown blurring function $\BD$ with several random binary modulations of one image. Both~\cite{BR15} and~\cite{AD16} developed convex relaxing approaches (nuclear norm minimization) to achieve exact recovery of the signals and the blurring function. 
The other application is concerned with calibration of the unknown gains and phases $\BD$ and recovery of the signal $\bx$, see e.g.~\cite{CamJ16,CJ16sparse}. Cambareri and Jacques propose a gradient descent type algorithm in~\cite{CamJ16,CJ16b} and show convergence of the iterates by first constructing a proper initial guess. An empirical study is given in~\cite{CJ16sparse} when $\bx$ is sparse  by applying an alternating hard thresholding algorithm. 
Recently,~\cite{ACD15,AD16} study the blind deconvolution when inputs are changing. More precisely, the authors consider $\by_l = \bff \ast \bg_l$ where each $\bg_l$ belongs to a different known subspace, i.e., $\bg_l = \BC_l\bx_l$. They employ a similar convex approach as in~\cite{ARR12} to achieve exact recovery with number of measurements close to the information theoretic limit. 

An even more difficult, and from a practical viewpoint highly relevant, scenario focuses on \emph{self-calibration from multiple snapshots}~\cite{WF90}. Here, one wishes to recover the unknown gains/phases $\BD = \diag(\bd)$ and a signal matrix $\BX = [\bx_1,\cdots,\bx_p]$ from $\BY = \BD\BA\BX$. For this model, the sensing matrix $\BA$ is fixed throughout the sensing process and one measures output under different snapshots $\{\bx_l\}_{l=1}^p$. One wants to understand under what conditions we can identify  $\BD$ and $\{\bx_l\}_{l=1}^p$ jointly. If $\BA$ is a Fourier type matrix, this model has applications in both image restoration from multiple filters~\cite{GY99} and also network calibration~\cite{BN07,LB14}. 
We especially benefitted from work by Gribonval and coauthors~\cite{GCD12,BPGD13}, as well as by Balzano and Nowak~\cite{BN07,balzano2008blind}.
The papers~\cite{BN07,balzano2008blind} study the noiseless version of the problem by solving a linear system and~\cite{LB14} takes a total least squares approach in order to obtain empirically  robust recovery in the presence of noise. If each $\bx_l$ is sparse, this model becomes more difficult and Gribonval et al.~\cite{BPGD13,GCD12} give a thorough numerical study. Very recently,~\cite{WC16} gave a theoretic result under certain conditions. This calibration problem is viewed as a special case of the dictionary learning problem where the underlying dictionary $\BD\BA$ possesses some additional structure.  The idea of transforming a blind deconvolution problem into a linear problem can also be found in~\cite{mrydlsb12}, where the authors analyze a certain 
non-coherent wireless communication scenario.
 
\subsection{Our contributions}

In our work, we consider three different models of self-calibration, namely, $\by_l = \BD\BA_l\bx$, $\by_l = \BD\BA_l\bx_l$ and $\by_l = \BD\BA\bx_l$. Detailed descriptions of these models are given in the next Section. We do not impose any sparsity constraints on $\bx$ or $\bx_l$. We want to find out $\bx_l$ (or $\bx$) and $\BD$ when $\by_l$ (or $\by$) and $\BA_l$ (or $\BA$) are given. Roughly, they correspond to the models in~\cite{BR15,ACD15,BPGD13} respectively. Though all of the three models belong to the class of bilinear inverse problems, we will prove that simply solving \emph{linear least squares} will give solutions to all those models exactly and robustly for invertible $\BD$ and for several useful choices of $\BA$ and $\BA_l.$ Moreover, the sampling complexity is nearly optimal (up to poly-log factors) with respect to the information theoretic limit (degree of freedom of unknowns). 

As mentioned before, our approach is largely inspired by~\cite{BPGD13} and~\cite{BN07,balzano2008blind}; there the authors convert a bilinear inverse problem into a linear problem via a proper transformation. 
We follow a similar approach in our paper. The paper~\cite{BPGD13}
provides an extensive empirical study, but no theoretical analysis. Nowak and Balzano, in~\cite{BN07,balzano2008blind} provide numerical simulations as well
as theoretical conditions on the number of measurements required to solve the noiseless case. 

Our paper goes an important step further: On the one hand we consider
more general self-calibration settings. And on the other hand we provide a rigorous theoretical analysis for recoverability and, perhaps most importantly, stability theory 
in the presence of measurement errors. Owing to the simplicity of our approach and the structural properties of the underlying matrices, our framework yields  self-calibration algorithms that are numerically extremely efficient, thus potentially allowing for deployment in applications where real-time self-calibration is needed.

\subsection{Notation and Outline}
We introduce notation which will be used throughout the paper. Matrices are denoted in boldface or a calligraphic font such as
$\BZ$ and $\mathcal{Z}$; vectors are denoted by boldface lower case letters, e.g.~$\bz.$
The individual entries of a matrix or a vector are denoted in normal font such as $Z_{ij}$ or
$z_i.$
For any matrix $\BZ$, $\|\BZ\|$
denotes its operator norm, i.e., the largest singular value, and $\|\BZ\|_F$ denotes its the Frobenius norm, i.e.,
$\|\BZ\|_F =\sqrt{\sum_{ij} |Z_{ij}|^2 }$. For any vector $\bz$, $\|\bz\|$ denotes its Euclidean norm. For both
matrices and vectors, $\BZ^T$ and $\bz^T$ stand for the transpose of $\BZ$ and $\bz$ respectively while $\BZ^*$ and
$\bz^*$ denote their complex conjugate transpose.  For any real number $z$, we let $z_+ = \frac{1}{2}(z + |z|).$ We equip the matrix space $\CC^{K\times N}$ with the inner
product defined by $\lag \BU, \BV\rag : =\Tr(\BU^*\BV).$ A special case is the inner product of two vectors, i.e.,
$\lag \bu, \bv\rag = \Tr(\bu^*\bv) = \bu^*\bv.$ We define the correlation between two vectors $\bu$ and $\bv$ as $\Corr(\bu,\bv) = \frac{\bu^*\bv}{\|\bu\|\|\bv\|}$. For a given vector $\bv$, $\diag(\bv)$ represents the diagonal
matrix whose diagonal entries are given by the vector $\bv$.

$C$ is an absolute constant and $C_{\gamma}$ is a constant which depends linearly on $\gamma$, but on no other parameters. $\I_n$ and $\bone_n$ always denote the $n\times n$ identity matrix and a column vector of ``$1$" in $\RR^n$ respectively. And $\{\be_i\}_{i=1}^m$ and $\{\tbe_l\}_{l=1}^p$ stand for the standard orthonormal basis in $\RR^m$ and $\RR^p$ respectively. ``$\ast$" is the circular convolution and ``$\otimes$" is the Kronecker product.

\medskip
The paper is organized as follows. The more detailed discussion of  the models under consideration and the proposed method will be given in 
Section~\ref{s:model}. Section~\ref{s:main} presents the main results of our paper and we will give numerical simulations in Section~\ref{s:numerics}. Section~\ref{s:proof} contains the proof  for each scenario. We collect some useful auxiliary results in the Appendix.

\section{Problem setup: Three self-calibration models}\label{s:model}

This section is devoted to describing three different models for self-calibration in detail. We will also explain how those bilinear inverse problems are reformulated and solved via linear least squares.

\subsection{Three special models of self-calibration}
\paragraph{Self-calibration via repeated measurements}
Suppose we are seeking for information with respect to an unknown signal $\bx_0$ with several randomized linear sensing designs. Throughout this procedure, the calibration parameter $\BD$ remains the same for each sensing procedure. How can we recover the signal $\bx_0$ and $\BD$ simultaneously? Let us make it more concrete by introducing the following model,
\begin{equation}\label{eq:model1}
\vct{y}_l = \vct{D}\vct{A}_l \bx_0 + \beps_l, \quad 1\leq l\leq p
\end{equation}
where $\BD = \diag(\bd)\in\CC^{m\times m}$ is a diagonal matrix and each $\BA_l\in\CC^{m\times n}$ is a measurement matrix. Here $\by_l$ and $\BA_l$ are given while $\BD$ and $\bx_0$ are unknown. For simplicity, we refer to the setup~\eqref{eq:model1} as ``\emph{self-calibration from repeated measurements}". This model has various applications in self-calibration for imaging systems~\cite{CamJ16,CJ16b}, networks~\cite{BN07}, as well as in blind deconvolution from 
random masks~\cite{BR15,TB14}.

\paragraph{Blind deconvolution via diverse inputs}
Suppose that one sends several different signals through the same unknown channel, and each signal is encoded differently. Namely, we are considering
\begin{equation*}
\by_l = \bff \ast \BC_l \bx_l + \beps_l, \quad 1\leq l\leq p.
\end{equation*}
How can one estimate the channel and each signal jointly? 
In the frequency domain, this ``\emph{blind deconvolution via diverse inputs}"~\cite{ACD15,AD16} problem can be written as (with a bit abuse of notation),
\begin{equation}\label{eq:model2}
\by_l = \BD\BA_l \bx_l + \beps_l, \quad 1\leq l\leq p
\end{equation} 
where $\BD = \diag(\bd)\in\CC^{m\times m}$ and $\BA_l\in\CC^{m\times n}$ are the Fourier transform of $\bff$ and $\BC_l$ respectively.  
We aim to recover $\{\bx_l\}_{l=1}^p$ and $\BD$ from $\{\by_l,\BA_l\}_{l=1}^p.$

\paragraph{Self-calibration from multiple snapshots} Suppose we take measurements of several signals $\{\bx_l\}_{l=1}^p$ with the same set of design matrix $\BD\BA$ (i.e., each sensor corresponds one row of $\BA$ and has an unknown complex-valued calibration term $d_i$). When and how can we recover $\BD$ and $\{\bx_l\}_{l=1}^p$ simultaneously?
More precisely, we consider the following model of \emph{self-calibration model from multiple snapshots}:
\begin{equation}\label{eq:model3}
\by_l = \BD\BA \bx_l + \beps_l, \quad 1\leq l\leq p.
\end{equation} 
Here $\BD = \diag(\bd)$ is an unknown diagonal matrix, $\BA\in \CC^{m\times n}$ is a sensing matrix, $\{\bx_l\}_{l=1}^p$ are $n\times 1$ unknown signals and $\{\by_l\}_{l=1}^p$ are their corresponding observations. This multiple snapshots model has been used in image restoration from multiple filters~\cite{GY99} and self-calibration model for sensors~\cite{WF90,BN07,LB14,GCD12,BPGD13}.

\subsection{Linear least squares approach}

Throughout our discussion, we assume that $\BD$ is {\em invertible},
and we let $\BS := \diag(\bs) = \BD^{-1}$. Here, $\BD = \diag(\bd)$ stands for the  calibration factors of the sensor(s)~\cite{BN07,BPGD13} and hence it is reasonable to assume invertibility of $\BD$. For, if a sensor's gain were equal to zero, then it would not contribute any measurements to the observable
$\by$, in which case the associated entry of $\by$ would be zero. But then we could simply discard that entry and consider the  correspondingly reduced
system of equations, for which the associated $\BD$ is now invertible.

One simple solution is to minimize a \emph{nonlinear} least squares objective function. Let us take~\eqref{eq:model1} as an example (the others~\eqref{eq:model2} and~\eqref{eq:model3} have quite similar formulations), 
\begin{equation}\label{eq:nlls}
\min_{\BD, \bx} \sum_{l=1}^p\| \BD\BA_l \bx - \by_l \|^2.
\end{equation}
The obvious difficulty lies in the \emph{biconvexity} of~\eqref{eq:nlls}, i.e., if either $\BD$ or $\bx$ is fixed, minimizing over the other variable is a convex program. In general, there is no way to guarantee that any gradient descent algorithm/alternating minimization will give the global minimum. However, for the three models described above, there is one shortcut towards the exact and robust recovery of the solution via \emph{linear} least squares if $\BD$ is invertible.

We continue with~\eqref{eq:model1} when $\beps_l = \bzero$, i.e.,
\begin{equation}\label{eq:model1-linear}
\diag(\by_l)\bs = \vct{A}_l \bx, \quad 1\leq l\leq p
\end{equation}
where $\BS\by_l = \diag(\by_l)\bs$ with $s_i = d_i^{-1}$ and $\BS$ is defined as $\diag(\bs).$
The original measurement equation turns out to be a \emph{linear} system with unknown $\bs$ and $\bx$. The same idea of \emph{linearization} can be also found in~\cite{GCD12,BPGD13,WC16,BN07}. In this way, the ground truth $\bz_0 : = (\bs_0, \bx_0)$ lies actually inside the \emph{null space} of this linear system.

Two issues arise immediately: One the one hand, we need to make sure that $(\bs_0, \bx_0)$ spans the whole null space of this linear system. This is equivalent to the identifiability issue of bilinear problems of the form~\eqref{eq:nlls}, because if the pair $(\alpha^{-1}\bd_0, \alpha\bx_0)$ for some $\alpha \neq 0$ is (up to the scalar $\alpha$) unique solution to~\eqref{eq:model1}, then $(\alpha\bs_0,  \alpha\bx_0)$ spans the null space of~\eqref{eq:model1-linear}, see also~\cite{LiLB16b,KecK17,LiLB16}. On the other hand, we also need to avoid the trivial scenario 
$(\bs_0, \bx_0) = (\bzero, \bzero)$, since it has no physical meaning. To resolve the latter issue, we add the extra linear constraint (see also~\cite{BPGD13,BN07})
\begin{equation}
\label{wconstraint}
\left\langle \bw , \begin{bmatrix} \bs \\ \bx \end{bmatrix} \right\rangle = c,
\end{equation}
where the scalar $c$ can be any nonzero number (we note that $\bw$ should of course not be orthogonal to the solution).
Therefore, we hope  that in the noiseless case it suffices to solve the following linear system to recover $(\bd, \bx_0)$ up to a scalar, i.e., 
\begin{equation}
\label{eq:linear}
\underbrace{
\begin{bmatrix}
\diag(\by_1) & -\BA_1 \\
\vdots & \vdots \\
\diag(\by_p) & -\BA_p \\
 \multicolumn{2}{c}{\bw^*}
\end{bmatrix}
_{(mp+1) \times (m + n)}
}_{\A_{\bw}}
\underbrace{
\begin{bmatrix}
\bs \\
\bx \\
\end{bmatrix}_{(m+n)\times 1}}_{\bz}
= \underbrace{
\begin{bmatrix}
\bzero \\
c \\
\end{bmatrix}_{(mp+1)\times 1}}_{\bb}
\end{equation}
In the presence of additive noise, 
we replace the linear system above by a linear least squares problem
\begin{equation*}
\minimize \, \sum_{l=1}^p\|\diag(\by_l)\bs - \BA_l \bx\|^2 + | \bw^*\bz - c|^2
\end{equation*}
with respect to $\bs$ and $\bx$, or equivalently, 
\begin{equation}\label{eq:obj-ls}
\underset{\bz}{\minimize} \,\, \|\A_{\bw} \bz - \bb \|^2
\end{equation}
where $\bz = \begin{bmatrix}
\bs \\
\bx \\
\end{bmatrix}$
, $\bb = \begin{bmatrix}
\bzero \\
c \\
\end{bmatrix}$, and $\A_{\bw}$ is the matrix on the left hand side of~\eqref{eq:linear}. Following  the same idea,~\eqref{eq:model2} and~\eqref{eq:model3} can also be reformulated into linear systems and be solved via linear least squares. The matrix $\A_{\bw}$ and the vector $\bz$ take a slightly different form for those cases, see~\eqref{eq:A0-2-ms} 
and~\eqref{eq:Amodel3}, respectively.

\medskip
\noindent

\begin{remark} 
Note that solving \eqref{eq:obj-ls} may not be the optimal choice to recover the unknowns from the perspective of statistics since the noisy perturbation actually enters  into $\A_{\bw}$ instead of $\bb$. More precisely,  the noisy perturbation $\delta \A$ to the left hand side of the corresponding linear system for~\eqref{eq:model1},~\eqref{eq:model2} and~\eqref{eq:model3}, is always in the form of 
\begin{equation}\label{def:deltaA}
\delta \A := \begin{bmatrix}
\diag(\beps_1) & \bzero\\
\vdots & \vdots \\
\diag(\beps_p) & \bzero\\
\bzero & \bzero
\end{bmatrix}.
\end{equation}
The size of $\delta \A$ depends on the models.
Hence total least squares~\cite{LB14} could be a better alternative while it is more difficult to analyze and significantly more costly to compute. Since computational efficiency 
is essential for many practical applications, a straightforward  implementation of total least squares is of limited use. Instead one should keep in mind that the actual perturbation enters only into $\diag(\by_l)$, while the other matrix blocks remain unperturbed. Constructing a total least squares solution that obeys these constraints, doing so in a numerically efficient manner and providing theoretical error bounds for it, is a  rather challenging task, which we plan to address in our future work.
\end{remark}

\begin{remark} 
Numerical simulations imply that the performance under noisy measurements depends on the choice of $\bw$, especially how much $\bw$ and $\bz_0$ are correlated. One extreme case is that $\lag \bw, \bz_0\rag = 0$, in which case we cannot avoid the solution $\bz = \bzero$. It might be better to add a constraint like $\|\bw\| = 1$. However, this will lead to a nonlinear problem which may not be solved efficiently and not come with rigorous recovery guarantees. Therefore, we present an alternative approach in the next subsection.
\end{remark}

\subsection{Spectral method}\label{ss:svd}
In this subsection, we discuss a method for solving the self-calibration problem, whose performance does not depend on the choice of $\bw$ as it avoids the need of $\bw$ in the first place. Let $\MS$ be the matrix $\A_{\bw}$ excluding the last row (the one which contains $\bw$). We decompose $\MS$  into $\MS = \MS_0 + \delta\MS$ where $\MS_0$ is the noiseless part of $\MS$ and $\delta\MS$ is the noisy part\footnote{This is a slight abuse of notation, since the $\delta \A$ defined in~\eqref{def:deltaA} has an additional row with zeros. However, it will be clear from the context which $\delta \A$ we refer to, and more importantly, in our estimates we mainly care about $\|\delta \A\|$
which coincides for both choices of $\delta \A$.}.

We start with the noise-free scenario: if $\delta \MS = 0$, there holds $\MS = \MS_0$ and the right singular vector of $\MS$ corresponding to the smallest singular value is actually $\bz_0.$ Therefore we can recover $\bz_0$ by solving the following optimization problem:
\begin{equation}\label{prog:svdmin}
\min_{ \|\bz\| = 1} \| \MS\bz \|.
\end{equation}
Obviously, its solution is equivalent to the smallest singular value of $\MS$ and its corresponding singular vector. 

If noise exists, the performance will depend on how large the second smallest singular value of $\MS_0$ is and on the amount of noise, given by $\|\delta\MS\|$. 
We will discuss the corresponding theory and algorithms in Section~\ref{ss:svd-thm}, and the proof in Section~\ref{ss:svd-bilinear}.

\section{Theoretical results}\label{s:main}
We present our theoretical findings for the three models~\eqref{eq:model1},~\eqref{eq:model2} and~\eqref{eq:model3} respectively for different choices of $\BA_l$ or $\BA.$  In one of our choices the $\BA_l$ are Gaussian random matrices. The rationale for this choice is that, while a Gaussian random matrix is
 not useful or feasible in most applications, it often provides a benchmark for theoretical guarantees and numerical performance. Our other choices for the sensing matrices are structured random matrices, such as e.g.~the product of a deterministic partial (or a randomly subsampled) Fourier matrix or a Hadamard matrix\footnote{A randomly subsampled Fourier matrix is one, where we randomly choose a certain number of rows or columns of the Discrete Fourier Transform matrix, and analogously for a randomly subsampled Hadamard matrix} with a diagonal binary random matrix\footnote{At this point, we are not able to prove competitive results for fully deterministic sensing matrices.}. These matrices bring us closer to what
we encounter in real world applications. Indeed, structured random matrices of this type have been deployed for instance in imaging and 
wireless communications, see e.g.~\cite{gan2008fast,verdu1999spectral}. 

By solving simple variations (for different models) of~\eqref{eq:obj-ls}, we can guarantee that the ground truth is recovered exactly up to a scalar if no noise exists and robustly if noise is present. 
The number of measurements required for exact and robust recovery is nearly optimal, i.e., close to the information-theoretic limit up to a poly-log factor. However, the error bound for robust recovery is not optimal. 
It is worth mentioning that once the signals and calibration parameter $\BD$ are identifiable, we are able to recover both of them exactly in absence of noise by simply solving a linear system. However, identifiability alone cannot guarantee robustness. 

Throughout this section, we let $d_{\max} : = \max_{1\leq i\leq m}|d_{i,0}|$ and $d_{\min} : = \min_{1\leq i\leq m}|d_{i,0}|$ where $\{d_{i,0}\}_{i=1}^m$ are the entries of the ground truth $\bd_0.$ We also define $\A_{\bw,0}$ as the noiseless part of $\A_{\bw}$ for each individual model. 
\subsection{Self-calibration via repeated measurements}
\label{s:model1-setup}
For model~\eqref{eq:model1} we will focus on three cases:
\begin{enumerate}[(a)]
\item  $\BA_l$ is an $m\times n$ complex Gaussian random matrix, i.e., each entry in $\BA_l$ is given by $\frac{1}{\sqrt{2}}\mathcal{N}(0, 1) + \frac{\mi}{\sqrt{2}}\mathcal{N}(0, 1)$.
\item  $\BA_l$ is an $m\times n$ ``tall" random DFT/Hadamard matrix with $m\geq n$, i.e., $\BA_l : = \BH\BM_l$ where  $\BH$ consists of the first $n$ columns of an $m\times m$  DFT/Hadamard matrix and each $\BM_l : = \diag(\bsm_l)$ is a diagonal matrix with entries taking on the value  $\pm 1$ with equal probability. In particular, there holds,
\begin{equation*}
\BA_l^* \BA_l = \BM_l^*\BH^*\BH\BM_l = m\I_n.
\end{equation*}
\item $\BA_l$ is an $m\times n$ ``fat" random partial DFT matrix with $m < n$, i.e., $\BA_l : = \BH\BM_l$ where  $\BH$ consists of $m$ columns of an $n\times n$  DFT/Hadamard matrix and each $\BM_l : = \diag(\bsm_l)$ is a diagonal matrix, which is defined the same as case (b), 
\begin{equation*}
\BA_l\BA_l^* = \BH\BH^* = n\I_m.
\end{equation*}
\end{enumerate}
Our main findings are summarized as follows:
\begin{theorem}\label{thm:main1}
Consider the self-calibration model given in~\eqref{eq:model1}, where $\A_{\bw}$ is as in~\eqref{eq:linear} and $\A_{\bw,0}$ is the noiseless part of $\A_{\bw}$. Then, for the solution  $\hat{\bz}$ of~\eqref{eq:obj-ls} and $\alpha = \frac{c}{\bw^*\bz_0}$, there holds
\begin{equation*}
\frac{ \| \hat{\bz} - \alpha\bz_0 \| }{\|\alpha\bz_0\|} \leq \kappa(\A_{\bw, 0})\eta \left( 1 + \frac{2}{1 - \kappa(\A_{\bw, 0})\eta }\right)
\end{equation*}
if $\kappa(\A_{\bw,0})\eta < 1$ where $\eta = \frac{2\|\delta \A\|}{\sqrt{mp}}$. The condition number of $\A_{\bw}$ satisfies
\begin{equation*}
\kappa(\A_{\bw,0}) \leq  \sqrt{\frac{6 (mp + \|\bw\|^2)}{\min\{mp,\|\bw\|^2 |\Corr(\bw,\bz_0)|^2\}} \frac{ \max\{d_{\max}^2\|\bx\|^2, m\}}{\min\{d_{\min}^2\|\bx\|^2, m\}}}, 
\end{equation*}
where $\Corr(\bw,\bz_0) = \frac{\bw^*\bz_0}{\|\bw\|\|\bz_0\|}$, and for $\|\bw\| = \sqrt{mp}$ 
\begin{equation*}
\kappa(\A_{\bw,0}) \leq \frac{2\sqrt{3}}{ |\Corr(\bw,\bz_0)|}  \sqrt{\frac{ \max\{d_{\max}^2\|\bx\|^2, m\}}{\min\{d_{\min}^2\|\bx\|^2, m\}}}, 
\end{equation*}

\begin{enumerate}[(a)]
\item with probability $1 - (m + n)^{-\gamma}$ if $\BA_l$ is Gaussian and $p \geq c_0 \gamma \max\left\{ 1, \frac{n}{m} \right\}  \log^2(m + n)$;
\item with probability $1 - (m + n)^{-\gamma} - 2(mp)^{-\gamma + 1}$ if each $\BA_l$ is a ``tall" $(m\times n, m \geq n)$ random Hadamard/DFT matrix and $p \geq c_0 \gamma^2 \log(m + n)\log(mp)$;
\item with probability $1 - (m + n)^{-\gamma} - 2(mp)^{-\gamma + 1}$ if each $\BA_l$ is a ``fat" $(m\times n, m\leq n)$ random Hadamard/DFT matrix and $mp \geq c_0 \gamma^2n \log(m + n)\log(mp)$.
\end{enumerate}

\end{theorem}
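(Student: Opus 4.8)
The plan is to separate the argument into a deterministic part — a perturbation bound for the constrained linear least squares problem \eqref{eq:obj-ls}, together with an estimate of the condition number $\kappa(\A_{\bw,0})$ — and a probabilistic part, where random matrix tools control the relevant singular values for each of the three choices of $\BA_l$. The starting point is the observation that the scaled ground truth $\alpha\bz_0$ \emph{exactly} solves the noiseless system $\A_{\bw,0}\bz = \bb$: writing $\by_{l,0} = \BD\BA_l\bx_0$, one checks entrywise that $\diag(\by_{l,0})\bs_0 = \BA_l\bx_0$ because $s_{i,0} = d_{i,0}^{-1}$, so $\MS_0\bz_0 = \bzero$, while the last row gives $\bw^*(\alpha\bz_0) = \alpha\,\bw^*\bz_0 = c$ by the choice $\alpha = c/(\bw^*\bz_0)$. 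Thus $\alpha\bz_0$ is a zero-residual solution of the unperturbed consistent system, which is exactly the situation in which least squares sensitivity analysis is cleanest.

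Given this, I would derive the error bound from standard least squares perturbation theory. Writing $\A_{\bw} = \A_{\bw,0} + \delta\A$ and comparing the normal equations for $\hat{\bz}$ and for $\alpha\bz_0$, the vanishing residual at $\alpha\bz_0$ kills the term that would otherwise scale like $\kappa^2$, leaving a bound of the shape $\|\hat{\bz} - \alpha\bz_0\| \lesssim \frac{\|\delta\A\|}{\sigma_{\min}(\A_{\bw,0})}\|\alpha\bz_0\|$, where the Neumann/geometric-series factor $\tfrac{1}{1-\kappa\eta}$ appears because one must keep $\sigma_{\min}(\A_{\bw}) \geq \sigma_{\min}(\A_{\bw,0}) - \|\delta\A\| > 0$ so that $\A_{\bw}$ retains full column rank. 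To recast $\|\delta\A\|/\sigma_{\min}$ as $\kappa(\A_{\bw,0})\eta$ with $\eta = 2\|\delta\A\|/\sqrt{mp}$, I would use the elementary lower bound $\sigma_{\max}(\A_{\bw,0}) \geq \sqrt{mp}/2$ — obtained from the columns of the stacked $\BA_l$ block, which have squared norm of order $mp$ in the Hadamard/DFT cases and concentrate there in the Gaussian case — since then $\|\delta\A\|/\sigma_{\min} = \kappa\,\|\delta\A\|/\sigma_{\max} \leq \kappa\eta$.

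The condition number estimate itself reduces to bounding $\sigma_{\max}(\A_{\bw,0})$ from above and $\sigma_{\min}(\A_{\bw,0})$ from below, using the orthogonal splitting $\|\A_{\bw,0}\bz\|^2 = \|\MS_0\bz\|^2 + |\bw^*\bz|^2$. The upper bound follows from a norm bound on $\MS_0$ and the triangle inequality, producing the numerator $mp + \|\bw\|^2$ and the factor $\max\{d_{\max}^2\|\bx\|^2, m\}$ that records the two block types $\diag(\by_{l,0})$ and $\BA_l$. For the lower bound I would decompose any unit $\bz$ as $\beta\,\bz_0/\|\bz_0\| + \bz^{\perp}$ with $\bz^{\perp} \perp \bz_0$; since $\MS_0\bz_0 = \bzero$ we have $\|\MS_0\bz\| = \|\MS_0\bz^{\perp}\| \geq \sigma_{+}(\MS_0)\,\|\bz^{\perp}\|$, where $\sigma_{+}(\MS_0)$ is the smallest \emph{nonzero} singular value of $\MS_0$, while the constraint term controls the remaining direction through $|\bw^*\bz| \gtrsim |\beta|\,\|\bw\|\,|\Corr(\bw,\bz_0)|$. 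Balancing the two contributions yields the denominator $\min\{mp,\|\bw\|^2|\Corr(\bw,\bz_0)|^2\}$ and the factor $\min\{d_{\min}^2\|\bx\|^2, m\}$, and specializing to $\|\bw\| = \sqrt{mp}$ gives the cleaner second displayed bound.

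The heart of the proof, and the step I expect to be hardest, is the high-probability lower bound $\sigma_{+}(\MS_0)^2 \gtrsim mp\,\min\{d_{\min}^2\|\bx_0\|^2, m\}$ for each of the three models; everything in the condition-number estimate rests on this spectral gap of $\MS_0$ restricted to $\bz_0^{\perp}$. I would first compute $\E[\MS_0^*\MS_0]$ and verify that, away from the null direction $\bz_0$, its eigenvalues are bounded below, and then establish concentration of $\MS_0^*\MS_0$ about its mean. The genuine difficulty is that the two blocks $\diag(\by_{l,0}) = \diag(\BD\BA_l\bx_0)$ and $\BA_l$ are both functions of the \emph{same} random matrix $\BA_l$, so the summands of $\MS_0^*\MS_0 = \sum_l(\cdots)$ — independent across $l$ but with a nontrivial mean — involve fourth-order expressions in $\BA_l$ such as $(\BA_l\bx_0)_i\overline{(\BA_l\bx_0)_j}$. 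For the Gaussian model I would apply a matrix Bernstein inequality after truncation to tame the subexponential tails of these terms, giving $p \gtrsim \gamma\max\{1,n/m\}\log^2(m+n)$; for the tall and fat Hadamard/DFT models I would instead invoke Rudelson-type / noncommutative Khintchine bounds adapted to the bounded $\pm1$ modulations, which produce the extra $\log(mp)$ factor and the failure probabilities $(mp)^{-\gamma+1}$. The tall/fat dichotomy changes which block dominates the mean (via $\BA_l^*\BA_l = m\I_n$ versus $\BA_l\BA_l^* = n\I_m$), and this is what flips the sampling requirement from a condition on $p$ to one on $mp$.
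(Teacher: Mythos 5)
Your deterministic skeleton is essentially the paper's: the paper likewise observes that $\alpha\bz_0$ exactly solves the consistent noiseless system, invokes a perturbation bound for consistent overdetermined least squares problems (Proposition~\ref{prop:perturb}, cited from Wei), and obtains $\eta = 2\|\delta\A\|/\sqrt{mp}$ from the lower bound $\|\A_{\bw,0}\|^2 \geq \|\sum_{l=1}^p \BA_l^*\BA_l\| \geq mp/2$. The genuine gap is in your probabilistic core. First, the spectral gap you set as your target, $\sigma_{+}(\MS_0)^2 \gtrsim mp\,\min\{d_{\min}^2\|\bx_0\|^2, m\}$, is false by a factor of $m$: take $\BD = \I_m$ and $\|\bx_0\| = 1$, so your claim reads $\sigma_{+}^2 \gtrsim mp$; but for a standard basis vector $\be_i$ supported in the $\bs$-block (projected off $\bz_0$, which costs only an $O(1/m)$ correction since $|\langle \be_i, \bz_0\rangle|/\|\bz_0\| \approx 1/\sqrt{2m}$) one has $\|\MS_0\be_i\|^2 = \sum_{l=1}^p |(\BA_l\bx_0)_i|^2 \approx p$, hence $\sigma_{+}^2 \lesssim p$. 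The correct statement, which the paper records in Theorem~\ref{thm:svd}(a), is $\sigma_{+}(\MS_0)^2 \geq \frac{p}{2}\min\{d_{\min}^2\|\bx_0\|^2, m\}$.

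Second, and more fundamentally, the route you propose for proving the gap --- compute $\E[\MS_0^*\MS_0]$ and establish concentration of $\MS_0^*\MS_0$ about its mean --- cannot deliver sampling conditions that are independent of $\BD$ and $\bx_0$, as the theorem requires. The two column blocks of $\MS_0$ live on different scales: the $\diag(\by_l)$ columns scale like $d_i\|\bx_0\|$ while the $\BA_l$ columns scale like $\sqrt{m}$. The fluctuation of the raw Gram matrix is therefore governed by the \emph{larger} scale --- already the $(2,2)$ block $\sum_{l=1}^p \BA_l^*\BA_l$ deviates from $pm\I_n$ by order $\sqrt{pmn}$ --- whereas the gap you must preserve is only $p\min\{d_{\min}^2\|\bx_0\|^2, m\}$. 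Even in the benign case $\BD = \I_m$, $\|\bx_0\| = 1$, this forces $p \gtrsim mn$ up to logarithms, vastly worse than the claimed $p \gtrsim \max\{1, n/m\}\log^2(m+n)$; for general $\BD$ it would inject $d_{\max}/d_{\min}$ into the sampling requirement, which the theorem does not contain. The paper avoids exactly this by \emph{whitening before concentrating}: it factors $\A_0^*\A_0 = \BP\bigl(\sum_{l=1}^p \BZ_l\BZ_l^*\bigr)\BP^*$ with $\BP = \diag(\BD^*\|\bx\|, \sqrt{m}\,\I_n)$, proves $\|\sum_{l=1}^p \BZ_l\BZ_l^* - p\BC\| \leq p/2$ for the normalized blocks (Proposition~\ref{prop:main1}; here all blocks have unit scale, so the gap of $p\BC$ is $p$, free of $\BD$ and $\bx_0$), and only then reintroduces $\BP$ multiplicatively via $\sigma_2(\BP\BA\BP^*) \geq \sigma_2(\BA)\,\sigma_1^2(\BP)$ (Lemma~\ref{lem:sm2-svd}); the ratio $\max\{d_{\max}^2\|\bx\|^2, m\}/\min\{d_{\min}^2\|\bx\|^2, m\}$ in $\kappa(\A_{\bw,0})$ enters through this factorization, not through the concentration step. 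Your proposal needs this preconditioning (or an equivalent relative-error concentration argument) to go through; the remaining ingredients you list --- sub-exponential matrix Bernstein for the Gaussian case, bounded matrix Bernstein after truncating $\max_{l,i}|\langle \ba_{l,i},\bv\rangle|$ for the Hadamard/DFT cases, and the tall/fat dichotomy flipping the condition from $p$ to $mp$ --- do match the paper once applied to the whitened sum.
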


\begin{remark}
Our result is nearly optimal in terms of required number of measurements, because the number of constraints $mp$ is required to be slightly greater than $n +m$, the number of unknowns. 
Theorem~\ref{thm:main1} can be regarded as a generalized result of~\cite{CamJ16,CJ16b}, in which $\BD$ is assumed to be positive and $\BA_l$ is Gaussian. In our result, we only need $\BD$ to be an invertible complex diagonal matrix and $\BA_l$ can be a Gaussian or random Fourier type matrix. The approaches are quite different, i.e.,~\cite{CamJ16} essentially uses nonconvex optimization by first constructing a good initial guess and then applying gradient descent to recover $\BD$ and $\bx$. Our result also provides a provable fast alternative algorithm to ``the blind deconvolution via random masks" in~\cite{BR15,TB14} where a SDP-based approach is proposed.
\end{remark}


\subsection{Blind deconvolution via diverse inputs}

We now analyze model~\eqref{eq:model2}. Following similar steps that led us from~\eqref{eq:model1} to~\eqref{eq:linear}, it is easy to see that the linear system associated 
with~\eqref{eq:model2} is given by
\begin{equation}\label{eq:A0-2-ms}
\underbrace{
\begin{bmatrix}
\diag(\by_1) & -\BA_1 & \bzero & \cdots & \bzero \\
\diag(\by_2) & \bzero & -\BA_2 & \cdots & \bzero \\
\vdots & \vdots & \vdots & \ddots &  \vdots \\
\diag(\by_p) & \bzero & \bzero & \cdots &  -\BA_p  \\
 \multicolumn{5}{c}{\bw^*}
\end{bmatrix}
_{(mp+1) \times (np + m)}}_{\A_{\bw}}
\underbrace{
\begin{bmatrix}
\bs \\
\bx_1 \\
\bx_2 \\
\vdots \\
\bx_p
\end{bmatrix}_{(m+n)\times 1}
}_{\bz}
= 
\begin{bmatrix}
\bzero \\
c \\
\end{bmatrix}_{(mp+1)\times 1}.
\end{equation}
\label{s:model2-setup}
We consider two scenarios:
\begin{enumerate}[(a)]
\item  $\BA_l$ is an $m\times n$ complex Gaussian random matrix, i.e., each entry in $\BA_l$ yields $\frac{1}{\sqrt{2}}\mathcal{N}(0, 1) + \frac{\mi}{\sqrt{2}}\mathcal{N}(0, 1)$.
\item $\BA_l$ is of the form
\begin{equation}\label{eq:model2-Al}
\BA_l = \BH_l\BM_l,
\end{equation}
where $\BH_l\in \CC^{m\times n}$ is a random partial Hadamard/Fourier matrix, i.e., the columns of $\BH_l$ are uniformly sampled without replacement from an $m\times m$ DFT/Hadamard matrix; $\BM_l := \diag(\bsm_l) =  \diag(m_{l,1}, \cdots, m_{l,n})$ is a diagonal matrix with $\{m_{l,i}\}_{i=1}^n$ being i.i.d. Bernoulli random variables.
\end{enumerate}

\begin{theorem}\label{thm:main2}
Consider the self-calibration model given in~\eqref{eq:model2}, where $\A_{\bw}$ is as in~\eqref{eq:A0-2-ms}. Let $x_{\min} : = \min_{1\leq l\leq p} \|\bx_l\|$ and $x_{\max} : = \max_{1\leq l\leq p} \|\bx_l\|$. Then, for the solution  $\hat{\bz}$ of~\eqref{eq:obj-ls} and $\alpha = \frac{c}{\bw^*\bz_0}$, there holds
\begin{equation*}
\frac{ \| \hat{\bz} - \alpha\bz_0 \| }{\|\alpha\bz_0\|} \leq \kappa(\A_{\bw, 0})\eta \left( 1 + \frac{2}{1 - \kappa(\A_{\bw, 0})\eta }\right)
\end{equation*}
if $\kappa(\A_{\bw,0})\eta < 1$ where $\eta = \frac{2\|\delta \A\|}{\sqrt{m}}$. The condition number of $\A_{\bw,0}$ obeys
\begin{equation*}
\kappa(\A_{\bw,0}) \leq  \sqrt{\frac{6 x_{\max}^2(m + \|\bw\|^2)}{x^2_{\min}\min\{ m, \|\bw\|^2|\Corr(\bw,\bz_0)|^2 \}}  \frac{ \max\{ pd_{\max}^2, \frac{m}{x^2_{\min}} \} }{ \min\{ pd_{\min}^2, \frac{m}{x^2_{\max}} \}}},
\end{equation*}
and for $\|\bw\| = \sqrt{m}$, there holds
\begin{equation*}
\kappa(\A_{\bw,0}) \leq  \frac{2\sqrt{3} x_{\max}}{x_{\min}|\Corr(\bw,\bz_0)|} \sqrt{  \frac{ \max\{ pd_{\max}^2, \frac{m}{x^2_{\min}} \} }{ \min\{ pd_{\min}^2, \frac{m}{x^2_{\max}} \}}},
\end{equation*}

\begin{enumerate}[(a)]
\item with probability at least $1 - (np +m)^{-\gamma}$ if $\BA_l$ is an $m\times n$ $(m > n)$ complex Gaussian random matrix and 
\begin{equation*}
C_0\left( \frac{1}{p} + \frac{n}{m}\right)(\gamma + 1)\log^2(np + m)  \leq \frac{1}{4}.
\end{equation*}
\item with probability at least $1 - (np + m)^{-\gamma} $ if $\BA_l$ yields~\eqref{eq:model2-Al} and
\begin{equation*}
C_0\left(\frac{1}{p} + \frac{n-1}{m-1}\right)\gamma^3 \log^4(np + m) \leq \frac{1}{4}, \quad m \geq 2.
\end{equation*}

\end{enumerate}

\end{theorem}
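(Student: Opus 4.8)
The plan is to prove Theorem~\ref{thm:main2} in exactly the same two stages used for Theorem~\ref{thm:main1}: first a deterministic least–squares perturbation estimate, which is independent of the particular model, and then a probabilistic bound on $\kappa(\A_{\bw,0})$ that is tailored to the block structure of~\eqref{eq:A0-2-ms}. I would begin by recording the algebra behind the linearization. Writing $\bt=\bd_0\odot\bs$ and letting $\by_l^{0}=\BD\BA_l\bx_l$ denote the noiseless data, the $l$-th block of $\MS_0\bz$ (where $\MS_0$ is $\A_{\bw,0}$ with its last row deleted) is
$\diag(\by_l^{0})\bs-\BA_l\bx_l=\diag(\bt)\BA_l\bx_l-\BA_l\bx_l$.
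Two things follow. First, $\MS_0$ is \emph{linear} in the entries of the $\BA_l$, which is precisely what makes the concentration step tractable. Second, with $\bs_0=\bd_0^{-1}$ (entrywise) one has $\diag(\bt)=\I$ and hence $\bz_0=(\bs_0,\bx_1,\dots,\bx_p)\in\Null(\MS_0)$, so that $\alpha\bz_0$ with $\alpha=c/(\bw^*\bz_0)$ solves the consistent noiseless system $\A_{\bw,0}\bz=\bb$.

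Since the unperturbed system has zero residual, the relative error of $\hat\bz$ is controlled by the standard least–squares perturbation lemma already invoked for Theorem~\ref{thm:main1}, giving $\|\hat\bz-\alpha\bz_0\|/\|\alpha\bz_0\|\le\kappa(\A_{\bw,0})\rho\,(1+2/(1-\kappa(\A_{\bw,0})\rho))$ with $\rho=\|\delta\A\|/\sigma_{\max}(\A_{\bw,0})$. To reach the stated bound it then suffices to show $\sigma_{\max}(\A_{\bw,0})\ge\sqrt{m}/2$, so that $\rho\le\eta=2\|\delta\A\|/\sqrt{m}$ and monotonicity of the right-hand side in $\rho$ finishes this part. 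This lower bound on $\sigma_{\max}$ is obtained by testing $\A_{\bw,0}$ against a vector supported on a single $\bx_l$-block and using $\sigma_{\min}(\BA_l)\gtrsim\sqrt m$, which holds because the sampling hypotheses force $n/m$ to be small.

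The core of the argument is the bound on $\kappa(\A_{\bw,0})=\sigma_{\max}(\A_{\bw,0})/\sigma_{\min}(\A_{\bw,0})$. The upper bound on $\sigma_{\max}$ is routine: applying the triangle inequality across the block columns of~\eqref{eq:A0-2-ms} and estimating $\|\diag(\by_l^{0})\|$ and $\|\BA_l\|$ produces the factors $x_{\max}^2(m+\|\bw\|^2)$ and $\max\{pd_{\max}^2,m/x_{\min}^2\}$. The substantive quantity is the lower bound on $\sigma_{\min}(\A_{\bw,0})$. Here I would split an arbitrary unit vector $\bz$ into its component along $\bz_0$ and its component in $\bz_0^\perp$: the constraint row $\bw^*$ controls the former and contributes the $|\Corr(\bw,\bz_0)|$ factor, while for the latter it remains to prove a lower bound on the smallest \emph{nonzero} singular value of $\MS_0$, namely $\|\MS_0\bz\|\gtrsim\sqrt{\min\{pd_{\min}^2,\,m/x_{\max}^2\}}\,\|\bz\|$ for all $\bz\perp\bz_0$. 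For this I would first compute $\E\,\MS_0^*\MS_0$ using the linearity noted above and the second-moment properties of $\BA_l$ (so that $\E\|\MS_0\bz\|^2=\sum_{l,i}\|t_i\bx_l-\bx_l\|^2$ in the Gaussian normalization), verify that its smallest eigenvalue restricted to $\bz_0^\perp$ has the claimed size, and then establish uniform concentration of $\MS_0^*\MS_0$ about its mean over the sphere in $\bz_0^\perp$.

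The main obstacle is precisely this concentration step for the structured ensemble~\eqref{eq:model2-Al}, where $\BA_l=\BH_l\BM_l$ is a randomly column-subsampled Fourier/Hadamard matrix modulated by random signs. In the Gaussian case (a), $\MS_0$ is a Gaussian matrix series and matrix Bernstein together with a net over $\bz_0^\perp$ applies directly, yielding the $\log^2(np+m)$ loss of the sampling condition. In case (b) the rows are bounded but heavy-tailed, nets are inefficient, and, crucially, the shared block $\diag(\by_l^{0})$ is itself a random, $\BA_l$-dependent diagonal that couples the $\bs$-coordinate to every snapshot. I expect to need a decoupling of the $\diag(\by_l^{0})\bs$ and $\BA_l\bx_l$ contributions, followed by a matrix-Bernstein/chaos-process estimate of Krahmer–Mendelson–Rauhut type, which is what produces the $\gamma^3\log^4(np+m)$ factor. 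Controlling the variance proxy and the uniform row bound sharply enough that the threshold $C_0(1/p+(n-1)/(m-1))\gamma^3\log^4(np+m)\le\tfrac14$ suffices to keep $\MS_0^*\MS_0$ within a constant factor of its mean on $\bz_0^\perp$ is the delicate part of the proof.
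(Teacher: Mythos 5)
Your overall skeleton---the consistent-system perturbation lemma (Proposition~\ref{prop:perturb}), a lower bound on $\|\A_{\bw,0}\|$ to justify $\eta=2\|\delta\A\|/\sqrt m$, and a condition-number estimate---matches the paper's, but the way you propose to obtain the spectral bounds has a genuine gap. The paper never concentrates $\MS_0^*\MS_0$ around its mean. It first factors the noiseless matrix as $\A_0=\BQ\BZ\BP$ in~\eqref{eq:A0-2}, where $\BQ$ and $\BP$ are \emph{deterministic} diagonal matrices carrying all of $\BD$ and the norms $\|\bx_l\|$, and $\BZ$ is built only from the normalized vectors $\bv_l=\bx_l/\|\bx_l\|$; the sole probabilistic statement (Proposition~\ref{prop:main2}) is $\|\BZ^*\BZ-\BC\|\le\tfrac12$ with $\BC=\E(\BZ^*\BZ)$ having eigenvalues $0,1,2$, so that $d_{\max}/d_{\min}$ and $x_{\max}/x_{\min}$ enter only deterministically, through $\kappa(\BP)$. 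Your plan instead asks for uniform concentration of the unnormalized $\MS_0^*\MS_0$ about $\E(\MS_0^*\MS_0)$ on $\bz_0^{\perp}$. That cannot deliver Theorem~\ref{thm:main2} as stated: the smallest nonzero eigenvalue of $\E(\MS_0^*\MS_0)$ is of order $x_{\min}^2\min\{pd_{\min}^2,\,m/x_{\max}^2\}$, while any additive deviation bound for $\MS_0^*\MS_0$ scales with quantities like $d_{\max}^2x_{\max}^2$ and $n$ coming from the row entries $d_i\,\ba_{l,i}^*\bx_l$ and $\ba_{l,i}$; forcing the deviation below this spectral gap produces sampling conditions involving $d_{\max}/d_{\min}$ and $x_{\max}/x_{\min}$, whereas conditions (a) and (b) of the theorem are free of those ratios. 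The missing idea is precisely the $\BQ\BZ\BP$ normalization (equivalently, whitening by $\BP$ before concentrating); without it your route proves a strictly weaker statement.

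The second gap is case (b). You predict that the dependence between $\diag(\by_l^0)$ and $\BA_l$ forces a decoupling step and a chaos-process estimate of Krahmer--Mendelson--Rauhut type. The paper needs neither nets, decoupling, nor chaos bounds: the blocks $\BZ_l$ of $\BZ^*$ are independent across $l$, and the dependence \emph{inside} a block is never broken apart---each $\CZ_l=\BZ_l\BZ_l^*-\E(\BZ_l\BZ_l^*)$ is a single summand in a matrix Bernstein sum. The two ingredients are (i) the scalar Hoeffding bound (Lemma~\ref{lem:rade}) giving $\max_{l}\|\diag(\BA_l\bv_l)\|\le\sqrt{2\gamma\log(mp)}$ with probability $1-2(mp)^{-\gamma+1}$, which after conditioning bounds $R=\max_l\|\CZ_l\|$, and (ii) exact second-moment identities for Rademacher-modulated columns sampled without replacement (Lemma~\ref{lem:LAAL} together with~\eqref{eq:binary-3} and~\eqref{eq:binary-4}), which give the variance proxy $\sigma_0^2\le 6\bigl(\tfrac1p+\tfrac{n-1}{m-1}\bigr)$; the bounded matrix Bernstein inequality (Theorem~\ref{thm:bern1}) then yields exactly the stated $\gamma^3\log^4(np+m)$ threshold. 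Two smaller points: in the Gaussian case an $\epsilon$-net over $\bz_0^{\perp}$ is redundant once Theorem~\ref{BernGaussian} controls the operator norm of $\sum_{l,i}\bigl(\bz_{l,i}\bz_{l,i}^*-\E(\bz_{l,i}\bz_{l,i}^*)\bigr)$; and your split of a unit vector into its $\bz_0$ and $\bz_0^{\perp}$ parts must handle the cross term $\bw^*\bz^{\perp}$, which can cancel the contribution of $\bw$ along $\bz_0$---the paper instead bounds $\lambda_{\min}$ of $x_{\min}^2\BZ^*\BZ+\tbw\tbw^*$ with $\tbw=\BP^{-1}\bw$ in the eigenbasis of $\BZ^*\BZ$, and any fix of your version needs an analogous case analysis.
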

\begin{remark}
Note that if $\delta \A =\bzero,$ i.e., in the noiseless case, we have $\bz = \alpha\bz_0$ if $mp \geq (np + m)\text{\em poly}(\log(np+m)).$ Here $mp$ is the number of constraints and $np+m$ is the degree of freedom. Therefore, our result is nearly optimal in terms of information theoretic limit. Compared with a similar setup in~\cite{ACD15}, we have a more efficient algorithm since~\cite{ACD15} uses nuclear norm minimization to achieve exact recovery. However, the assumptions are slightly different, i.e., we assume that $\BD$ is invertible and hence the result depends on $\BD$ while~\cite{ACD15} imposes ``incoherence" on $\bd$ by requiring $\frac{\|\mathcal{F}(\bd)\|_{\infty}}{\|\bd\|}$ relatively small, where $\mathcal{F}$ denotes Fourier transform. 

\end{remark}

\subsection{Self-calibration from multiple snapshots}

We again follow a by now familiar procedure to derive the linear system associated with~\eqref{eq:model3}, which turns out to be
\begin{equation}\label{eq:Amodel3}
\underbrace{
\begin{bmatrix}
\diag(\by_1) & -\BA & \bzero & \cdots & \bzero \\
\diag(\by_2) & \bzero & -\BA & \cdots & \bzero \\
\vdots & \vdots &\vdots & \ddots &  \vdots \\
\diag(\by_p) & \bzero & \bzero & \cdots &  -\BA  \\
\multicolumn{5}{c}{\bw^*}
\end{bmatrix}
_{(mp+1) \times (m + np)}}_{\A_{\bw}}
\underbrace{
\begin{bmatrix}
\bs \\
\bx_1 \\
\bx_2 \\
\vdots \\
\bx_p
\end{bmatrix}_{(m+np)\times 1}}_{\bz}
= 
\begin{bmatrix}
\bzero \\
c \\
\end{bmatrix}_{(mp+1)\times 1}.
\end{equation}
For this scenario we only consider the case when $\BA$ is a complex Gaussian random matrix. 
\begin{theorem}\label{thm:main3}
Consider the self-calibration model given in~\eqref{eq:model3}, where $\A_{\bw}$ is as in~\eqref{eq:Amodel3} and $\A_{\bw,0}$ corresponds to the noiseless part of $\A_{\bw}.$
Let $x_{\min} : = \min_{1\leq l\leq p} \|\bx_l\|$ and $x_{\max} : = \max_{1\leq l\leq p} \|\bx_l\|$.
Then, for the solution  $\hat{\bz}$ of~\eqref{eq:obj-ls} and $\alpha = \frac{c}{\bw^*\bz_0},$ there holds
\begin{equation*}
\frac{ \| \hat{\bz} - \alpha\bz_0 \| }{\|\alpha\bz_0\|} \leq \kappa(\A_{\bw, 0})\eta \left( 1 + \frac{2}{1 - \kappa(\A_{\bw, 0})\eta }\right)
\end{equation*}
if $\kappa(\A_{\bw,0})\eta < 1$ where $\eta = \frac{2\|\delta\A\|}{\sqrt{m}}$. Here the upper bound of $\kappa(\A_{\bw,0})$ obeys
\begin{equation*}
\kappa(\A_{\bw,0}) \leq  \sqrt{\frac{6 x_{\max}^2(m + \|\bw\|^2)}{x^2_{\min}\min\{ m, \|\bw\|^2|\Corr(\bw,\bz_0)|^2 \}}  \frac{ \max\{ pd_{\max}^2, \frac{m}{x^2_{\min}} \} }{ \min\{ pd_{\min}^2, \frac{m}{x^2_{\max}} \}}}., 
\end{equation*}
and for $\|\bw\| = \sqrt{m}$, there holds
\begin{equation*}
\kappa(\A_{\bw,0}) \leq  \frac{2\sqrt{3} x_{\max}}{x_{\min}|\Corr(\bw,\bz_0)| } \sqrt{  \frac{ \max\{ pd_{\max}^2, \frac{m}{x^2_{\min}} \} }{ \min\{ pd_{\min}^2, \frac{m}{x^2_{\max}} \}}}, 
\end{equation*}

with probability at least $1 - 2m(np + m)^{-\gamma}$ if $\BA$ is a complex Gaussian random matrix and
\begin{equation}\label{eq:cond-mult-2}
C_0\left(\max\left\{\frac{\|\BG\|}{p}, \frac{\|\BG\|_F^2}{p^2} \right\} + \frac{n}{m} \right)\log^2(np + m) \leq \frac{1}{16(\gamma + 1)}
\end{equation}
where $\BG$ is the Gram matrix of $\left\{ \frac{\bx_l}{\|\bx_l\|} \right\}_{l=1}^p.$
In particular, if $\BG = \I_p$ and $\|\BG\|_F = \sqrt{p}$,~\eqref{eq:cond-mult-2} becomes
\begin{equation*}
C_0\left( \frac{1}{p} + \frac{n}{m} \right)\log^2(np + m) \leq \frac{1}{16(\gamma + 1)}.
\end{equation*}

\end{theorem}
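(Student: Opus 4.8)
The plan is to split the argument into a deterministic \emph{stability} step and a probabilistic \emph{conditioning} step, in parallel with Theorems~\ref{thm:main1} and~\ref{thm:main2}. For the first step, note that $\A_{\bw,0}(\alpha\bz_0) = \bb$ is a consistent system with zero residual (since $\MS_0\bz_0 = \bzero$ and $\alpha = c/(\bw^*\bz_0)$), whereas the observed system uses $\A_{\bw} = \A_{\bw,0} + \delta\A$ with $\delta\A$ as in~\eqref{def:deltaA}. A least-squares perturbation estimate (an auxiliary lemma common to all three models, using that $\sigma_{\max}(\A_{\bw,0})\gtrsim\sqrt{m}$ so that $\|\delta\A\|/\sigma_{\min}(\A_{\bw,0}) \le \kappa(\A_{\bw,0})\eta$ with $\eta = 2\|\delta\A\|/\sqrt{m}$) then produces the stated relative-error bound, and the whole theorem reduces to controlling $\kappa(\A_{\bw,0}) = \sigma_{\max}(\A_{\bw,0})/\sigma_{\min}(\A_{\bw,0})$. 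I would handle this through the identity $\A_{\bw,0}^*\A_{\bw,0} = \MS_0^*\MS_0 + \bw\bw^*$: the bound $\sigma_{\max}^2(\A_{\bw,0}) \le \sigma_{\max}^2(\MS_0) + \|\bw\|^2$ is immediate, while a short linear-algebra lemma (splitting a unit vector into its $\bz_0$-component and its orthogonal part, on which $\MS_0^*\MS_0 \succeq \sigma_2^2(\MS_0)$) reduces the lower bound on $\sigma_{\min}(\A_{\bw,0})$ to the second-smallest singular value $\sigma_2(\MS_0)$ together with the alignment $|\Corr(\bw,\bz_0)|$. This is what produces the factor $\min\{m,\|\bw\|^2|\Corr(\bw,\bz_0)|^2\}$ in the denominator.

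Second, I would analyze the random matrix $\MS_0$ via its mean Gram matrix. Writing $\diag(\by_l) = \BD\diag(\BA\bx_l)$ and computing block by block (using $\E[\overline{A_{ik}}A_{ij}] = \delta_{kj}$ for the complex Gaussian $\BA$), one finds
\begin{equation*}
\E[\MS_0^*\MS_0] = \begin{bmatrix} \beta\,\BD^*\BD & -\BF \\ -\BF^* & m\,\I_{np}\end{bmatrix}, \qquad \beta = \sum_{l=1}^p\|\bx_l\|^2,
\end{equation*}
where $\BF$ is rank one with range spanned by $\overline{\bd}$. A direct check shows that $\bz_0 = (\bs_0,\bx_1,\dots,\bx_p)$ with $\bs_0 = \BD^{-1}\bone_m$ lies in the kernel of this expected matrix, matching $\MS_0\bz_0 = \bzero$. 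Estimating the extreme eigenvalues of the resulting coupled calibration/signal block then shows that the nonzero spectrum is controlled by the two scales $\beta d_{\min}^2,\beta d_{\max}^2$ and $m$; after the appropriate normalization of the two blocks this yields the ratio $\max\{pd_{\max}^2, m/x_{\min}^2\}/\min\{pd_{\min}^2, m/x_{\max}^2\}$ in the statement, and the second-smallest eigenvalue $\lambda_2$ of $\E[\MS_0^*\MS_0]$ is bounded below by a constant times $\min\{\beta d_{\min}^2, m\}$ (after confirming that the rank-one coupling does not create an extra small eigenvalue).

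The crux is the concentration step: proving $\|\MS_0^*\MS_0 - \E[\MS_0^*\MS_0]\| \le \tfrac12\lambda_2$ with high probability, for then $\sigma_2^2(\MS_0) \ge \lambda_2 - \|\MS_0^*\MS_0 - \E[\MS_0^*\MS_0]\| \ge \tfrac12\lambda_2$ and the conditioning bound follows. The difficulty, and precisely what distinguishes~\eqref{eq:model3} from~\eqref{eq:model2}, is that the \emph{same} Gaussian $\BA$ appears in every block---both inside $\diag(\by_l) = \BD\diag(\BA\bx_l)$ and in the $-\BA$ blocks of $\BA_{\mathrm{blk}} = \I_p\otimes\BA$---so the blocks are strongly correlated and cannot be treated as independent summands. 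The observation that makes the argument tractable is that, for each fixed sensor index $i$, the variables $\{(\BA\bx_l)_i\}_{l=1}^p$ are jointly Gaussian with covariance equal to the Gram matrix of $\{\bx_l\}$; this is exactly why the normalized Gram matrix $\BG$, and the quantities $\|\BG\|$ and $\|\BG\|_F$, enter the sampling condition~\eqref{eq:cond-mult-2}. I would bound the diagonal deviation $\BB^*\BB - \E[\BB^*\BB]$ (whose $i$-th entry is $|d_i|^2(\sum_l|(\BA\bx_l)_i|^2 - \beta)$) by Hanson--Wright, the block $\BA_{\mathrm{blk}}^*\BA_{\mathrm{blk}} - m\I = \I_p\otimes(\BA^*\BA - m\I)$ by standard Gaussian covariance concentration, and the cross term $\BB^*\BA_{\mathrm{blk}} - \E[\BB^*\BA_{\mathrm{blk}}]$ by a matrix-Bernstein or decoupling/chaining estimate; the $\log^2(np+m)$ factor and a union bound over the $m$ sensor events give the probability $1 - 2m(np+m)^{-\gamma}$. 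The main obstacle is carrying out this last concentration with the correct, nearly optimal dependence on $\BG$ while respecting the coupling between the calibration and signal blocks; the remaining work is bookkeeping that assembles $\sigma_{\max}(\MS_0)$, $\sigma_2(\MS_0)$, and $|\Corr(\bw,\bz_0)|$ into the final condition-number estimate.
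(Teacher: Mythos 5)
Your architecture largely mirrors the paper's: the deterministic step via perturbation of consistent least squares (Proposition~\ref{prop:perturb}), the reduction of everything to $\kappa(\A_{\bw,0})$ through $\A_{\bw,0}^*\A_{\bw,0} = \MS_0^*\MS_0 + \bw\bw^*$, and--crucially--the observation that grouping entries by sensor index $i$ restores independence, with the Gram matrix $\BG$ entering through the joint Gaussian law of $\{(\BA\bx_l)_i\}_{l=1}^p$. That is precisely how Proposition~\ref{prop:main3} is organized: the paper writes $\BZ^*\BZ = \sum_{i=1}^m \BZ_i\BZ_i^*$ with $\BZ_i$ depending only on the $i$-th row of $\BA$, splits off the diagonal part, and applies scalar and matrix Bernstein inequalities. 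The gap lies in \emph{where} you place the concentration step.

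You propose to prove $\|\MS_0^*\MS_0 - \E[\MS_0^*\MS_0]\| \le \frac{1}{2}\lambda_2$ under~\eqref{eq:cond-mult-2}, i.e.\ concentration of the \emph{unnormalized} Gram matrix. This step is false in general, because~\eqref{eq:cond-mult-2} carries no information about $d_{\max}/d_{\min}$ (or the spread of the $\|\bx_l\|$), while the deviation of $\MS_0^*\MS_0$ does. Concretely, write $\beta = \sum_{l}\|\bx_l\|^2$. For any vector of the form $(\bzero,\bu)$ with $\bu\in\CC^{np}$ orthogonal to the concatenation of $\bx_1,\dots,\bx_p$, your expected matrix acts as $m\I$, and such vectors are orthogonal to the kernel vector $\bz_0$; since $\E[\MS_0^*\MS_0]\bz_0=\bzero$, the min--max principle applied to the span of $\bz_0$ and such a $(\bzero,\bu)$ gives $\lambda_2 \le m$, \emph{regardless of} $\bd$. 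On the other hand, the $i$-th diagonal entry of the deviation of the calibration block is $|d_i|^2\bigl(\sum_{l}|(\BA\bx_l)_i|^2 - \beta\bigr)$, whose fluctuation for the sensor with $|d_i| = d_{\max}$ is of order $d_{\max}^2\, x_{\min}^2\sqrt{p}$ with probability bounded away from zero, and a diagonal entry lower-bounds the operator norm of a Hermitian matrix. So fix $m,n,p$ and orthonormal $\bv_l$ so that~\eqref{eq:cond-mult-2} holds, then let $d_{\max}\to\infty$: your claimed inequality fails while all hypotheses of the theorem remain in force. The remedy is exactly the paper's normalization: factor $\A_0 = \BQ\BZ\BP$ as in~\eqref{eq:A0-2} with the diagonal matrices $\BQ,\BP$ absorbing $\|\bx_l\|$, $\sqrt{p}\BD$ and $\sqrt{m}$, prove the scale-free bound $\|\BZ^*\BZ - \BC\| \le \frac{1}{2}$ (Proposition~\ref{prop:main3}), and only then transfer the spectral gap and condition number to $\MS_0$ and $\A_{\bw,0}$ by deterministic arguments (the proof of Theorem~\ref{thm:main2} together with Lemma~\ref{lem:sm2-svd}), so that $d_{\min},d_{\max},x_{\min},x_{\max}$ enter only through $\kappa(\BP)$ and $\kappa(\BQ)$ and never through the probabilistic estimate. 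Your Hanson--Wright/matrix-Bernstein program is sound if run on the normalized $\BZ$; as written, the inequality it rests on does not hold.
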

\begin{remark}
When $\|\delta \A\| = 0$, Theorem~\ref{thm:main3} says that the solution to~\eqref{eq:model3} is uniquely determined up to a scalar if $mp = \mathcal{O}( \max\{np +m\|\BG\|, \sqrt{mnp^2+m^2\|\BG\|_F^2}\})$, which involves the norm of the Gram matrix $\BG$. This makes sense if we consider two extreme cases: if $\{\bv_l\}_{l=1}^p$ are all identical, then we have $\|\BG\| = p$ and $\|\BG\|_F = p$ and if the $\{\bv_l\}_{l=1}^p$ are mutually orthogonal, then $\BG  =\I_p$. 
\end{remark}
\begin{remark}
Balzano and Nowak~\cite{BN07} show exact recovery of this model when $\BA$ is a deterministic DFT matrix (discrete Fourier matrix) and $\{\bx_l\}_{l=1}^p$ are generic signals drawn from a probability distribution, but their results  do not include stability theory in the presence of noise. 
\end{remark}
\begin{remark}
For Theorem~\ref{thm:main2} and Theorem~\ref{thm:main3} it does not come as a surprise that the error bound depends on the norm of $\{\bx_l\}_{l=1}^p$ and $\BD$ as well as on how much $\bz_0$ and $\bw$ are correlated. We cannot expect a relatively good condition number for $\A_{\bw}$ if $\|\bx_l\|$ varies greatly over $1\leq l\leq p$. Concerning the correlation between $\bz_0$ and $\bw$, one extreme case is $\lag \bz_0, \bw\rag = 0$, which does not rule out the possibility of $\bz= \bzero$. Hence, the quantity $\lag \bz_0, \bw\rag$ affects the condition number. 
\end{remark}

\subsection{Theoretical results for the spectral method}
\label{ss:svd-thm}
Let $\MS$ be the matrix $\A_{\bw}$ excluding the last row and consider $\MS =\MS_0 + \delta\MS$ where $\MS_0$ is the noiseless part of $\MS$ and $\delta\MS$ is the noise term. The performance under noise depends on the second smallest singular value of $\MS_0$ and the noise strength $\|\delta \MS\|$.

\begin{theorem}\label{thm:svd}
Denote $\hat{\bz}$ as the solution to~\eqref{prog:svdmin}, i.e., the right singular vector of $\MS$ w.r.t. the smallest singular value and $\|\hat{\bz}\| = 1.$ 
Then there holds,
\begin{equation*}
\min_{\alpha_0\in\CC} \frac{\|\alpha_0 \hat{\bz} - \bz_0\|}{\|\bz_0\|} = \left\|\frac{(\I - \hat{\bz}\hat{\bz}^*)\bz_0}{\|\bz_0\|}\right\| \leq \frac{\|\delta \MS\|}{[ \sigma_2(\MS_0) - \|\delta\MS\| ]_+}
\end{equation*}
where $\sigma_2(\MS_0)$ is the second smallest singular value of $\MS_0$, $\bz_0$ satisfies $\MS_0\bz_0 = 0$, and $\hat{\bz}$ is the right singular vector with respect to the smallest singular value of $\MS$, i.e., the solution to~\eqref{prog:svdmin}.

Moreover,  the lower bound of $\sigma_2(\MS_0)$ satisfies
\begin{enumerate}[(a)]
\item $\sigma_2(\MS_0) \geq \sqrt{\frac{p}{2}} \min\{ \sqrt{m}, d_{\min}\|\bx\| \}$ for model~\eqref{eq:model1} under the assumption of Theorem~\ref{thm:main1};
\item $\sigma_2(\MS_0) \geq \frac{1}{\sqrt{2}}x_{\min} \min\left\{\sqrt{p}d_{\min}, \frac{\sqrt{m}}{x_{\max}}\right\}$ for model~\eqref{eq:model2} under the assumption of Theorem~\ref{thm:main2};
\item $\sigma_2(\MS_0) \geq \frac{1}{\sqrt{2}}x_{\min} \min\left\{\sqrt{p}d_{\min}, \frac{\sqrt{m}}{x_{\max}}\right\}$ for model~\eqref{eq:model3} under the assumption of Theorem~\ref{thm:main3}.
\end{enumerate}
\end{theorem}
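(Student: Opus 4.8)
The statement has two independent halves: a deterministic perturbation bound tying $\hat{\bz}$ to $\bz_0$ through $\sigma_2(\MS_0)$ and $\|\delta\MS\|$, and three model-specific lower bounds on $\sigma_2(\MS_0)$. The plan is to dispatch the first half with a short clean perturbation argument and to obtain the second half by extracting an estimate already proved inside Theorems~\ref{thm:main1}--\ref{thm:main3}.

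For the perturbation bound, the claimed equality is just the elementary least-squares fact that, since $\|\hat{\bz}\|=1$, the optimal scalar is $\alpha_0=\hat{\bz}^*\bz_0$, whose residual is the orthogonal projection $(\I-\hat{\bz}\hat{\bz}^*)\bz_0$; dividing by $\|\bz_0\|$ gives $\|(\I-\hat{\bz}\hat{\bz}^*)\bu_0\|$ with $\bu_0:=\bz_0/\|\bz_0\|$. For the inequality, the clean route is to expand the \emph{unit null vector} $\bu_0$ in the orthonormal basis of right singular vectors of the \emph{perturbed} matrix $\MS$, rather than expanding $\hat{\bz}$ in the singular basis of $\MS_0$ (the latter only yields a lossy constant). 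Writing $\bu_0=\sum_i c_i\hat{\bz}_i$ with $\hat{\bz}_1=\hat{\bz}$ the smallest right singular vector, one has $\|\MS\bu_0\|^2=\sum_i \sigma_i(\MS)^2|c_i|^2\ge \sigma_2(\MS)^2\sum_{i\ge 2}|c_i|^2=\sigma_2(\MS)^2\,\|(\I-\hat{\bz}\hat{\bz}^*)\bu_0\|^2$, since the only dropped term sits at the smallest singular value. On the other hand $\MS_0\bz_0=0$ forces $\|\MS\bu_0\|=\|\delta\MS\,\bu_0\|\le\|\delta\MS\|$. Combining yields $\|(\I-\hat{\bz}\hat{\bz}^*)\bu_0\|\le \|\delta\MS\|/\sigma_2(\MS)$, and Weyl's inequality for singular values, $\sigma_2(\MS)\ge \sigma_2(\MS_0)-\|\delta\MS\|$, gives the stated estimate; the $[\cdot]_+$ merely records that the bound is vacuous once $\|\delta\MS\|\ge\sigma_2(\MS_0)$. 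This half is entirely deterministic and essentially immediate.

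For the lower bounds on $\sigma_2(\MS_0)$, I would first note that under the identifiability guaranteed in Theorems~\ref{thm:main1}--\ref{thm:main3} the null space of $\MS_0$ is exactly the line $\mathrm{span}(\bz_0)$, so $\sigma_2(\MS_0)=\min_{\bz\perp\bz_0,\,\|\bz\|=1}\|\MS_0\bz\|$ is the modulus of injectivity of $\MS_0$ on the codimension-one subspace $\bz_0^{\perp}$. This restricted lower bound is precisely the quantity already controlled in the proofs of the main theorems: there, to bound $\sigma_{\min}(\A_{\bw,0})$ and hence $\kappa(\A_{\bw,0})$, one splits $\|\A_{\bw,0}\bz\|^2=\|\MS_0\bz\|^2+|\bw^*\bz|^2$, handles the component of $\bz$ orthogonal to $\bz_0$ via a lower bound on $\|\MS_0\bz\|^2$ and the component along $\bz_0$ via $|\bw^*\bz_0|$ (the origin of the $\Corr(\bw,\bz_0)$ factor). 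Taking square roots of that orthogonal-subspace estimate gives exactly the three displayed bounds, and the structure is transparent: for model~\eqref{eq:model1}, splitting $\bz=(\bs,\bx)$ and using $\BA_l^*\BA_l=m\I_n$ (or its Gaussian analogue) produces the $\sqrt{m}$ term while the $\diag(\by_l)$ blocks produce the $d_{\min}\|\bx\|$ term, and summing the $p$ i.i.d.\ blocks produces the $\sqrt{p/2}$ prefactor; models~\eqref{eq:model2} and~\eqref{eq:model3} are analogous with $x_{\min},x_{\max}$ replacing $\|\bx\|$.

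The main obstacle lies entirely in this second half: establishing restricted invertibility of the structured random matrix $\MS_0$ on $\bz_0^{\perp}$ with the correct constants is the technical heart of Theorems~\ref{thm:main1}--\ref{thm:main3} and rests on the random-matrix concentration used there, not on anything new. Given those estimates, the only genuinely new ingredient for Theorem~\ref{thm:svd} is the short deterministic argument of the first half. The two regimes in each minimum (e.g.\ $\sqrt{m}$ versus $d_{\min}\|\bx\|$, and $\sqrt{p}\,d_{\min}$ versus $\sqrt{m}/x_{\max}$) arise from balancing the two blocks of $\MS_0$, and the only point requiring care is to verify that imposing the orthogonality constraint $\bz\perp\bz_0$ does not degrade the constants obtained en route to the main theorems.
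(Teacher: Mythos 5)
Your first half is correct and complete, and it is essentially the paper's own argument (Lemma~\ref{lem:svd2}) in a cleaner presentation: the paper reaches the same intermediate bound $\|\delta\MS\|/\sigma_2(\MS)$ via the identity $\I - \hat{\bz}\hat{\bz}^* = \BB_1^*(\BB_1^*)^{\dagger}$ and pseudo-inverse manipulations, whereas you expand $\bz_0/\|\bz_0\|$ in the right singular basis of the perturbed matrix $\MS$; both arguments then finish with Weyl's inequality $\sigma_2(\MS)\geq \sigma_2(\MS_0)-\|\delta\MS\|$.

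The genuine gap is in the second half. The quantity that Propositions~\ref{prop:main1},~\ref{prop:main2} and~\ref{prop:main3} actually control is the spectral gap of the \emph{normalized} matrices, namely $\lambda_2\bigl(\sum_{l}\BZ_l\BZ_l^*\bigr)\geq p/2$, respectively $\lambda_2(\BZ^*\BZ)\geq 1/2$, whose null vector is the balanced unit vector $\bu_1$. This is \emph{not} the restricted injectivity constant $\min_{\bz\perp\bz_0,\,\|\bz\|=1}\|\MS_0\bz\|$ that you need: since $\MS_0^*\MS_0 = \BP\bigl(\sum_l\BZ_l\BZ_l^*\bigr)\BP^*$ in model~\eqref{eq:model1} and $\MS_0=\BQ\BZ\BP$ in models~\eqref{eq:model2} and~\eqref{eq:model3}, the null direction rotates under the congruence --- $\bz_0$ is proportional to the image of $\bu_1$ under the \emph{inverse} of the weighting matrix $\BP$, not to $\bu_1$ itself --- so $\bz\perp\bz_0$ does not yield $\BP^*\bz\perp\bu_1$, and the spectral gap of the $\BZ$-matrices does not transfer verbatim with the factor $\sigma_1^2(\BP)$. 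This transfer is exactly the step you flag as ``the only point requiring care'' and then leave unproved, yet it is the single nontrivial ingredient of the second half beyond the Propositions. The paper supplies it as Lemma~\ref{lem:sm2-svd}: for invertible $\BP$ and positive semi-definite $\BA$ with one-dimensional null space, $\sigma_2(\BP\BA\BP^*)\geq \sigma_2(\BA)\,\sigma_1^2(\BP)$ (an Ostrowski-type congruence bound, obtained from $\|(\BP\BA\BP^*)^{\dagger}\|\leq\|\BA^{\dagger}\|\,\|\BP^{-1}\|^2$), applied once with $\BP$ for model~\eqref{eq:model1} and twice, with $\BQ$ and $\BP$, for models~\eqref{eq:model2} and~\eqref{eq:model3}. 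The gap is fillable --- one can check directly that for a unit vector $\bz\perp\bz_0$ the component of $\BP^*\bz$ orthogonal to $\bu_1$ has norm at least $\sigma_1(\BP)$, which reproduces the lemma --- but as written, your derivation of the three displayed lower bounds on $\sigma_2(\MS_0)$ is incomplete precisely at this point.
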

\begin{remark}
Note that finding $z$ which minimizes~\eqref{prog:svdmin} is equivalent to finding the eigenvector with respect to the smallest eigenvalue of $\MS^*\MS$. If we  have a good approximate upper bound $\lambda$ of $\|\MS^*\MS\|$, then it suffices to find the leading eigenvector of $\lambda\I - \MS^*\MS$, which can be done efficiently by using power iteration.


How to choose $\lambda$ properly in practice? We do not want to choose $\lambda$ too large since this would imply slow convergence of the power iteration.
For each case, it is easy to get a good upper bound of $\|\MS\|^2$ based on the measurements and sensing matrices, 
\begin{enumerate}[(a)]
\item for~\eqref{eq:model1}, $\lambda = \|\sum_{l=1}^p\diag(\by_l)\diag(\by_l^*)\| + \|\sum_{l=1}^p\BA_l^* \BA_l\|$; 
\item for~\eqref{eq:model2}, $\lambda = \|\diag(\by_l)\diag(\by_l^*)\| + \max_{1\leq l\leq p}\|\BA_l\|^2$; 
\item for~\eqref{eq:model3}, $\lambda = \|\diag(\by_l)\diag(\by_l^*)\| + \|\BA\|^2.$
\end{enumerate}
Those choices of $\lambda$ are used in our numerical simulations. 
\end{remark}

\section{Numerical simulations}\label{s:numerics}
This section is devoted to numerical simulations. Four experiments for both synthetic and real data will be presented to address the effectiveness, efficiency and robustness of the proposed approach. 

For all three models presented~\eqref{eq:model1},~\eqref{eq:model2} and~\eqref{eq:model3}, the corresponding linear systems have simple block structures which allow for fast implementation via the conjugate gradient method for non-hermitian matrices~\cite{saad2003iterative}. In our simulations, we do not need to set up $\A_{\bw}$ explicitly to carry out the matrix-vector multiplications arising in the conjugate gradient method. Moreover, applying preconditioning via rescaling all the columns to be of similar norms can give rise to an even faster convergence rate.  Therefore, we are able to deal with medium- or large-scale problems from image processing in a computationally efficient manner. 

In our simulations
the iteration stops if either the number of iterations reaches at most $2000$ or the residual of the corresponding normal equation is smaller than $10^{-8}$. 
Throughout our discussion, the $\SNR$ (signal-to-noise ratio) in the scale of dB is defined as
\begin{equation*}
\SNR : = 10\log_{10}\left( \frac{\sum_{l=1}^p \|\by_l\|^2 }{ \sum_{l=1}^p \|\beps_l\|^2 }\right).
\end{equation*}
We measure the performance with $\text{RelError (in dB)} := 20\log_{10} \text{RelError}$ where 
\begin{equation*}
\text{RelError} = \max\left\{ \min_{\alpha_1\in\CC}\frac{\|\alpha_1 \hat{\bx} - \bx_0\|}{\|\bx_0\|},\min_{\alpha_2\in\CC}\frac{\|\alpha_2 \hat{\bd} - \bd_0\|}{\|\bd_0\|}\right\}.
\end{equation*}
Here $\bd_0$ and $\bx_0$ are the ground truth. Although RelError does not match the error bound in our theoretic analysis, certain equivalent relations hold if one assumes all $|d_{i,0}|$ are bounded away from 0 because there holds $\hat{\bs}\approx \bs_0$ if and only if $\hat{\bd}\approx \bd_0.$

For the  imaging examples, we only measure the relative error with respect to the recovered image $\hat{\bx}$, i.e., $\min_{\alpha_1\in\CC}\frac{\|\alpha_1 \hat{\bx} - \bx_0\|}{\|\bx_0\|}.$

\subsection{Self-calibration from repeated measurements}
Suppose we have a target image $\bx$ and try to estimate $\bx$ through multiple measurements. However, the sensing process is not perfect because of the missing calibration of the sensors. In order to estimate both the unknown gains and phases as well as the target signal, a randomized sensing procedure is used by employing several random binary masks.

We assume that $\by_l = \BD\BH\BM_l\bx_0 + \beps_l$ where $\BH$ is a ``tall" low-frequency DFT matrix, $\BM_l$ is a diagonal  $\pm 1$-random matrix and $\bx_0$ is an image of size $512\times 512.$ We set $m = 1024^2,n = 512^2, p = 8$ and  $\BD = \diag(\bd)\in 1024^2 \times 1024^2$ with $\bd\in\CC^{1024^2}$; the oversampling ratio is $\frac{pm}{m + n} = 6.4$.  We compare two cases: (i) $\bd$ is a sequence distributed uniformly over $[0.5,1.5]$ with $\bw = \bone_{m + n}$,  and (ii) $\bd$ is a Steinhaus sequence (uniformly distributed over the complex unit circle) with $\bw =\begin{bmatrix} \bzero_m \\ \bone_n\end{bmatrix}$. We pick those choices of $\bw$ because we know that the  image we try to reconstruct has only non-negative values. Thus, by choosing $\bw$ to be non-negative, there are fewer cancellation in the expression $\bw^*\bz_0$, which in turn leads to a smaller condition number and better robustness. 
The corresponding results of our simulations are shown in Figure~\ref{fig:pos-repeat-1} and Figure~\ref{fig:pos-repeat-2}, respectively. In both cases, we only measure the relative error of the recovered image. 

\begin{figure}[h!]
\begin{minipage}{1\textwidth}
\centering
\includegraphics[width = 120mm]{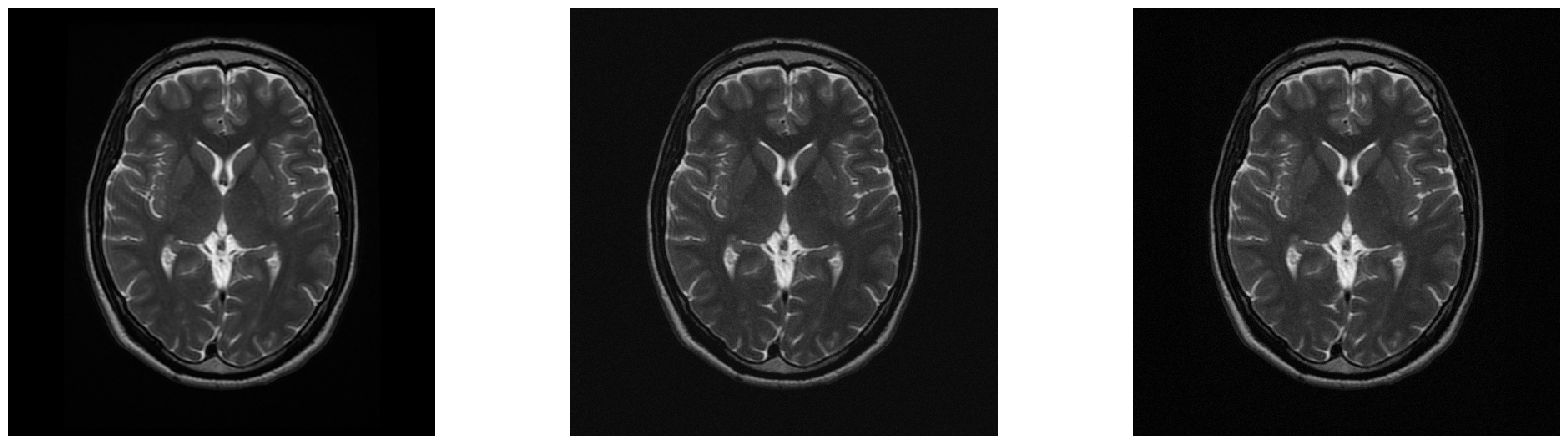}
\caption{Here $m = 1024^2$, $n = 512^2$, $p = 8$, SNR=$5$dB, $\BD = \diag(\bd)$ where $d_i$ is uniformly distributed over $[0.5,1.5]$. Left: original image; Middle: uncalibrated image, RelError $= -13.85$dB; Right: calibrated image, RelError = $-20.23$dB}
\label{fig:pos-repeat-1}
\end{minipage}
\vfill
\vskip0.5cm
\begin{minipage}{1\textwidth}
\centering
\includegraphics[width = 120mm]{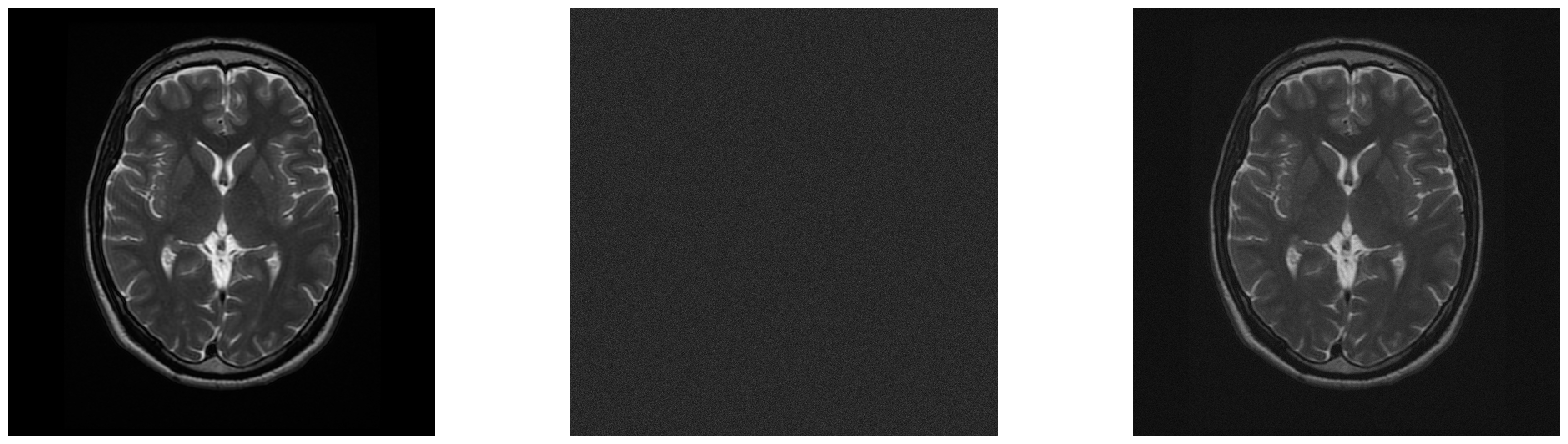}
\caption{Here $m = 1024^2$, $n = 512^2$, $p = 8$, SNR=$5$dB, $\BD = \diag(\bd)$ where $\bd$ is a Steinhauss sequence. Left: original image; Middle: uncalibrated image; Right: calibrated image, RelError = $-10.02$dB}
\label{fig:pos-repeat-2}
\end{minipage}
\end{figure}

In Figure~\ref{fig:pos-repeat-1}, we can see that both, the uncalibrated and the calibrated image are quite good. Here the uncalibrated image is obtained by first applying the inverse Fourier transform and the inverse of the mask to each $\by_i$ and then taking the average of $p$ samples. We explain briefly why the uncalibrated image still looks good. Note that
\begin{equation*}
\hat{x}_{uncali} = \frac{1}{p}\sum_{l=1}^p\BM_l^{-1}\BH^{\dagger }( \BD\BH\BM_l\bx_0 ) = \bx_0 + \frac{1}{p}\sum_{l=1}^p\BM_l^{-1}\BH^{\dagger  } (\BD -\I) \BH\BM_l\bx_0
\end{equation*}
where $\BH^{\dagger} =\frac{1}{m}\BH^*$ is the pseudo inverse of $\BH.$
Here $\BD - \I$ is actually a diagonal matrix with random  entries $\pm \frac{1}{2}$. As a result, each
$\BM_l^{-1}\BH^{\dagger  } (\BD -\I) \BH\BM_l\bx_0 $ is the sum of $m$ rank-1 matrices with random $\pm\frac{1}{2}$ coefficients and is relatively small due to many cancellations. Moreover,~\cite{CLO84} showed that most 2-D signals can be reconstructed within a scale factor from only knowing the phase of its Fourier transform, which applies to the case when $\bd$ is positive. 

However, when the unknown calibration parameters are complex variables (i.e., we do not know much about the phase information), Figure~\ref{fig:pos-repeat-2} shows that the uncalibrated recovered image is totally meaningless. Our approach still gives a quite satisfactory result even at a relatively low SNR of  5dB. 

\subsection{Blind deconvolution in random mask imaging}
The second experiment is about blind deconvolution in random mask imaging~\cite{BR15,TB14}. Suppose we observe the convolution of two components,
\begin{equation*}
\by_l = \bh \ast \BM_l \bx_0 + \beps_l, \quad 1\leq l\leq p
\end{equation*}
where both, the filter $\bh$ and the signal of interests $\bx_0$ are unknown. Each $\BM_l$ is a random $\pm 1$-mask. The blind deconvolution problem is to recover $(\bh, \bx_0)$. Moreover, here we assume that the filter is actually a low-pass filter, i.e., $\mathcal{F}(\bh)$ is compactly supported in an interval around the origin, where $\mathcal{F}$ is the Fourier transform. After taking the Fourier transform on both sides, the model actually ends up being of the form~\eqref{eq:model1} with $\BA_l = \BH\BM_l$ where $\BH$ is a ``fat" partial DFT matrix and $\bd$ is the nonzero part of $\mathcal{F}(\bh)$. In our experiment, we let $\bx_0$ be a $128\times 128$ image and $\bd = \mathcal{F}(\bh)$ be a 2-D Gaussian filter of size $45\times 45$ as shown in Figure~\ref{fig:random-mask-1}. 

\begin{figure}[h!]
\begin{minipage}{1\textwidth}
\centering
\includegraphics[width = 80mm]{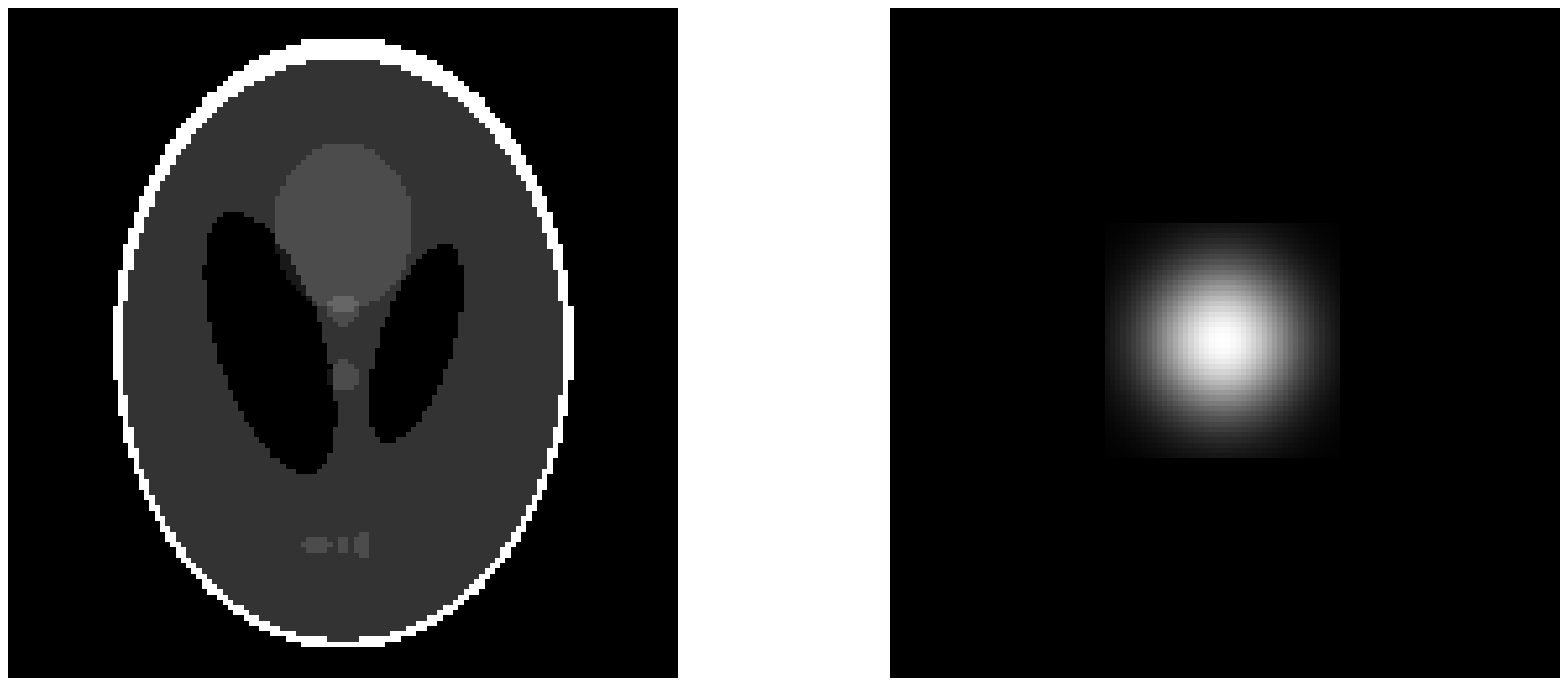}
\caption{Left: Original image; Right: Gaussian filter in Fourier domain. The support of the filter is $45\times 45$ and hence $m = 45^2 = 2025, n = 128^2, p =32.$}
\label{fig:random-mask-1}
\end{minipage}
\vfill
\vskip0.5cm
\begin{minipage}{1\textwidth}
\centering
\includegraphics[width = 80mm]{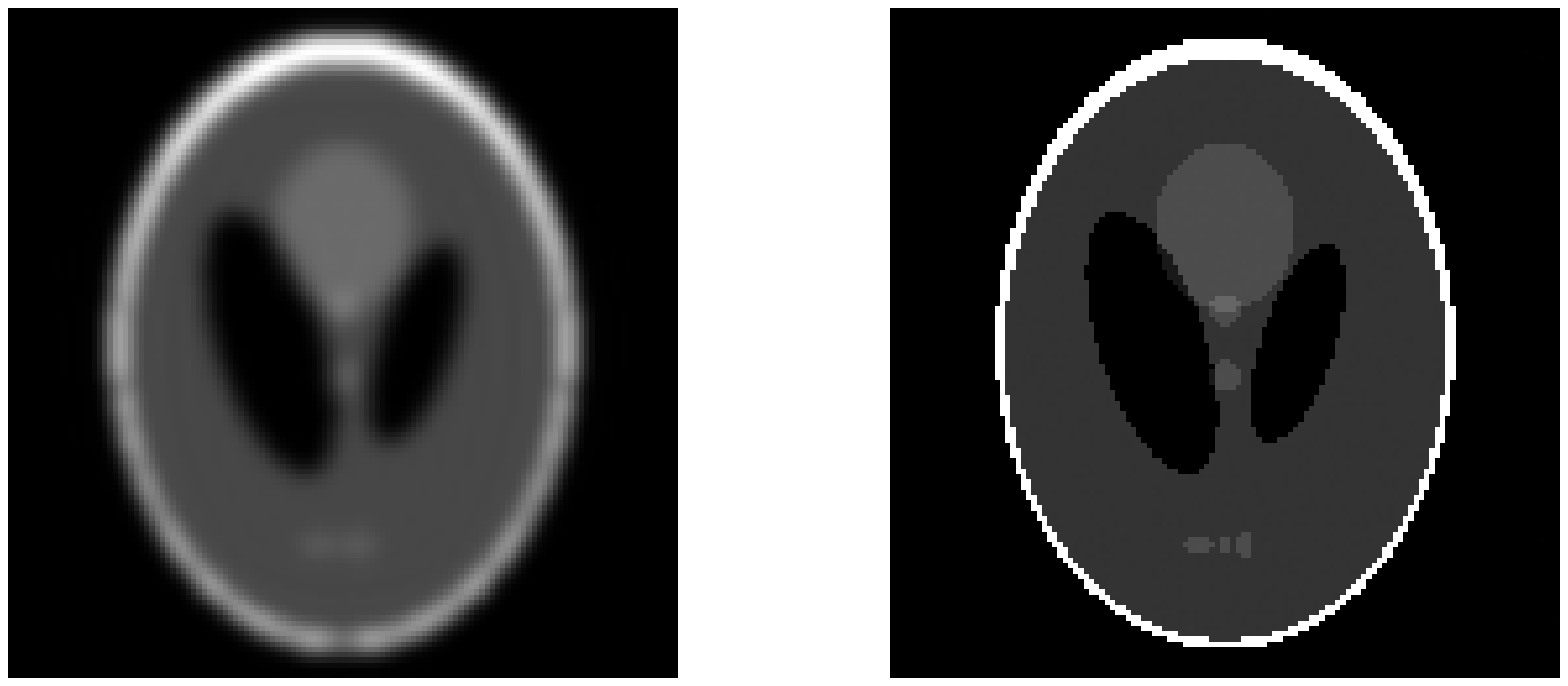}
\caption{Left: Blurred image without noise, Right: Recovered image, RelError = -45.47dB}
\label{fig:random-mask-2}
\end{minipage}
\vfill
\vskip0.5cm
\begin{minipage}{1\textwidth}
\centering
\includegraphics[width = 80mm]{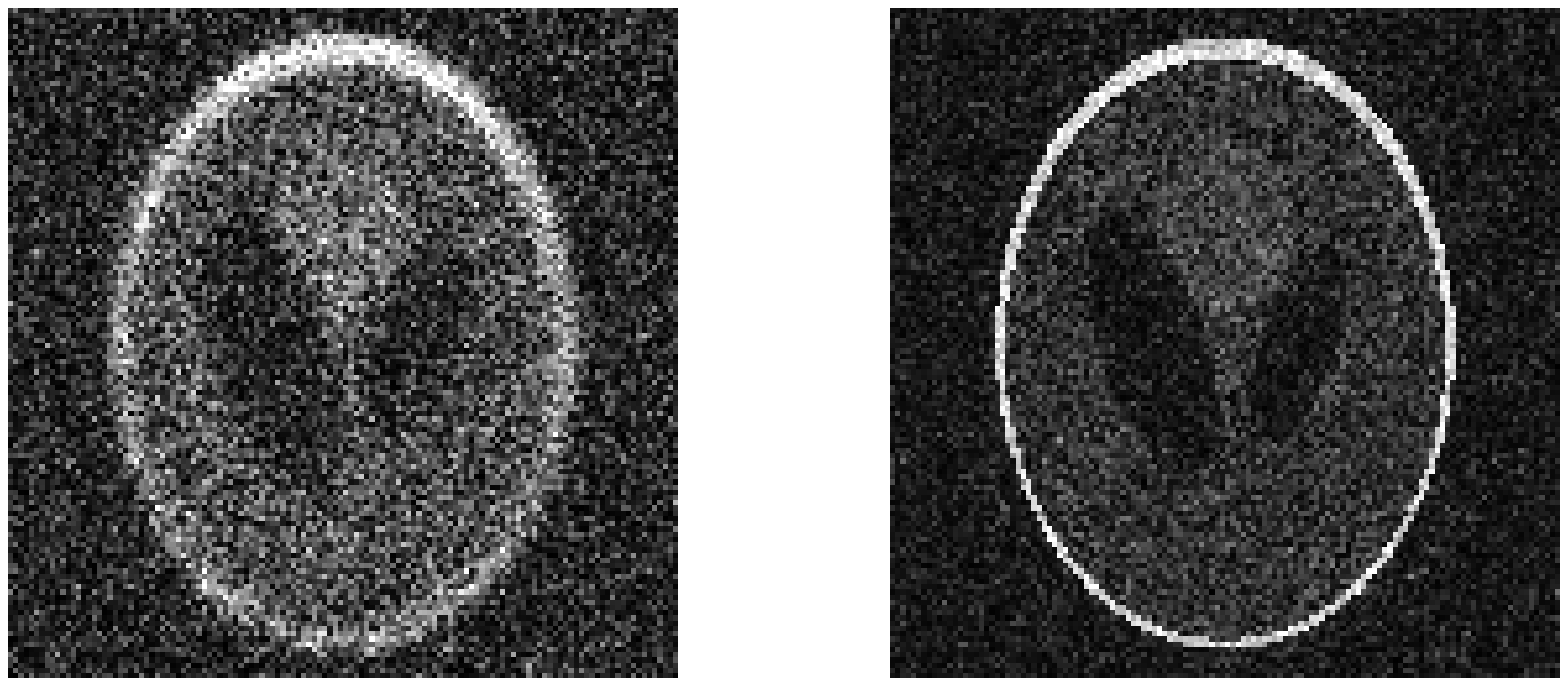}
\caption{Left: Blurred image with SNR = $5$dB,  Right: Recovered image, RelError = -5.84dB}
\label{fig:random-mask-3}
\end{minipage}
\end{figure}

In those experiments, we choose $\bw = \bone_{m+n}$ since both $\bd$ and $\bx_0$ are nonnegative. Figure~\ref{fig:random-mask-2} shows the recovered image from $p=32$ sets of noiseless measurements and the performance is quite satisfactory. Here the oversampling ratio is $\frac{pm }{m + n } \approx 3.52.$ We can see from Figure~\ref{fig:random-mask-3} that the blurring effect has been removed while the noise still exists. That is partially because we did not impose any denoising procedure after the deconvolution. A natural way to improve this reconstruction further would be to combine the blind deconvolution method with a total variation minimization denoising step.

\subsection{Blind deconvolution via diverse inputs}
\label{s:numerics-diverse}
We choose $\BA_l$ to be random Hadamard matrices with $m = 256$ and $n = 64$ and $\BD = \diag(\bd_0)$ with $\bd_0$ being a positive/Steinhaus sequence, as we did previously. Each $\bx_l$ is sampled from standard Gaussian distribution.  
We choose $\bw = \begin{bmatrix}\bone_m \\ \bzero_{np\times 1}\end{bmatrix}$ if $\bd_0$ is uniformly distributed over $[0.5, 1.5]$ and $\bw = \begin{bmatrix}\sqrt{m}\be_1 \\ \bzero_{np\times 1}\end{bmatrix}$ for Steinhaus $\bd_0.$ 10 simulations are performed for each level of $\SNR$. The test is also given under different choices of $p$. The oversampling ratio $\frac{pm}{pn + m}$ is $2$, $2.67$  and  $3$ for $p = 4,8,12$ respectively. From Figure~\ref{fig:diverse}, we can see that the error scales linearly with SNR in dB. The performance of Steinhaus $\bd_0$ is not as good as that of positive $\bd_0$ for SNR $\leq 10$ when we use the least squares method. That is because $\bw^*\bz_0$ is quite small when $\bd_0$ is complex and $\bw = \begin{bmatrix}\sqrt{m}\be_1 \\ \bzero_{np\times 1}\end{bmatrix}$. Note that the error between $\hat{\bz}$ and $\bz_0$ is bounded by $\kappa(\A_{\bw,0})\eta \left( 1 + \frac{2}{1 - \kappa(\A_{\bw,0})\eta }\right)$.
Therefore, the error bound does not depend on $\|\delta \A\|$ linearly if $\kappa(\A_{\bw,0})\eta$ is close to 1. This may explain the nonlinear behavior of the relative error in the low SNR regime.

We also apply spectral method to this model with complex gains $\bd_0$ and $\BA_l$ chosen as either a random Hadamard matrix or a Gaussian matrix. Compared to the linear least squares approach, the spectral method is much more robust to noise, as suggested in Figure~\ref{fig:diverse2}, especially in the low SNR regime.
\begin{figure}[h!]
\begin{minipage}{0.48\textwidth}
\includegraphics[width=66mm]{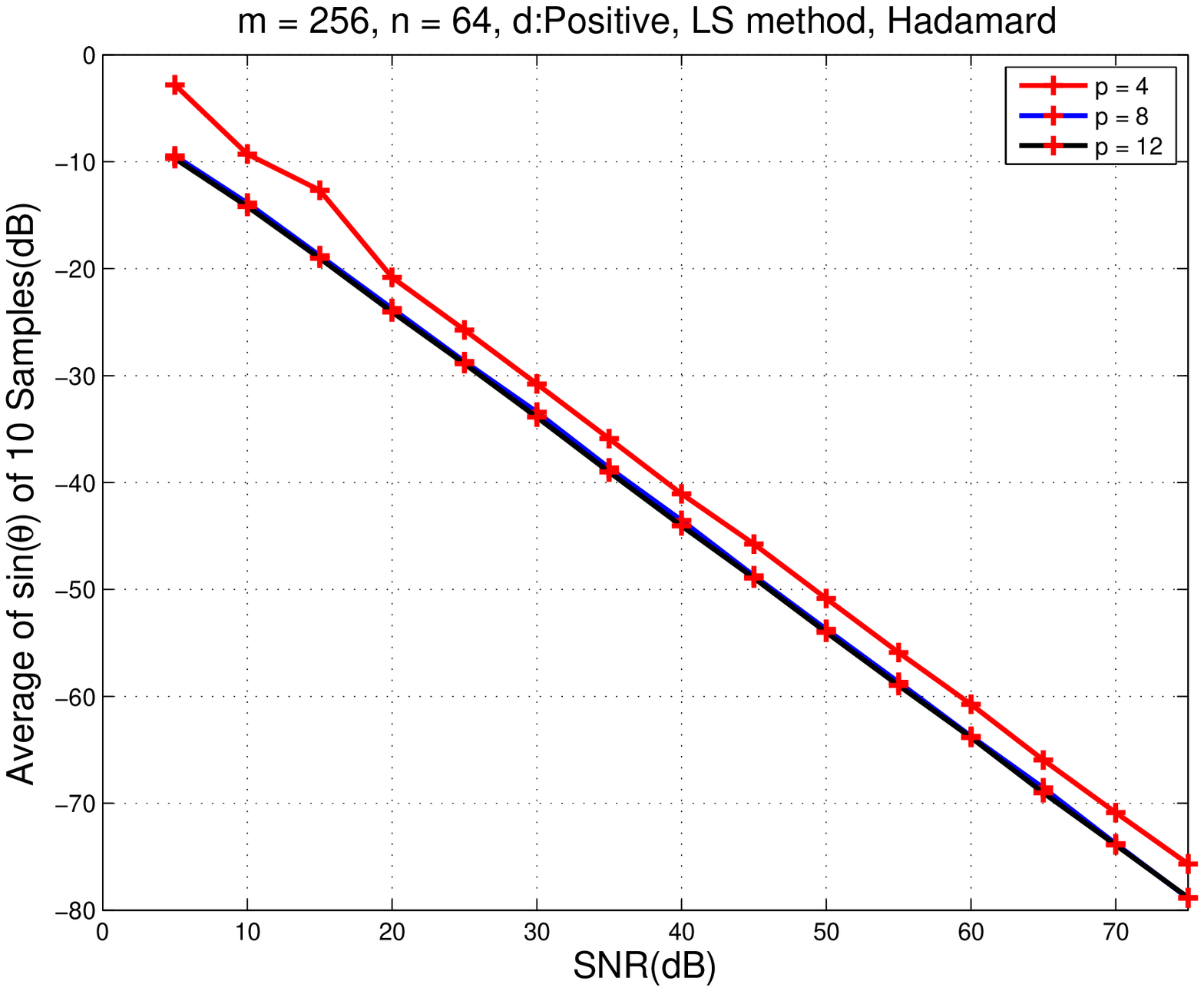}
\end{minipage}
\hfill
\begin{minipage}{0.48\textwidth}
\includegraphics[width=66mm]{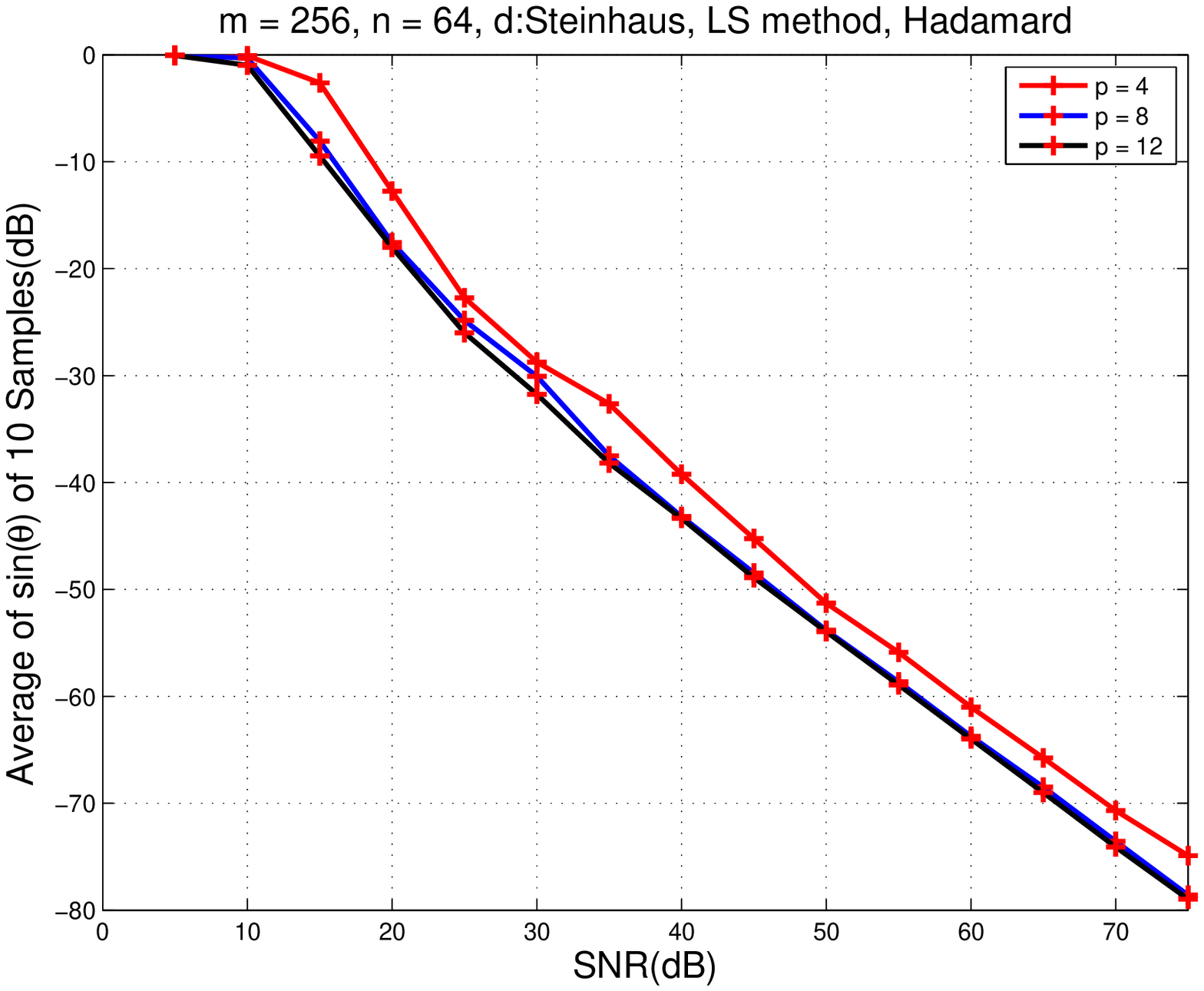}
\end{minipage}
\caption{ Performance of linear least squares approach: RelError (in dB) vs. SNR for $\by_l = \BD\BA_l\bx_l + \beps_l,1\leq l\leq p$ where $m = 256,n=64$, $\BD = \diag(\bd_0)$ and each $\BA_l$ is a random Hadamard matrix. $\bd_0$ is a Steinhaus sequence.}
\label{fig:diverse}
\end{figure}

\begin{figure}[h!]
\begin{minipage}{0.48\textwidth}
\includegraphics[width=66mm]{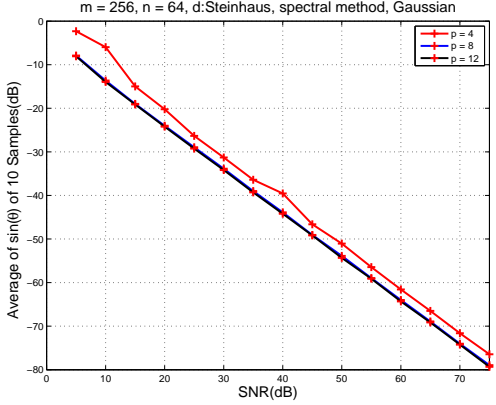}
\end{minipage}
\hfill
\begin{minipage}{0.48\textwidth}
\includegraphics[width=66mm]{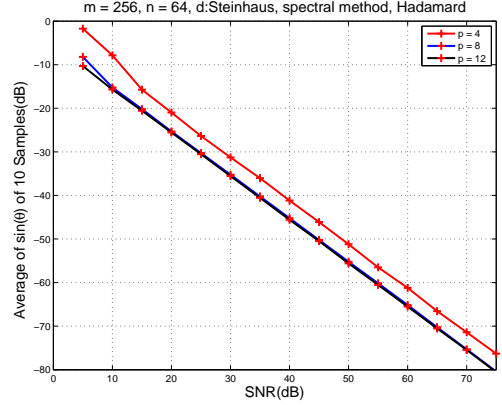}
\end{minipage}
\caption{ Performance of spectral method: RelError (in dB) vs. SNR for $\by_l = \BD\BA_l\bx_l + \beps_l,1\leq l\leq p$ where $m = 256,n=64$, $\BD = \diag(\bd_0)$. Here $\BA_l$ is either a random Hadamard matrix or a Gaussian matrix. $\bd_0$ is a Steinhaus sequence.}
\label{fig:diverse2}
\end{figure}

\subsection{Multiple snapshots}
We make a comparison of performances between the linear least squares approach and the spectral method when $\bd_0$ is a Steinhaus sequence and $\BA$ is a Gaussian random matrix $\BA$. 
Each $\bx_l$ is sampled from the standard Gaussian distribution and hence the underlying Gram matrix $\BG$ is quite close to $\I_p$ (this closeness could be easily  made more precise, but we refrain doing so here).
The choice of $\bw$ and oversampling ratio are the same as those in Section~\ref{s:numerics-diverse}. From Figure~\ref{fig:multiple}, we see that the performance is not satisfactory for the Steinhaus case using the linear least squares approach, especially in the lower SNR regime (SNR $\leq 10$). The reason is the low correlation between $\bw$ and $\bz_0$ if $\bd_0$ is Steinhaus and $\bw = \begin{bmatrix}\sqrt{m}\be_1 \\ \bzero_{np\times 1}\end{bmatrix}.$ The difficulty of choosing $\bw$ is avoided by the spectral method. As we can see in Figure~\ref{fig:multiple}, the relative error given by spectral method is approximately 7dB smaller than that given by linear least squares approach when $\bd_0$ is a complex vector and the SNR is smaller than $20$dB.

\begin{figure}[h!]
\begin{minipage}{0.48\textwidth}
\includegraphics[width=66mm]{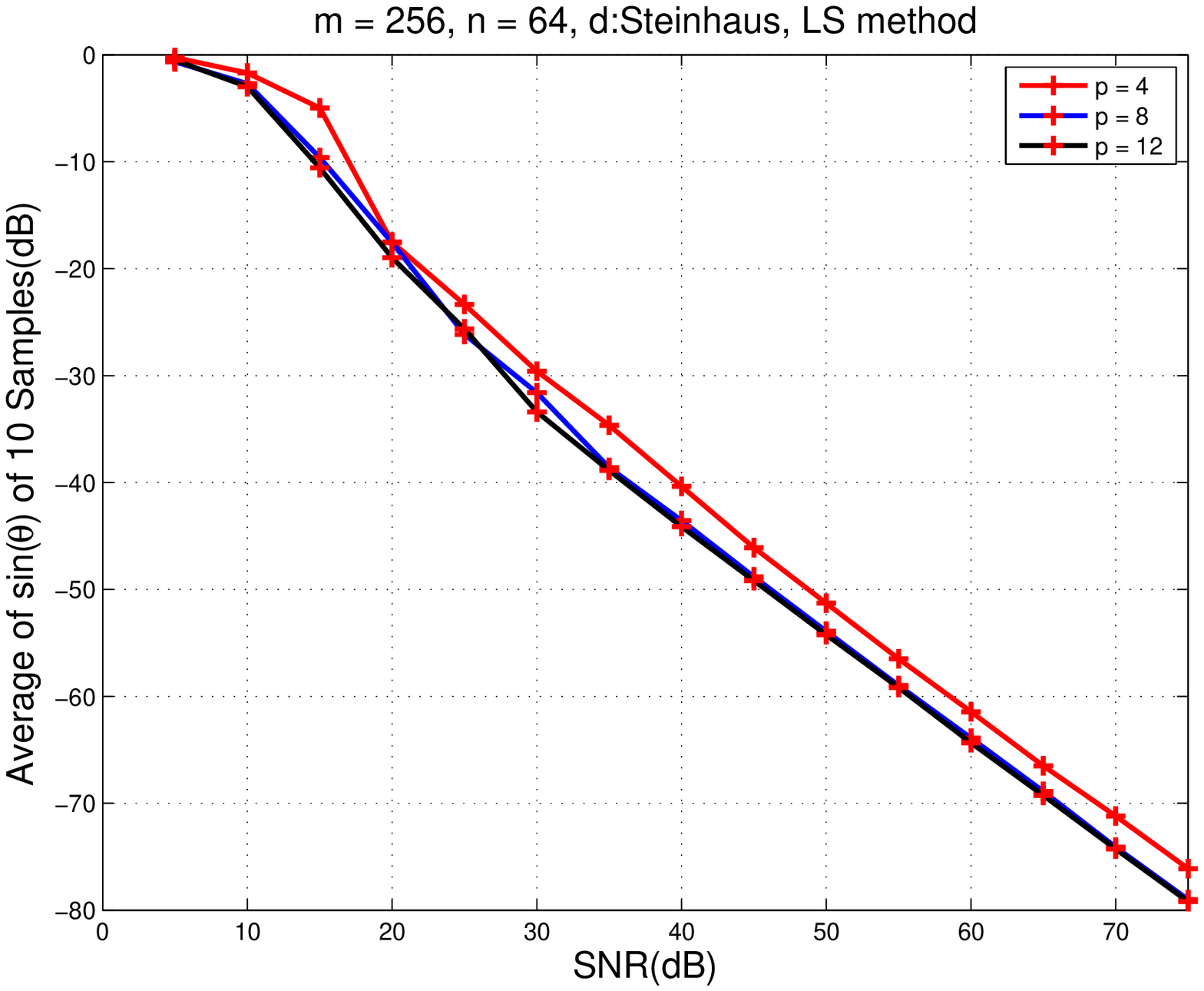}
\end{minipage}
\hfill
\begin{minipage}{0.48\textwidth}
\includegraphics[width=66mm]{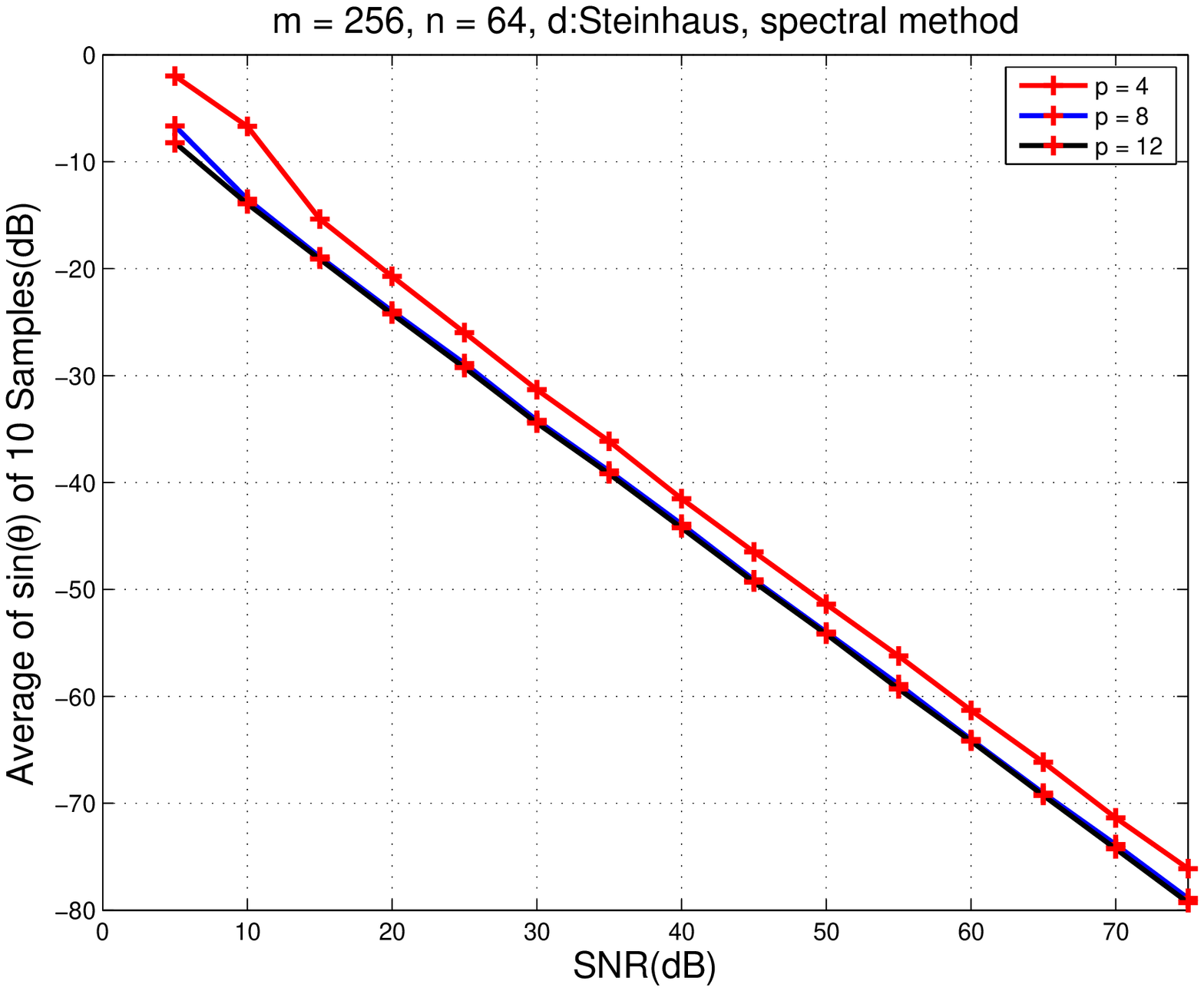}
\end{minipage}
\caption{ Comparison between linear least squares approach and spectral method:  RelError (in dB) vs. SNR for $\by_l = \BD\BA\bx_l + \beps_l,1\leq l\leq p$ where $m = 256,n=64$, $\BD = \diag(\bd_0)$ and $\BA$ is a Gaussian random matrix. The gain $\bd_0$ is a random vector with each entry uniformly distributed over unit circle.}
\label{fig:multiple}
\end{figure}

\section{Proofs}\label{s:proof}
For each subsection, we will first give the result of noiseless measurements. We then prove the stability theory by using the result below. The proof of spectral method can be found in Section~\ref{ss:svd-bilinear}.
\begin{proposition}{~\cite{W89}}\label{prop:perturb}
Suppose that $\BA \bu_0 = \bb$ is a consistent and overdetermined system. Denote $\hat{\bu}$ as the least squares solution to $\|(\BA + \delta\BA)\bu - \bb\|^2$ with $\|\delta \BA\| \leq\eta \|\BA\|$. If $\kappa(\BA)\eta < 1$, there holds,
\begin{equation*}
\frac{\|\hat{\bu} - \bu_0\|}{\|\bu_0\|} \leq \kappa(\BA)\eta \left( 1 + \frac{2}{1 - \kappa(\BA)\eta }\right).
\end{equation*}
\end{proposition}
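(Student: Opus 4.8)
The plan is to exploit the two structural hypotheses—\emph{consistency} and \emph{overdetermination}—to collapse the error into a single clean term, rather than invoking the full pseudoinverse perturbation expansion. Write $\widetilde{\BA} := \BA + \delta\BA$, so that $\hat{\bu}$ is characterized by the normal equations $\widetilde{\BA}^*\widetilde{\BA}\hat{\bu} = \widetilde{\BA}^*\bb$. The first step I would take is to feed the consistency relation $\bb = \BA\bu_0 = \widetilde{\BA}\bu_0 - \delta\BA\,\bu_0$ into the right-hand side, obtaining $\widetilde{\BA}^*\bb = \widetilde{\BA}^*\widetilde{\BA}\bu_0 - \widetilde{\BA}^*\delta\BA\,\bu_0$. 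Subtracting $\widetilde{\BA}^*\widetilde{\BA}\bu_0$ from the normal equations then yields the exact identity
\[
\widetilde{\BA}^*\widetilde{\BA}(\hat{\bu} - \bu_0) = -\,\widetilde{\BA}^*\delta\BA\,\bu_0 .
\]

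Second, I would establish that $\widetilde{\BA}$ has full column rank, so that $\widetilde{\BA}^*\widetilde{\BA}$ is invertible and the error admits the closed form $\hat{\bu} - \bu_0 = -\widetilde{\BA}^\dagger\delta\BA\,\bu_0$ with $\widetilde{\BA}^\dagger = (\widetilde{\BA}^*\widetilde{\BA})^{-1}\widetilde{\BA}^*$. This follows from Weyl's inequality for singular values together with the hypotheses: using $\|\delta\BA\| \leq \eta\|\BA\| = \eta\,\sigma_{\max}(\BA)$ and $\kappa(\BA) = \sigma_{\max}(\BA)/\sigma_{\min}(\BA)$,
\[
\sigma_{\min}(\widetilde{\BA}) \geq \sigma_{\min}(\BA) - \|\delta\BA\| \geq \sigma_{\min}(\BA)\bigl(1 - \kappa(\BA)\eta\bigr),
\]
and the assumption $\kappa(\BA)\eta < 1$ is exactly what keeps this lower bound strictly positive. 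Taking norms in the closed form, and using $\|\widetilde{\BA}^\dagger\| = 1/\sigma_{\min}(\widetilde{\BA})$, gives
\[
\frac{\|\hat{\bu} - \bu_0\|}{\|\bu_0\|} \leq \frac{\|\delta\BA\|}{\sigma_{\min}(\widetilde{\BA})} \leq \frac{\eta\,\sigma_{\max}(\BA)}{\sigma_{\min}(\BA)\bigl(1 - \kappa(\BA)\eta\bigr)} = \frac{\kappa(\BA)\eta}{1 - \kappa(\BA)\eta}.
\]
A one-line algebraic comparison ($\kappa\eta \leq 2$, which holds since $\kappa\eta < 1$) shows $\tfrac{\kappa\eta}{1-\kappa\eta} \leq \kappa\eta\bigl(1 + \tfrac{2}{1-\kappa\eta}\bigr)$, so this already implies the stated estimate, in fact with room to spare.

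The one place that needs genuine care is the rank/invertibility argument: everything downstream depends on $\widetilde{\BA}^\dagger$ acting as a true left inverse, i.e.\ $\widetilde{\BA}^\dagger\widetilde{\BA} = \I$, which uses overdetermination plus full column rank, and on the Weyl bound not degenerating—both secured precisely by $\kappa(\BA)\eta < 1$. If one instead wanted to reproduce the exact constant $1 + \tfrac{2}{1-\kappa\eta}$ rather than beat it, the alternative route would apply the general Wedin identity
\[
\widetilde{\BA}^\dagger - \BA^\dagger = -\widetilde{\BA}^\dagger\delta\BA\,\BA^\dagger + \widetilde{\BA}^\dagger\widetilde{\BA}^{\dagger *}\delta\BA^*(\I - \BA\BA^\dagger) - (\I - \widetilde{\BA}^\dagger\widetilde{\BA})\delta\BA^*\BA^{\dagger *}\BA^\dagger
\]
to $\bb$; here the second term vanishes because the residual $(\I - \BA\BA^\dagger)\bb$ is zero by consistency, and the third vanishes because $\I - \widetilde{\BA}^\dagger\widetilde{\BA} = \bzero$ by overdetermination, again leaving only the first term $-\widetilde{\BA}^\dagger\delta\BA\,\bu_0$. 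Either way, it is the combination of consistency and overdetermination that makes the bound so clean, and I would present the direct normal-equations derivation as the primary argument.
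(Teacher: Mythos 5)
Your proof is correct, but note that the paper never proves this proposition itself: it is imported verbatim from Wei's perturbation theory for consistent least squares problems \cite{W89}, so there is no internal argument to compare against. Your route is the natural direct one, and it in fact yields more than the statement. The identity $\widetilde{\BA}^*\widetilde{\BA}(\hat{\bu}-\bu_0)=-\widetilde{\BA}^*\delta\BA\,\bu_0$ is exact by consistency; Weyl's inequality together with $\kappa(\BA)\eta<1$ gives $\sigma_{\min}(\BA+\delta\BA)\geq \sigma_{\min}(\BA)\bigl(1-\kappa(\BA)\eta\bigr)>0$, which simultaneously makes $\hat{\bu}$ the unique minimizer (so the normal-equations characterization is legitimate) and bounds $\|(\BA+\delta\BA)^{\dagger}\|$ by $\bigl[\sigma_{\min}(\BA)\bigl(1-\kappa(\BA)\eta\bigr)\bigr]^{-1}$; chaining these gives the sharper estimate $\kappa(\BA)\eta/\bigl(1-\kappa(\BA)\eta\bigr)$, which dominates nothing it shouldn't, since $1/\bigl(1-\kappa(\BA)\eta\bigr)\leq 1+2/\bigl(1-\kappa(\BA)\eta\bigr)$. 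What the citation buys the authors is generality: Wei's theorem is a general perturbation result for consistent least squares problems, and its looser constant reflects settings beyond the clean one here, where only $\BA$ is perturbed, the unperturbed residual is zero, and $\BA$ has full column rank. What your proof buys is self-containedness and a tighter constant in exactly the regime the paper uses. One small quibble: your closing remark suggests the Wedin-identity route would ``reproduce the exact constant'' $1+2/\bigl(1-\kappa(\BA)\eta\bigr)$, but as you yourself observe, the second and third terms of that identity vanish under the hypotheses, so it collapses to the same single term $-\widetilde{\BA}^{\dagger}\delta\BA\,\bu_0$ and hence the same improved bound; the paper's stated constant is simply not tight under these hypotheses, and neither of your two arguments would (or should) recover it exactly.
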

To apply the proposition above, it suffices to bound $\kappa(\BA)$ and $\eta$.

\subsection{Self-calibration from repeated measurements}
Let us start with~\eqref{eq:linear} when $\beps_l = \bzero$ and denote $\BLam_l := \diag(\overline{\BA_l \bv})$ and $\bv := \frac{\bx}{\|\bx\|}\in\CC^n$, 
\begin{equation*}
\A_0: = \begin{bmatrix}
\diag(\by_1) & -\BA_1 \\
\vdots & \vdots\\
\diag(\by_p) &  - \BA_p
\end{bmatrix}
= 
\begin{bmatrix}
\BLam_1^* & -\frac{1}{\sqrt{m}}\BA_1 \\
\vdots & \vdots \\
\BLam_p^* & -\frac{1}{\sqrt{m}}\BA_p
\end{bmatrix}
\begin{bmatrix}
\BD\|\bx\| & \bzero \\
\bzero & \sqrt{m}\I_n
\end{bmatrix}.
\end{equation*}
Then we rewrite $\A_{\bw}^*\A_{\bw}$ as 
\begin{equation*}
\A_{\bw}^*\A_{\bw} = \A_0^*\A_0+ \bw\bw^* 
= \sum_{l=1}^p \BP
\BZ_l\BZ_l^*
\BP^*
+ 
\bw\bw^* 
\end{equation*}
where 
\begin{equation}\label{def:model1-Zl}
\BZ_l := 
\begin{bmatrix}
\BLam_l \\
 -\frac{1}{\sqrt{m}}\BA_l^* \\
\end{bmatrix} \in \CC^{ (m + n)\times m},
\quad 
\BP := 
\begin{bmatrix}
\BD^* \|\bx\|& \bzero \\
\bzero & \sqrt{m}\I_n
\end{bmatrix}
\in \CC^{(m + n)\times (m + n)}.
\end{equation}
By definition,
\begin{equation*}
\BZ_l\BZ_l^* = 
\begin{bmatrix}
\BLam_l\BLam_l^* & -\frac{1}{\sqrt{m}}\BLam_l \BA_l \\
-\frac{1}{\sqrt{m}} \BA_l^*\BLam_l^* &  \frac{1}{m}\BA_l^* \BA_l
\end{bmatrix} \in \CC^{(m + n)\times (m + n)}.
\end{equation*}
Our goal is to find out the smallest and the largest eigenvalue of $\A_{\bw}$. Actually it suffices to understand the spectrum of $\sum_{l=1}^p \BZ_l\BZ_l^*$. Obviously, its smallest eigenvalue is zero and the corresponding eigenvector is $\bu_1 := \frac{1}{\sqrt{2}}\begin{bmatrix}
\frac{\bone_m}{\sqrt{m}} \\
\bv
\end{bmatrix}.
$
Let $\ba_{l,i}$ be the $i$-th column of $\BA_l^*$ and we have $\E(\ba_{l,i}\ba_{l,i}^*) = \I_n$
under all the three settings in Section~\ref{s:model1-setup}.
Hence,
$\BC : = \E(\BZ_l\BZ_l^*) = \begin{bmatrix}
\I_m & -\frac{1}{\sqrt{m}} \bone_m\bv^* \\
-\frac{1}{\sqrt{m}}\bv\bone_m^* & \I_n 
\end{bmatrix}.$

It is easy to see that $\rank(\BC) = m + n - 1$ and the null space of $\BC$ is spanned by $\bu_1$. $\BC$ has an eigenvalue with value $1$ of multiplicity $m + n-2$ and an eigenvalue with value $2$ of multiplicity $1$.
More importantly, the following proposition holds and combined with Proposition~\ref{prop:perturb}, we are able to prove Theorem~\ref{thm:main1}. 
\begin{proposition}\label{prop:main1} 
There holds
\begin{equation*}
\left\| \sum_{l=1}^p \BZ_l\BZ_l^* - p\BC\right\| \leq \frac{p}{2} 
\end{equation*}
\begin{enumerate}[(a)]
\item with probability $1 - (m + n)^{-\gamma}$ if $\BA_l$ is Gaussian and $p \geq c_0 \gamma \max\left\{ 1, \frac{n}{m} \right\}  \log^2(m + n)$;
\item with probability $1 - (m + n)^{-\gamma} - 2(mp)^{-\gamma + 1}$ if each $\BA_l$ is a ``tall" $(m\times n, m \geq n)$ random Hadamard/DFT matrix and $p \geq c_0 \gamma^2 \log(m + n)\log(mp)$;
\item with probability $1 - (m + n)^{-\gamma} - 2(mp)^{-\gamma + 1}$ if each $\BA_l$ is a ``fat" $(m\times n, m\leq n)$ random Hadamard/DFT matrix and $mp \geq c_0 \gamma^2n \log(m + n)\log(mp)$.
\end{enumerate}
\end{proposition}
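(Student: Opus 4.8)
The plan is to view $\sum_{l=1}^p \BZ_l\BZ_l^*$ as a sum of $p$ independent, identically distributed, positive semidefinite random matrices of size $(m+n)\times(m+n)$ with common mean $\BC$, so that the object to control is the centered sum $\sum_{l=1}^p(\BZ_l\BZ_l^* - \BC)$ of independent mean-zero self-adjoint matrices. The natural tool is the matrix Bernstein inequality: once I have a uniform operator-norm bound $R$ on the summands $\|\BZ_l\BZ_l^* - \BC\|$ and a bound $\sigma^2$ on the matrix variance $\|\sum_l \E(\BZ_l\BZ_l^* - \BC)^2\|$, Bernstein yields a tail of the form $\|\sum_l(\BZ_l\BZ_l^* - \BC)\| \lesssim \sqrt{\gamma\,\sigma^2\log(m+n)} + \gamma R\log(m+n)$ with probability $1-(m+n)^{-\gamma}$, and I would choose the stated lower bounds on $p$ (resp.\ $mp$) exactly so that the right-hand side is at most $p/2$.

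Both $R$ and $\sigma^2$ I read off from the block form
\[
\BZ_l\BZ_l^* = \begin{bmatrix} \BLam_l\BLam_l^* & -\tfrac{1}{\sqrt{m}}\BLam_l\BA_l \\ -\tfrac{1}{\sqrt{m}}\BA_l^*\BLam_l^* & \tfrac{1}{m}\BA_l^*\BA_l \end{bmatrix},
\]
where $\BLam_l\BLam_l^* = \diag(|(\BA_l\bv)_i|^2)$. The bottom-right block is deterministic in the structured cases -- it equals $\I_n$ in case (b) and has norm $n/m$ in case (c) -- so there all fluctuation comes from the diagonal and coupling blocks, both governed by the scalars $|(\BA_l\bv)_i|$. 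A convenient way to bound the variance is to note $(\BZ_l\BZ_l^*)^2 \preceq \|\BZ_l\|^2\,\BZ_l\BZ_l^*$, hence $\E(\BZ_l\BZ_l^*)^2 \preceq R\,\BC$ and $\sigma^2 \lesssim pR$; thus everything reduces to a uniform bound on $R = \max_l\|\BZ_l\|^2 \lesssim \max\{\max_{i,l}|(\BA_l\bv)_i|^2,\ \tfrac1m\|\BA_l\|^2\}$.

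For the structured cases (b) and (c) I would first control $\max_{i,l}|(\BA_l\bv)_i|$. Writing $(\BA_l\bv)_i = \sum_j H_{ij}m_{l,j}v_j$ as a Rademacher sum with unimodular $H_{ij}$ and $\|\bv\|=1$, Hoeffding's inequality gives $|(\BA_l\bv)_i| \lesssim \sqrt{\gamma\log(mp)}$, and a union bound over the $\sim mp$ entries yields this simultaneously with probability $1-2(mp)^{-\gamma+1}$; this is precisely the origin of the $2(mp)^{-\gamma+1}$ failure term and of the $\log(mp)$ factor in the sampling complexity. On this event $R \lesssim \gamma\log(mp)$ (tall) or $R \lesssim \gamma\log(mp)+n/m$ (fat), and feeding $R$ together with $\sigma^2 \lesssim pR$ into matrix Bernstein and demanding the bound be at most $p/2$ reproduces $p \gtrsim \gamma^2\log(m+n)\log(mp)$ in (b) and $mp \gtrsim \gamma^2 n\log(m+n)\log(mp)$ in (c), while the $(m+n)^{-\gamma}$ term is the Bernstein failure probability.

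The main obstacle is the Gaussian case (a), where $\BZ_l\BZ_l^*$ is genuinely unbounded and the hypotheses of the bounded matrix Bernstein inequality fail: the diagonal entries $|(\BA_l\bv)_i|^2$ are (scaled) $\chi^2$ and the block $\tfrac1m\BA_l^*\BA_l$ is a Wishart matrix with norm of order $\max\{1,n/m\}$ but a heavy upper tail. I would handle this by a truncation argument -- restricting to the event $\{\max_{i,l}|(\BA_l\bv)_i|^2 \lesssim \gamma\log(m+n),\ \tfrac1m\|\BA_l\|^2 \lesssim \max\{1,n/m\}\}$, whose complement is negligible by Gaussian tail and standard Wishart bounds -- and then applying the bounded Bernstein inequality to the truncated variables after checking that truncation perturbs the mean $\BC$ only negligibly. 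This gives $R \lesssim \max\{1,n/m\}\log(m+n)$ and hence the stated condition $p \gtrsim \gamma\max\{1,n/m\}\log^2(m+n)$. The delicate points -- keeping $R$ and $\sigma^2$ sharp through the coupled off-diagonal blocks, and verifying that the truncation and the union bounds cost only the advertised poly-log factors -- are where the bulk of the work lies.
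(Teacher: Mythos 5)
Your treatment of cases (b) and (c) is essentially the paper's own proof: decompose into the $p$ independent summands $\BZ_l\BZ_l^* - \BC$, control $\max_{l,i}|\langle \ba_{l,i},\bv\rangle|$ by Hoeffding for Rademacher sums plus a union bound over the $mp$ entries (exactly the origin of the paper's $2(mp)^{-\gamma+1}$ failure term and of the $\log(mp)$ factor), then apply the bounded matrix Bernstein inequality conditioned on that event. The one place you deviate is the variance bound $\sigma^2 \lesssim pR$ via $(\BZ_l\BZ_l^*)^2 \preceq \|\BZ_l\|^2\,\BZ_l\BZ_l^*$; note this is legitimate only after the conditioning, since $\|\BZ_l\|^2$ is not a.s. bounded, and it is cruder than the paper's explicit blockwise computation of $\E(\CZ_l\CZ_l^*)$, which gives $\sigma_0^2 \leq 6p$ in (b) and $\sigma_0^2 \leq 6np/m$ in (c). The inflation of $\sigma^2$ by the factor $R \sim \gamma\log(mp)$ is harmless here because the Bernstein deviation term $R\,t$ already forces the same condition on $p$, so your argument does recover the stated complexities in (b) and (c).

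Case (a) is where your route genuinely differs, and where it falls short of the statement. Truncation forces the level $T \sim \gamma\log(\cdot)$ (the threshold must beat the union bound over the $mp$ variables $|\langle\ba_{l,i},\bv\rangle|^2$), so on the good event $R \lesssim \gamma\log(\cdot) + \max\{1,n/m\}$, and the Bernstein deviation term $R\,(t+\log(m+n))$ with $t = \gamma\log(m+n)$ then requires $p \gtrsim \gamma^2\log(\cdot)\log(m+n) + \gamma\max\{1,n/m\}\log(m+n)$. That is quadratic in $\gamma$, whereas part (a) claims the conclusion under $p \geq c_0\gamma\max\{1,n/m\}\log^2(m+n)$, linear in $\gamma$; for large $\gamma$ your hypothesis is strictly stronger than the stated one, so the truncation argument does not prove (a) as stated. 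The paper avoids this loss with two ingredients your sketch is missing: (i) for Gaussian $\BA_l$ the columns $\ba_{l,i}$ of $\BA_l^*$ are independent, so one can decompose into the $mp$ (not $p$) independent summands $\CZ_{l,i} = \bz_{l,i}\bz_{l,i}^* - \E(\bz_{l,i}\bz_{l,i}^*)$; and (ii) instead of truncating, it applies the matrix Bernstein inequality for sub-exponential summands (the $\psi_1$-Orlicz version, Theorem~\ref{BernGaussian} in the paper), for which $R = \max_{l,i}\|\CZ_{l,i}\|_{\psi_1} \leq C(1+n/m)$ carries no $\gamma$ or log factor, while $\sigma_0^2 \leq 2p(1+n/m)$ is tight. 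This keeps the $\gamma$-dependence linear and yields the clean failure probability $(m+n)^{-\gamma}$ with no extra union-bound terms. (Truncation also leaves a bias $\E[\CZ_l\mathbf{1}_E] \neq 0$ that must be summed and absorbed; you flag this and it is controllable, but it is additional work the $\psi_1$-Bernstein route never incurs.)
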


\begin{remark}\label{rmk:prop1}
Proposition~\ref{prop:main1} actually addresses the identifiability issue of the model~\eqref{eq:model1} in absence of noise. More precisely, the invertibility of $\BP$ is guaranteed by that of $\BD$. By Weyl's theorem for singular value perturbation in~\cite{Stewart90}, $m + n - 1$ eigenvalues of $\sum_{l=1}^p \BZ_l\BZ_l^*$ are greater than $\frac{p}{2}.$ Hence, the rank of $\A_0$ is equal to $\rank(\sum_{l=1}^p \BZ_l\BZ_l^*) = m+n-1$ if $p$ is close to the information theoretic limit under the conditions given above, i.e., $p \geq \mathcal{O}(\frac{n}{m})$. In other words, the null space of $\A_0$ is completely spanned by $\bz_0 := (\bs_0, \bx_0)$.
\end{remark}


\subsubsection{Proof of Theorem~\ref{thm:main1} }
Note that Proposition~\ref{prop:main1} gives the result if $\beps_l = \bzero$. The noisy counterpart is obtained by applying perturbation theory for linear least squares.
\begin{proof}
Let $\A_{\bw} : = \A_{\bw,0} + \delta\A$ where $\A_{\bw,0}$ is the noiseless part and $\delta \A$ is defined in~\eqref{def:deltaA}. Note that $\alpha\bz_0$ with $\alpha = \frac{c}{\bw^*\bz_0}$ is actually a solution to the overdetermined system $\A_{\bw, 0} \bz = \begin{bmatrix} \bzero \\ c\end{bmatrix}$ by the definition of $\A_{\bw, 0}$.
Proposition~\ref{prop:perturb} implies that it suffices to estimate the condition number $\kappa(\A_{\bw,0})$ of $\A_{\bw,0}$ and $\eta$ such that $\|\delta\A\| \leq \eta \|\A_{\bw,0}\|$ holds. 
Note that
\begin{eqnarray}
\A^*_{\bw,0}\A_{\bw,0}
& = & \BP\left(\sum_{l=1}^p \BZ_l\BZ_l^*\right) \BP^* + \bw\bw^*  \label{eq:Aw-1} \\
& = & \BP \left(  \sum_{l=1}^p
\BZ_l\BZ_l^*
+ 
\widetilde{\bw}\widetilde{\bw}^* 
\right)
\BP^* =:\BP\widetilde{\BC}\BP^* \label{eq:Aw-2}
\end{eqnarray}
where $\tbw : = \BP^{-1}\bw.$
From Proposition~\ref{prop:main1} and Theorem 1 in~\cite{Stewart90}, we know that
\begin{equation*}
\lambda_2\left(\sum_{l=1}^p \BZ_l\BZ_l^*\right) \geq \frac{p}{2}, \quad \lambda_{n + m}\left(\sum_{l=1}^p \BZ_l\BZ_l^*\right) \leq \frac{5p}{2}
\end{equation*}
where $\lambda_1 \leq  \cdots \leq \lambda_{n + m}$ and $\lambda_1(\sum_{l=1}^p \BZ_l\BZ_l^*) = 0.$
Following from~\eqref{eq:Aw-1}, we have
\begin{eqnarray}
\|\A_{\bw,0}^*\A_{\bw,0}\| & \leq & \|\BP\|^2 \left\| \sum_{l=1}^p \BZ_l\BZ_l^*\right\| + \|\bw\|^2 \leq 3p\max\{ d_{\max}^2\|\bx\|^2, m \} + \|\bw\|^2 \nonumber \\
& \leq & 3\left(p + \frac{\|\bw\|^2}{m}\right) \lambda_{\max}^2(\BP). \label{eq:Aw-upbound}
\end{eqnarray}
On the other hand,
\begin{equation}\label{eq:Aw-op-low}
\|\A_{\bw,0}^*\A_{\bw,0}\| \geq \left\| \sum_{l=1}^p \BA_l^*\BA_l \right\|  \geq \frac{mp}{2}
\end{equation}
follows from Proposition~\ref{prop:main1}.
In other words, we have found the lower and upper bounds for $\|\A_{\bw,0}\|$ or equivalently, $\lambda_{\max}(\A^*_{\bw,0}\A_{\bw,0}).$ Now we proceed to the estimation of $\lambda_{\min}(\A^*_{\bw,0}\A_{\bw,0}).$
Let $\bu := \sum_{j=1}^{m+n} \alpha_j \bu_j$ be a unit vector, where $\bu_1 = \frac{1}{\sqrt{2}}\begin{bmatrix}
\frac{\bone_m}{\sqrt{m}} \\
\bv
\end{bmatrix}
$
with $\sum_{j=1}^{m+n} |\alpha_j|^2 = 1$
and $\{\bu_j\}_{j=1}^{m+n}$ are the eigenvectors of $\sum_{l=1}^p \BZ_l\BZ_l^*$. Then the smallest eigenvalue of $\widetilde{\BC}$ defined in~\eqref{eq:Aw-2} has a lower bound as follows:
\begin{eqnarray}
\bu^*\widetilde{\BC}\bu
& = & 
\bu^* \left(\sum_{l=1}^p \BZ_l\BZ_l^*\right) \bu + 
\bu^* 
\tbw\tbw^*
\bu \nonumber \\
& \geq & \sum_{j=2}^{m+n} \lambda_j |\alpha_j|^2 + 
\bu^* \bu_1\bu_1^*\tbw\tbw^* \bu_1\bu_1^*
\bu \nonumber \\
& \geq &\sum_{j=2}^{m+n} \lambda_j |\alpha_j|^2 +  \frac{|\bw^*\bz_0|^2  }{2m\|\bx\|^2} |\alpha_1|^2 \label{eq:Aw1-low}
\end{eqnarray}
which implies $\lambda_{\min}(\widetilde{\BC}) \geq \frac{1}{2}\min\left\{p, \frac{|\bw^*\bz_0|^2}{m\|\bx\|^2}\right\} \geq \frac{1}{2}\min\left\{p, \frac{ \|\bw\|^2 |\Corr(\bw,\bz_0)|^2}{m}\right\} $.
Combined with $\A_{\bw,0}^*\A_{\bw,0} = \BP\widetilde{\BC}\BP^*$, 
\begin{equation*}
\lambda_{\min}(\A_{\bw,0}^*\A_{\bw,0}) \geq \lambda_{\min}(\widetilde{\BC})\lambda_{\min}^2(\BP) \geq \frac{1}{2m}\min\left\{mp, \|\bw\|^2|\Corr(\bw,\bz_0)|^2\right\} \lambda^2_{\min}(\BP).
\end{equation*}
Therefore, with~\eqref{eq:Aw-upbound}, the condition number of $\A_{\bw,0}^*\A_{\bw,0}$ is bounded by
\begin{equation*}
\kappa(\A_{\bw,0}^*\A_{\bw,0}) \leq  \frac{6(mp + \|\bw\|^2)}{\min\{mp, \|\bw\|^2|\Corr(\bw,\bz_0)|^2\}} \kappa^2(\BP).
\end{equation*}
From~\eqref{eq:Aw-op-low}, we set $\eta =  \frac{2\|\delta\A\|}{\sqrt{mp}} \geq \frac{\|\delta \A\|}{\|\A_{\bw,0}\|}.$

Applying Proposition~\ref{prop:perturb} gives the following upper bound of the estimation error $
\frac{ \| \hat{\bz} - \alpha\bz_0 \| }{\|\alpha\bz_0\|} \leq \kappa(\A_{\bw, 0})\eta \left( 1 + \frac{2}{1 - \kappa(\A_{\bw, 0})\eta }\right)$
where $\alpha = \frac{c}{\bw^*\bz_0}$ and 
\begin{equation*}
\kappa(\A_{\bw,0})  \leq  \sqrt{ \frac{6 (mp + \|\bw\|^2)}{\min\{mp,\|\bw\|^2 |\Corr(\bw,\bz_0)|^2\}} }\kappa(\BP).
\end{equation*}
If $\|\bw\| = \sqrt{mp}$, the upper bound of $\kappa(\A_{\bw,0})$ satisfies
\begin{equation*}
\kappa(\A_{\bw,0})  \leq  \sqrt{ \frac{12}{ |\Corr(\bw,\bz_0)|^2 } }\kappa(\BP) \leq \frac{2\sqrt{3}}{|\text{Corr}(\bw,\bz_0)|}  \sqrt{  \frac{\max\{d^2_{\max}\|\bx\|^2, m\}}{\min\{d^2_{\min}\|\bx\|^2,m\}}}.
\end{equation*}
\end{proof}

\subsubsection{Proof of Proposition~\ref{prop:main1}(a)}
\begin{proof}{\textbf{[Proof of Proposition~\ref{prop:main1}(a)]}}
From now on, we assume $\ba_{l,i}\in\CC^n$, i.e., the $i$-th column of $\BA_l^*$, obeys a complex Gaussian distribution, $\frac{1}{\sqrt{2}}\mathcal{N}(\bzero, \I_n) + \frac{\mi}{\sqrt{2}}\mathcal{N}(\bzero, \I_n)$. 
Let $\bz_{l,i}$ be the $i$-th column of $\BZ_l$; it can be written in explicit form as
\begin{equation*}
\bz_{l,i} = 
\begin{bmatrix}
\overline{\lag \ba_{l,i}, \bv\rag} \be_i \\
-\frac{1}{\sqrt{m}} \ba_{l,i}
\end{bmatrix},
\quad 
\bz_{l,i}\bz_{l,i}^* = 
\begin{bmatrix}
|\lag \ba_{l,i}, \bv\rag|^2 \be_i\be_i^* & -\frac{1}{\sqrt{m}} \be_i\bv^*\ba_{l,i}\ba_{l,i}^* \\ 
-\frac{1}{\sqrt{m}}\ba_{l,i}\ba_{l,i}^*\bv\be_i^* & \frac{1}{m} \ba_{l,i}\ba_{l,i}^*.
\end{bmatrix}
\end{equation*}
Denoting $\CZ_{l,i} : = \bz_{l,i}\bz_{l,i}^* - \E(\bz_{l,i}\bz_{l,i}^*)$, we obtain 
\begin{equation*}
\CZ_{l,i} := 
\begin{bmatrix}
(|\lag \ba_{l,i}, \bv\rag|^2 - 1) \be_i\be_i^* & -\frac{1}{\sqrt{m}} \be_i\bv^*(\ba_{l,i}\ba_{l,i}^* - \I_n)\\ 
-\frac{1}{\sqrt{m}}(\ba_{l,i}\ba_{l,i}^*- \I_n)\bv\be_i^* & \frac{1}{m} (\ba_{l,i}\ba_{l,i}^* - \I_n)
\end{bmatrix}.
\end{equation*}
Obviously each $\CZ_{l,i}$ is independent. In order to apply Theorem~\ref{thm:bern1} to estimate $\|\sum_{l,i} \CZ_{l,i}\|$, we need to bound $\max_{l,i} \|\CZ_{l,i}\|_{\psi_1}$ and $\left\| \sum_{l=1}^p\sum_{i=1}^m\E(\CZ_{l,i}\CZ_{l,i}^*) \right\|$. Due to the semi-definite positivity of $\bz_{l,i}\bz_{l,i}^*$, we have $\|\CZ_{l,i}\| \leq \max\{ \|\bz_{l,i}\|^2, \|\E(\bz_{l,i}\bz_{l,i}^*)\|\} \leq \max\left\{|\lag\ba_{l,i}, \bv \rag|^2 + \frac{1}{m} \|\ba_{l,i}\|^2, 2\right\}$ and hence
\begin{equation*}
\|\CZ_{l,i}\|_{\psi_1} \leq  (|\lag\ba_{l,i}, \bv \rag|^2)_{\psi_1} + \frac{1}{m} (\|\ba_{l,i}\|^2)_{\psi_1} \leq C\left(1 + \frac{n}{m}\right)
\end{equation*}
which follows from Lemm~\ref{lem:psi1} and $\|\cdot\|_{\psi_1}$ is a norm. 
This implies $R : = \max_{l,i}\|\CZ_{l,i}\|_{\psi_1} \leq C\left(1 + \frac{n}{m}\right).$

Now we consider $\sigma^2_0 = \left\|\sum_{l=1}^p\sum_{i=1}^m\E(\CZ_{l,i}\CZ_{l,i}^*)\right\|$ by computing $(\CZ_{l,i}\CZ_{l,i}^*)_{1,1}$ and $(\CZ_{l,i}\CZ_{l,i}^*)_{2,2}$, i.e., the $(1,1)$-th and $(2,2)$-th block of $\CZ_{l,i}\CZ_{l,i}^*$,
\begin{eqnarray*}
(\CZ_{l,i}\CZ_{l,i}^*)_{1,1} & = & \left[(|\lag \ba_{l,i}, \bv\rag|^2 - 1)^2 + \frac{1}{m} \bv^*(\ba_{l,i}\ba_{l,i}^* - \I_n)^2 \bv\right] \be_i\be_i^*, \\
(\CZ_{l,i}\CZ_{l,i}^*)_{2,2} & = & \frac{1}{m} (\ba_{l,i}\ba_{l,i}^* - \I_n)\bv\bv^*(\ba_{l,i}\ba_{l,i}^* - \I_n) + \frac{1}{m^2}(\ba_{l,i}\ba_{l,i}^* - \I_n)^2.
\end{eqnarray*}
Following from~\eqref{eq:gauss-3}, ~\eqref{eq:gauss-4}, ~\eqref{eq:gauss-5} and Lemma~\ref{lem:pos}, there holds
\begin{eqnarray*}
\sigma^2_0 & = & \left\|\sum_{l=1}^p\sum_{i=1}^m\E(\CZ_{l,i}\CZ_{l,i}^*)\right\|  \leq 2\left\| \sum_{l=1}^p \sum_{i=1}^m 
\begin{bmatrix}
\E(\CZ_{l,i}\CZ_{l,i}^*)_{1,1} & \bzero \\
\bzero & \E(\CZ_{l,i}\CZ_{l,i}^*)_{2,2}
\end{bmatrix}
\right\| \\
& = & 
2\left\| \sum_{l=1}^p \sum_{i=1}^m 
\begin{bmatrix}
\left(1 + \frac{n}{m}\right)\be_i\be_i^* & \bzero \\
\bzero & \left(\frac{1}{m} + \frac{n}{m^2}\right)\I_n
\end{bmatrix}
\right\| \\
& = & 
2p\left\| 
\begin{bmatrix}
\left(1 + \frac{n}{m}\right)\I_m & \bzero \\
\bzero & \left(1 + \frac{n}{m}\right)\I_n
\end{bmatrix}
\right\| = 2p\left(1 + \frac{n}{m}\right).
\end{eqnarray*}
By applying the matrix Bernstein inequality (see Theorem~\ref{BernGaussian}) we obtain
\begin{eqnarray*}
 \left\| \sum_{l=1}^p\sum_{i=1}^m \CZ_{l,i} \right\| 
 & \leq & C_0\max\Big\{ \sqrt{p\left(1 + \frac{n}{m}\right)} \sqrt{t + \log(m + n)}, \\ 
&& \left(1 + \frac{n}{m}\right)(t + \log(m + n))\log(m + n) \Big\}  \leq \frac{p}{2}
\end{eqnarray*}
with probability $1 - e^{-t}$. In particular, by choosing $t = \gamma \log(m + n)$, i.e, 
$p \geq c_0 \gamma \max\left\{ 1, \frac{n}{m} \right\}  \log^2(m + n),$
the inequality above holds with probability $1 -  (m + n)^{-\gamma}.$
\end{proof}

\subsubsection{Proof of Proposition~\ref{prop:main1}(b)}
\begin{proof}{\textbf{[Proof of Proposition~\ref{prop:main1}(b)]}}
Each $\BZ_l$ is independent by its definition in~\eqref{def:model1-Zl} if $\BA_l : = \BH\BM_l$ where $\BH$ is an $m\times n$ partial DFT/Hadamard matrix with $m\geq n$ and $\BH^*\BH = m\I_n$ and $\BM_l = \diag(\bsm_l)$ is a  diagonal random binary $\pm 1$ matrix.
Let $\CZ_{l} := \BZ_l\BZ_l^* - \BC \in \CC^{(m + n)\times (m +n )} $; in explicit form
\begin{equation*}
\CZ_l = 
\begin{bmatrix}
\BLam_l\BLam_l^* - \I_m & - \frac{1}{\sqrt{m}}(\BLam_l \BA_l  - \bone_m\bv^*)\\
- \frac{1}{\sqrt{m}}(\BA_l^*\BLam_l^*  - \bv\bone_m^*)  & \bzero
\end{bmatrix}.
\end{equation*}
where $\BA_l^*\BA_l = m\I_n$ follows from the assumption.
First we take a look at the upper bound of $\|\CZ_l\|.$ It suffices to bound $\|\BZ_l\BZ_l^*\|$ since $\|\CZ_l\| = \|\BZ_l\BZ_l^*- \BC\|  \leq \max\left\{\|\BZ_l \BZ_l^*\|, \|\BC\|\right\}$ and $\BC$ is positive semi-definite. On the other hand, due to Lemma~\ref{lem:pos}, we have $\|\BZ_l\BZ_l^*\| \leq 2\max\{\|\BLam_l\BLam_l^*\|, 1\}$ and hence we only need to bound $\|\BLam_l\|$.
For $\|\BLam_l\|$, there holds
\begin{equation*}
\max_{1\leq l\leq p}\|\BLam_{l}\| =\max_{1\leq l\leq p} \|\BA_l \bv\|_{\infty}  = \max_{1\leq l\leq p, 1\leq i\leq m} |\lag \ba_{l,i}, \bv\rag|.
\end{equation*}
Also for any pair of $(l,i)$, $\lag \ba_{l,i}, \bv \rag$ can be rewritten as
\begin{equation*}
|\lag \ba_{l,i}, \bv \rag| = |\lag \diag(\bsm_l)\bh_{i}, \bv \rag| = |\lag \bsm_l, \diag(\bar{\bh}_{i})\bv\rag|
\end{equation*}
where $\bh_{i}$ is the $i$-th column of $\BH^*$ and $\|\diag(\bar{\bh}_{i})\bv\| = \|\bv\| = 1$. Then there holds
\begin{eqnarray}
\Pr\left(\max_{1\leq l\leq p}\|\BLam_{l}\| \geq \sqrt{2\gamma\log (mp)} \right)
& \leq & \sum_{l=1}^p\sum_{i=1}^m \Pr\left( |\lag \ba_{l,i}, \bv \rag| \geq \sqrt{2\gamma\log (mp)}\right) \nonumber \\
& \leq & mp \Pr\left( |\lag \ba_{l,i}, \bv \rag| \geq \sqrt{2\gamma\log (mp)}\right) \nonumber \\
& \leq & 2mp \cdot e^{- \gamma\log(mp)} \leq 2 (mp)^{-\gamma +1}, \label{eq:BLam}
\end{eqnarray}
where the third inequality follows from Lemma~\ref{lem:rade}.
Applying Lemma~\ref{lem:pos} to $\CZ_l$, 
\begin{equation*}
R : = \max_{1\leq l\leq p}\|\CZ_l\| \leq 2 \max_{1\leq l\leq p}\{ \|\BLam_l\BLam_l^*\|, 1\} \leq 4\gamma\log (mp), 
\end{equation*}
with probability at least $1 - 2(mp)^{-\gamma+ 1}$. Denote the event $\{\max_{1\leq l\leq p}\|\CZ_l\| \leq 4\gamma\log (mp)\}$ by $E_1$.

Now we try to understand $\sigma_0^2 = \| \sum_{l=1}^p\E(\CZ_l\CZ_l^*)\|.$
The $(1,1)$-th and $(2,2)$-th block of $\CZ_l\CZ_l^*$ are given by
\begin{eqnarray*}
(\CZ_l\CZ_l^*)_{1,1} & = & (\BLam_l\BLam_l^* - \I_m)^2 + \frac{1}{m}(\BLam_l\BA_l - \bone_m\bv^*)(\BLam_l\BA_l - \bone_m\bv^*)^*, \\ 
(\CZ_l\CZ_l^*)_{2,2} & = & \frac{1}{m} (\BA_l^*\BLam_l^* - \bv\bone_m^*) (\BLam_l\BA_l - \bone_m\bv^*).
\end{eqnarray*}
By using~\eqref{eq:binary-3},~\eqref{eq:binary-4} and $\BA_l\BA_l^* = \BH\BH^* \preceq m\I_m$, we have
\begin{eqnarray}
\E((\BLam_l\BLam_l^* - \I_m)^2) & = & \E(\BLam_l\BLam_l^*)^2 - \I_m \preceq 2\I_m, \label{eq:self-1} \\
\qquad \E(\BLam_l\BA_l - \bone_m\bv^*)(\BLam_l\BA_l - \bone_m\bv^*)^* & = & \E(\BLam_l\BA_l\BA_l^*\BLam_l^*) - \bone_m\bone_m^* \preceq m\I_m,  \label{eq:self-2} \\
\E (\BA_l^*\BLam_l^* - \bv\bone_m^*) (\BLam\BA_l - \bone_m\bv^*) & = & \E(\BA_l^*\BLam_l^*\BLam_l\BA_l) - m\bv\bv^* \nonumber \\
& = & \sum_{i=1}^m |\lag \ba_{l,i}, \bv\rag|^2 \ba_{l,i}\ba_{l,i}^* - m\bv\bv^*
\preceq  3m\I_n \label{eq:self-3}.
\end{eqnarray}
Combining~\eqref{eq:self-1},~\eqref{eq:self-2},~\eqref{eq:self-3} and Lemma~\ref{lem:pos}, 
\begin{equation*}
\sigma_0^2 \leq 2\left\|\sum_{l=1}^p
\begin{bmatrix}
\E(\CZ_l\CZ_l^*)_{1,1} & \bzero \\
\bzero & \E(\CZ_l\CZ_l^*)_{2,2} 
\end{bmatrix}
\right\| \leq 2p \left\|
\begin{bmatrix}
2\I_m + \I_m & \bzero \\
\bzero & 3\I_n
\end{bmatrix}
\right\|
\leq 6p.
\end{equation*}
By applying~\eqref{thm:bern} with $t = \gamma \log(m + n)$ and $R \leq 4\gamma \log(mp)$ over event $E_1$, we have
\begin{equation*}
\left\| \sum_{l=1}^p (\CZ_l\CZ_l^*- \BC)\right\| \leq C_0 \max\{ \sqrt{p} \sqrt{(\gamma + 1) \log(m + n)}, \gamma (\gamma + 1) \log (mp)\log(m + n) \} \leq \frac{p}{2}
\end{equation*}
with probability $1 - (m + n)^{-\gamma} - 2(mp)^{-\gamma + 1}$ if $p \geq c_0 \gamma^2 \log(m + n)\log(mp)$.
\end{proof}

\subsubsection{Proof of Proposition~\ref{prop:main1}(c)}

\begin{proof}{\textbf{[Proof of Proposition~\ref{prop:main1}(c)]}}
Each $\BZ_l$ is independent due to~\eqref{def:model1-Zl}. 
Let $\CZ_{l} := \BZ_l\BZ_l^* - \BC \in \CC^{(m + n)\times (m +n )} $; 
in explicit form
\begin{equation*}
\CZ_l = 
\begin{bmatrix}
\BLam_l\BLam_l^* - \I_m & - \frac{1}{\sqrt{m}}(\BLam_l \BA_l  - \bone_m\bv^*)\\
- \frac{1}{\sqrt{m}}(\BA_l^*\BLam_l  - \bv\bone_m^*)  & \frac{1}{m}\BA_l^*\BA_l - \I_n
\end{bmatrix} .
\end{equation*}
Here $\BA_l = \BH\BM_l$  where $\BH$ is a ``fat" $m\times n, (m \leq n)$ partial DFT/Hadamard matrix satisfying $\BH\BH^* = n\I_m$ and $\BM_l$ is a diagonal  $\pm1$-random matrix. There holds $\BA_l^*\BA_l = \BM_l^*\BH^*\BH\BM_l \in \CC^{n\times n}$ where $\BH\in\CC^{m\times n}$ and $\E(\BA_l^*\BA_l) = m\I_n.$ For each $l$, 
$\left\| \BA_l^*\BA_l \right\| = \left\| \BH^*\BH \right\| = \|\BH\BH^*\| = n.$
Hence, there holds,
\begin{equation*}
\|\CZ_l\| \leq \max \left\{\| \BZ_l\BZ_l^* \|, \|\BC\|\right\| \leq 2\max\left\{\frac{1}{m}\|\BA_l^*\BA_l\|, \|\BLam_l\|^2, 1\right\} \leq 2\max\left\{ \frac{n}{m},\gamma\log(mp) \right\}
\end{equation*}
with probability at least $1 - 2(mp)^{-\gamma + 1}$, which follows exactly from~\eqref{eq:BLam} and Lemma~\ref{lem:pos}.

Now we give an upper bound for $\sigma_0^2 := \|\sum_{l=1}^p \E(\CZ_l\CZ_l^*)\|$. 
The $(1,1)$-th and $(2,2)$-th block of $\CZ_l\CZ_l^*$ are given by
\begin{eqnarray*}
(\CZ_l\CZ_l^*)_{1,1} & = & (\BLam_l\BLam_l^* - \I_m)^2 + \frac{1}{m}(\BLam_l\BA_l - \bone_m\bv^*)(\BLam\BA_l - \bone_m\bv^*)^*, \\ 
(\CZ_l\CZ_l^*)_{2,2} & = & \frac{1}{m} (\BA_l^*\BLam_l^* - \bv\bone_m^*) (\BLam_l\BA_l - \bone_m\bv^*) + \left(\frac{1}{m}\BA_l^*\BA_l - \I_n\right)^2.
\end{eqnarray*}
By using~\eqref{eq:binary-3},~\eqref{eq:binary-4} and $\BA_l\BA_l^* = \BH\BH^* = n\I_m$, we have
\begin{eqnarray}
\E((\BLam_l\BLam_l^* - \I_m)^2) & = & \E(\BLam_l\BLam_l^*)^2 - \I_m \preceq 2\I_m, \label{eq:self-1b} \\
\qquad\quad \E(\BLam_l\BA_l - \bone_m\bv^*)(\BLam_l\BA_l - \bone_m\bv^*)^* & = & \E(\BLam_l\BA_l\BA_l^*\BLam_l^*) - \bone_m\bone_m^* \preceq n\I_m,  \label{eq:self-2b} \\
\E (\BA_l^*\BLam_l^* - \bv\bone_m^*) (\BLam_l\BA_l - \bone_m\bv^*) & = & \E(\BA_l^*\BLam_l^*\BLam_l\BA_l) - m\bv\bv^* \nonumber \\
& = & \sum_{i=1}^m |\lag \ba_{l,i}, \bv\rag|^2 \ba_{l,i}\ba_{l,i}^* - m\bv\bv^*
\preceq 3m\I_n. \label{eq:self-3b} 
\end{eqnarray}
For $\E\left(\frac{1}{m}\BA_l^*\BA_l - \I_n\right)^2$, we have
\begin{equation*}
\left(\frac{1}{m}\BA_l^*\BA_l - \I_n\right)^2 = \frac{1}{m^2} \BA_l^*\BA_l\BA_l^*\BA_l - \frac{2}{m}\BA_l^*\BA_l + \I_n = \frac{n - 2m}{m^2}\BA_l^*\BA_l + \I_n
\end{equation*}
where $\BA_l\BA_l^* = n\I_m.$
Note that $\E(\BA_l^*\BA_l) = m\I_n$, and there holds, 
\begin{equation}\label{eq:self-4b}
\E\left(\frac{1}{m}\BA_l^*\BA_l - \I_n\right)^2 = \frac{n - m}{m} \I_n.
\end{equation}
Combining~\eqref{eq:self-1b},~\eqref{eq:self-2b},~\eqref{eq:self-3b},~\eqref{eq:self-4b} and Lemma~\ref{lem:pos}, 
\begin{equation*}
\sigma_0^2 \leq 2p \left\|
\begin{bmatrix}
\left(2+ \frac{n}{m}\right)\I_m & \bzero \\
\bzero & \left(2+ \frac{n}{m}\right)\I_n
\end{bmatrix}
\right\|
\leq \frac{6np}{m}.
\end{equation*}
By applying~\eqref{thm:bern}  with $t = \gamma \log(m + n)$, we have
\begin{equation*}
\left\| \sum_{l=1}^p (\CZ_l\CZ_l^*- \BC)\right\| \leq C_0 \max\{ \sqrt{\frac{np}{m}} \sqrt{(\gamma + 1) \log(m + n)}, (\gamma + 1) \left(\gamma \log (mp) + \frac{n}{m}\right)\log(m + n) \} \leq \frac{p}{2}
\end{equation*}
with probability $1 - (m + n)^{-\gamma} - 2(mp)^{-\gamma + 1}$ if $mp \geq c_0 \gamma^2n \log(m + n)\log(mp)$.
\end{proof}

\subsection{Blind deconvolution via diverse inputs }
We start with~\eqref{eq:A0-2-ms} by setting $\beps_l = \bzero$. In this way, we can factorize the matrix $\A_{\bw}$ (excluding the last row) into 
\begin{equation}\label{eq:A0-2}
\A_0 :=
\underbrace{
\begin{bmatrix}
\|\bx_1\|\I_m & \cdots & \bzero \\
\vdots & \ddots & \vdots \\
\bzero &  \cdots &\|\bx_p\|\I_m
\end{bmatrix}}_{\BQ}
\underbrace{
\begin{bmatrix}
\frac{\diag(\BA_1\bv_1)}{\sqrt{p}} & -\frac{\BA_1}{\sqrt{m}} & \cdots & \bzero \\
\vdots  & \vdots &  \ddots & \vdots \\
\frac{\diag(\BA_p\bv_p)}{\sqrt{p}}  & \bzero & \cdots &  -\frac{\BA_p}{\sqrt{m}}
\end{bmatrix}
}_{\BZ\in\CC^{mp\times (np+m)}}
\underbrace{
\begin{bmatrix}
\sqrt{p}\BD & \bzero & \cdots & \bzero \\
\bzero & \frac{\sqrt{m}\I_n}{\|\bx_1\|} & \cdots & \bzero \\
\vdots & \vdots & \ddots & \vdots \\
\bzero & \bzero &  \cdots & \frac{\sqrt{m}\I_n}{\|\bx_p\|}
\end{bmatrix}}_{\BP}
\end{equation}
where $\bv_l = \frac{\bx_l}{\|\bx_l\|}$ is the normalized $\bx_l$, $1\leq l\leq p$. We will show that the matrix $\BZ$ is of rank $np + m - 1$ to guarantee that the solution is unique (up to a scalar).
Denote $\bv := \begin{bmatrix} \bv_1 \\ \vdots \\ \bv_p\end{bmatrix}\in\CC^{np\times 1}$ and $\sum_{l=1}^p \tbe_l\otimes\bv_l = \bv$ with $\|\bv\| = \sqrt{p}$ where $\{\tbe_l\}_{l=1}^p$ is a standard orthonormal basis in $\RR^p$.

\subsubsection{Proof of Theorem~\ref{thm:main2}}

The proof of Theorem~\ref{thm:main2} relies on the following proposition. 
We defer the proof of Proposition~\ref{prop:main2} to the Sections~\ref{s:model2-A} and~\ref{s:model2-B}.
\begin{proposition}\label{prop:main2}
There holds,
\begin{equation}\label{eq:ZZ-C}
\left\| \BZ^*\BZ - \BC\right\|   \leq \frac{1}{2}, \quad \BC: = \E(\BZ^*\BZ)= \begin{bmatrix}
\I_m & -\frac{1}{\sqrt{mp}}\bone_m\bv^* \\
-\frac{1}{\sqrt{mp}}\bv\bone_m^* & \I_{np}
\end{bmatrix}
\end{equation}
\begin{enumerate}[(a)]
\item with probability at least $1 - (np +m)^{-\gamma}$ with $\gamma \geq 1$ if $\BA_l$ is an $m\times n$ $(m > n)$ complex Gaussian random matrix and 
\begin{equation*}
C_0\left( \frac{1}{p} + \frac{n}{m}\right)(\gamma + 1)\log^2(np + m)  \leq \frac{1}{4};
\end{equation*}
\item with probability at least $1 - (np + m)^{-\gamma} - 2 (mp)^{-\gamma +1}$ with $\gamma \geq 1$ if $\BA_l$ yields~\eqref{eq:model2-Al} and
\begin{equation*}
C_0\left(\frac{1}{p} + \frac{n-1}{m-1}\right)\gamma^3 \log^4(np + m) \leq \frac{1}{4}.
\end{equation*}

\end{enumerate}

\end{proposition}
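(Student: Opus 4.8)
The plan is to mirror the strategy used for Proposition~\ref{prop:main1}, exploiting the block structure of $\BZ$ in~\eqref{eq:A0-2}. The starting point is that $\BZ^*\BZ$ decomposes as a sum of independent positive semidefinite matrices indexed by the $p$ snapshots: writing $\BW_l$ for the $l$-th block row of $\BZ$, i.e.\ the $m\times(np+m)$ matrix whose only nonzero blocks are $\frac{1}{\sqrt{p}}\diag(\BA_l\bv_l)$ in the first block column and $-\frac{1}{\sqrt{m}}\BA_l$ in the $(l+1)$-th block column, we have $\BZ^*\BZ=\sum_{l=1}^p \BW_l^*\BW_l$. A direct block-wise computation of the $(1,1)$, $(2,2)$ and off-diagonal blocks of $\E(\BW_l^*\BW_l)$ --- using $\E|(\BA_l\bv_l)_i|^2=1$, $\frac{1}{m}\E(\BA_l^*\BA_l)=\I_n$, and $\E(\diag(\overline{\BA_l\bv_l})\BA_l)=\bone_m\bv_l^*$ --- confirms that $\E(\BZ^*\BZ)=\BC$ as claimed in~\eqref{eq:ZZ-C}. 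Setting $\CZ_l:=\BW_l^*\BW_l-\E(\BW_l^*\BW_l)$, the task reduces to controlling $\|\sum_{l=1}^p\CZ_l\|$ by the matrix Bernstein inequality (Theorem~\ref{thm:bern1}), which requires a size bound $R$ on the summands and a bound $\sigma_0^2$ on the variance proxy $\|\sum_l\E(\CZ_l\CZ_l^*)\|$.

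For part (a) I would go one level finer and split each $\BW_l^*\BW_l=\sum_{i=1}^m \bz_{l,i}\bz_{l,i}^*$ into rank-one contributions, where $\bz_{l,i}$ carries $\frac{1}{\sqrt{p}}\overline{(\BA_l\bv_l)_i}\be_i$ in its first block and $-\frac{1}{\sqrt{m}}\ba_{l,i}$ (with $\ba_{l,i}$ the $i$-th row of $\BA_l$) in its $(l+1)$-th block. In the Gaussian case these $mp$ matrices $\CZ_{l,i}=\bz_{l,i}\bz_{l,i}^*-\E(\bz_{l,i}\bz_{l,i}^*)$ are independent, so the argument becomes a near-verbatim analogue of Proposition~\ref{prop:main1}(a): the sub-exponential size $R=\max_{l,i}\|\CZ_{l,i}\|_{\psi_1}$ is bounded by $C(\frac{1}{p}+\frac{n}{m})$ via Lemma~\ref{lem:psi1}, and $\sigma_0^2$ is controlled by reducing to the block-diagonal part with Lemma~\ref{lem:pos} and invoking the Gaussian moment identities used there (cf.~\eqref{eq:gauss-3}). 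Feeding $R$ and $\sigma_0^2$ into Theorem~\ref{BernGaussian} with $t=\gamma\log(np+m)$ yields $\|\BZ^*\BZ-\BC\|\le\frac{1}{2}$ under the stated condition $C_0(\frac{1}{p}+\frac{n}{m})(\gamma+1)\log^2(np+m)\le\frac14$.

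For part (b) the rows of $\BA_l=\BH_l\BM_l$ within a fixed $l$ are dependent, so I would keep the coarser decomposition $\sum_{l=1}^p\CZ_l$ and apply the bounded-summand matrix Bernstein inequality, exactly as in Proposition~\ref{prop:main1}(b). Here $\BM_l^2=\I_n$ forces $\frac{1}{m}\BA_l^*\BA_l=\I_n$ deterministically, so the $(2,2)$ block of $\CZ_l$ vanishes and only the $(1,1)$ and off-diagonal blocks remain. The operator-norm bound $R$ on $\CZ_l$ is then driven by $\max_i|(\BA_l\bv_l)_i|^2$, for which a high-probability bound of order $\gamma\log(mp)$ follows from a Rademacher/Hoeffding estimate (Lemma~\ref{lem:rade}) after conditioning on the selected columns --- this is exactly what produces the extra failure probability $2(mp)^{-\gamma+1}$. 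For $\sigma_0^2$ I would compute the diagonal blocks of $\E(\CZ_l\CZ_l^*)$ along the lines of~\eqref{eq:binary-3}--\eqref{eq:binary-4}; the new ingredient is that the columns of $\BH_l$ are drawn without replacement, so the relevant second moments pick up a hypergeometric correction, which is precisely the source of the $\frac{n-1}{m-1}$ factor in place of $\frac{n}{m}$.

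The main obstacle I anticipate is this variance computation in part (b): handling the without-replacement dependence among the sampled Fourier/Hadamard columns when evaluating $\E(\CZ_l\CZ_l^*)$, while simultaneously extracting a clean uniform tail bound on $\max_{l,i}|(\BA_l\bv_l)_i|$. These two effects together are what degrade the clean $\log^2$ of part (a) to $\gamma^3\log^4(np+m)$, and keeping the polylog and $\gamma$ powers honest --- rather than absorbing them carelessly into constants --- is the delicate bookkeeping. By contrast, verifying $\E(\BZ^*\BZ)=\BC$ and assembling the final Bernstein estimate are routine.
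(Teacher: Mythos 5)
Your proposal follows essentially the same route as the paper's proof: for (a) the rank-one decomposition $\BZ^*\BZ=\sum_{l,i}\bz_{l,i}\bz_{l,i}^*$ with $\psi_1$-bounds (Lemma~\ref{lem:psi1}) and the sub-exponential Bernstein inequality (Theorem~\ref{BernGaussian}), and for (b) the coarser per-$l$ block decomposition with the vanishing $(2,2)$ block, the Rademacher tail bound~\eqref{eq:BLam} giving the $2(mp)^{-\gamma+1}$ failure term, and the without-replacement second-moment computation (the paper's Lemma~\ref{lem:LAAL}) producing the $\frac{n-1}{m-1}$ factor. No gaps; this matches the paper's argument in both structure and the key estimates.
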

\begin{remark}\label{rmk:prop2}
Note that $\BC$ has one eigenvalue equal to 0 and all the other eigenvalues are at least 1. Hence the rank of $\BC$ is $np+m-1$. Similar to Remark~\ref{rmk:prop1}, Proposition~\ref{prop:main2} shows that the solution $(\bs_0, \{\bx_l\}_{l=1}^p)$ to~\eqref{eq:A0-2-ms} is uniquely identifiable with high probability when $mp \geq (np +m)\cdot \text{\em poly}(\log(np+m))$ and $\|\BZ^*\BZ - \BC\| \leq \frac{1}{2}.$ 
\end{remark}

\begin{proof}{[\bf Proof of Theorem~\ref{thm:main2}]}
From~\eqref{eq:A0-2-ms}, we let $\A_{\bw} = \A_{\bw,0} + \delta \A$ where $\A_{\bw,0}$ is the noiseless part of $\A_{\bw}$. By definition of $\A_{\bw, 0}$, we know that $\alpha\bz_0$ is the solution to the overdetermined system $\A_{\bw,0}\bz = \begin{bmatrix} \bzero \\c \end{bmatrix}$ where $\alpha= \frac{c}{\bw^*\bz_0}$. Now,~\eqref{eq:A0-2} gives
\begin{equation*}
\A_{\bw,0}^*\A_{\bw,0} = \A_0^*\A_0 + \bw\bw^* = \BP^*\BZ^*\BQ^*\BQ\BZ\BP + \bw\bw^*.
\end{equation*}
Define $x_{\max} := \max_{1\leq l\leq p}\|\bx_l\|$ and $x_{\min} := \min_{1\leq l\leq p}\|\bx_l\|$.
From Proposition~\ref{prop:main2} and Theorem 1 in~\cite{Stewart90}, we know that  the eigenvalues $\{\lambda_j\}_{1\leq j\leq np + m}$ of $\BZ^*\BZ$ fulfill  $\lambda_1 = 0$ and $\lambda_{j} \geq \frac{1}{2}$ for $j\geq 2$ since $\|\BZ^*\BZ - \BC\| \leq \frac{1}{2}$; and the eigenvalues of $\BC$ are 0, 1 and 2 with multiplicities 1, $np+m - 2$, $1$ respectively. 

The key is to obtain a bound for $\kappa(\A_{\bw,0}).$
From~\eqref{eq:A0-2},
\begin{eqnarray}
\lambda_{\max}(\A_{\bw,0}^*\A_{\bw,0}) & \leq & \|\BP\|^2\|\BQ\|^2 \|\BZ^*\BZ\| + \|\bw\|^2 \leq 3 x_{\max}^2 \lambda_{\max}(\BP^*\BP) + \|\bw\|^2 \nonumber \\
& \leq & 3 x_{\max}^2 \left( 1 + \frac{\|\bw\|^2}{m}\right) \lambda_{\max}(\BP^*\BP) \label{eq:model2-eigmax}
\end{eqnarray}
where $x^2_{\max} \lambda_{\max}(\BP^*\BP) \geq x^2_{\max} \frac{m}{x_{\min}^2} \geq m.$
On the other hand,~\eqref{eq:ZZ-C} gives
\begin{equation*}
\lambda_{\max}(\A_{\bw,0}^*\A_{\bw,0})\geq \max_{1\leq l\leq p}\{ \|\BA_l^*\BA_l\| \} \geq \frac{m}{2}
\end{equation*}
since $\left\|\frac{1}{m}\BA_l^*\BA_l - \I_n \right\| \leq \frac{1}{2}$. For $\lambda_{\min}(\A_{\bw,0}^*\A_{\bw,0} )$, 
\begin{equation*}
\A_{\bw,0}^*\A_{\bw,0} \succeq \lambda_{\min}(\BQ^*\BQ) \BP^*\BZ^*\BZ\BP + \bw\bw^*  \succeq  x_{\min}^2\BP^*\BZ^*\BZ\BP + \bw\bw^* =:  \BP^* \widetilde{\BC}\BP.
\end{equation*}
Denote $\bu_1 : = \frac{1}{\sqrt{2}} 
\begin{bmatrix}
\frac{1}{\sqrt{m}}\bone_m \\
\frac{1}{\sqrt{p}}\bv
\end{bmatrix}
$ such that $\BZ\bu_1 = \bzero$ and $\widetilde{\BC} = x_{\min}^2\BZ^*\BZ + \tbw\tbw^*$ where $\tbw = \BP^{-1}\bw$.
 By using the same procedure as~\eqref{eq:Aw1-low}, 
\begin{equation*}
\bu^*\widetilde{\BC}\bu \geq x_{\min}^2 \sum_{j=2}^{np+m} \lambda_j |\alpha_j|^2 + |\alpha_1|^2|\bu_1^*\BP^{-1} \bw|^2 
\end{equation*}
where $\bu : = \sum_{j=1}^{np+m} \alpha_j \bu_j$ with $\sum_j|\alpha_j|^2 = 1$ and $\lambda_j \geq \frac{1}{2}$ for $j\geq 2$ follows from Proposition~\ref{prop:main2}. Since $|\bu_1^*(\BP^{-1})^* \bw|^2 = \frac{1}{2mp} |\bw^*\bz_0|^2,$  
the smallest eigenvalue of $\widetilde{\BC}$ satisfies
\begin{align*}
\lambda_{\min}(\widetilde{\BC}) 
& \geq  \frac{x_{\min}^2}{2}\min\left\{ 1, \frac{|\bw^*\bz_0|^2}{mpx^2_{\min}} \right\} \nonumber \\
& \geq \frac{x^2_{\min}}{2} \min\left\{ 1, \frac{1}{m}\|\bw\|^2|\Corr(\bw, \bz_0)|^2 \right\}
\end{align*}
where $\frac{|\bw^*\bz_0|^2}{px^2_{\min}} \geq \frac{|\bw^*\bz_0|^2}{\|\bz_0\|^2} \geq \|\bw\|^2|\Corr(\bw,\bz_0)|^2$ follows from $\|\bz_0\|^2\geq px^2_{\min}$.

Therefore, the smallest eigenvalue of $\A^*_{\bw,0}\A_{\bw,0}$ satisfies
\begin{align}
\lambda_{\min}(\A_{\bw,0}^*\A_{\bw,0}) & \geq \lambda_{\min}(\widetilde{\BC})\lambda_{\min}(\BP^*\BP) \nonumber \\
& \geq \frac{x^2_{\min}}{2m} \min\left\{ m, \|\bw\|^2|\Corr(\bw, \bz_0)|^2 \right\} \lambda_{\min}(\BP^*\BP) \label{eq:model2-eigmin}
\end{align}
Combining~\eqref{eq:model2-eigmax} and~\eqref{eq:model2-eigmin} leads to
\begin{equation*}
\kappa(\A_{\bw,0}^*\A_{\bw,0}) \leq \frac{6 x_{\max}^2(m + \|\bw\|^2)}{x^2_{\min}\min\{ m, \|\bw\|^2|\Corr(\bw,\bz_0)|^2 \}} \kappa(\BP^*\BP).
\end{equation*}
Applying Proposition~\ref{prop:perturb} and $\eta = \frac{2\|\delta \A\|}{\sqrt{m}} \geq \frac{\|\delta\A\|}{\|\A_{\bw,0}\|}$, we have
\begin{equation*}
\frac{ \| \hat{\bz} - \alpha\bz_0 \| }{\|\alpha\bz_0\|} \leq \kappa(\A_{\bw, 0})\eta \left( 1 + \frac{2}{1 - \kappa(\A_{\bw, 0})\eta }\right), \quad \alpha = \frac{c}{\bw^*\bz_0}
\end{equation*}
if $\kappa(\A_{\bw, 0})\eta < 1$ where $\kappa(\A_{\bw,0})$ obeys
\begin{equation*}
\kappa(\A_{\bw,0}) \leq  \sqrt{\frac{6 x_{\max}^2(m + \|\bw\|^2)}{x^2_{\min}\min\{ m, \|\bw\|^2|\Corr(\bw,\bz_0)|^2 \}}  \frac{ \max\{ pd_{\max}^2, \frac{m}{x^2_{\min}} \} }{ \min\{ pd_{\min}^2, \frac{m}{x^2_{\max}} \}}}.
\end{equation*}
In particular, if $\|\bw\| = \sqrt{m}$, then $\kappa(\A_{\bw,0})$ has the following simpler upper bound:
\begin{equation*}
\kappa(\A_{\bw,0}) \leq  \frac{2\sqrt{3}x_{\max}}{|\text{Corr}(\bw,\bz_0)|x_{\min}}\sqrt{  \frac{\max\{ pd^2_{\max}, \frac{m}{x^2_{\min}} \} }{ \min\{pd^2_{\min}, \frac{m}{x^2_{\max}}\} }}
\end{equation*}
which finishes the proof of Theorem~\ref{thm:main2}.
\end{proof}

\subsubsection{Proof of Proposition~\ref{prop:main2}(a)}
\label{s:model2-A}
In this section, we will prove that Proposition~\ref{prop:main2}(a) if $\ba_{l,i}\sim \frac{1}{\sqrt{2}}\mathcal{N}(\bzero,\I_n) +\frac{\mi}{\sqrt{2}}\mathcal{N}(\bzero, \I_n)$ where $\ba_{l,i}\in\CC^n$ is the $i$-th column of $\BA_l^*$. Before moving to the proof, we compute a few quantities which will be used later.
Define $\bz_{l,i}$ as the $((l-1)m + i)$-th column of $\BZ^*,$
\begin{equation*}
\bz_{l,i} := 
\begin{bmatrix}
\frac{1}{\sqrt{p}}\overline{\lag \ba_{l,i}, \bv_l\rag} \be_i \\
-\frac{1}{\sqrt{m}} \tbe_l \otimes \ba_{l,i} 
\end{bmatrix}_{(np+m)\times 1}, \quad 1\leq l\leq p, \quad 1\leq i\leq m
\end{equation*}
where $\{\be_i\}_{i=1}^m\in\RR^m$ and $\{\tbe_l\}_{l=1}^p\in \RR^p$ are standard orthonormal basis in $\RR^m$ and $\RR^p$ respectively; ``$\otimes$" denotes Kronecker product. By definition, we have
$\BZ^*\BZ = \sum_{l=1}^p \sum_{i=1}^m \bz_{l,i}\bz_{l,i}^*$
and all $\bz_{l,i}$ are independent from one another.   

\begin{equation*}
\bz_{l,i}\bz_{l,i}^* = 
\begin{bmatrix}
\frac{1}{p}|\lag \ba_{l,i}, \bv_l\rag|^2 \be_i\be_i^* & -\frac{1}{\sqrt{mp}}\lag \bv_l, \ba_{l,i}\rag \be_i (\tbe_l^* \otimes\ba_{l,i}^*) \\ 
-\frac{1}{\sqrt{mp}}\lag \ba_{l,i}, \bv_l\rag (\tbe_l\otimes \ba_{l,i})\be_i^* & \frac{1}{m} \tbe_l\tbe_l^*\otimes \ba_{l,i}\ba_{l,i}^*
\end{bmatrix}
\end{equation*}
and its expectation is equal to
\begin{equation*}
\E(\bz_{l,i}\bz_{l,i}^*) = 
\begin{bmatrix}
\frac{1}{p}\be_i\be_i^* & -\frac{1}{\sqrt{mp}}\be_i (\tbe_l^* \otimes\bv_l^*) \\
-\frac{1}{\sqrt{mp}}(\tbe_l\otimes \bv_l)\be_i^* & \frac{1}{m}\tbe_l\tbe_l^*\otimes \I_n
\end{bmatrix}.
\end{equation*}
It is easy to verify that $\BC = \sum_{l=1}^p\sum_{i=1}^m \E(\bz_{l,i}\bz_{l,i}^*)$.

\begin{proof}{[\bf Proof of  Proposition~\ref{prop:main2}(a)]}
The key here is to use apply the matrix Bernstein inequality in Theorem~\ref{BernGaussian}.
Note that $\BZ^*\BZ = \sum_{l=1}^p\sum_{i=1}^m \bz_{l,i}\bz_{l,i}^*.$
Let $\CZ_{l,i} : =\bz_{l,i}\bz_{l,i}^* - \E(\bz_{l,i}\bz_{l,i}^*)$ and we have
\begin{equation*}
\|\CZ_{l,i}\| \leq \|\bz_{l,i}\|^2 + \|\E(\bz_{l,i}\bz_{l,i}^*)\| \leq \frac{1}{p}|\lag \ba_{l,i}, \bv_l\rag|^2 +\frac{1}{m} \|\ba_{l,i}\|^2 + 2\max\left\{\frac{1}{p}, \frac{1}{m}\right\}
\end{equation*}
since $\|\E(\bz_{l,i}\bz_{l,i}^*)\| \leq 2\max\{\frac{1}{p}, \frac{1}{m}\}$ follows from Lemma~\ref{lem:pos}.
Therefore, the exponential norm of $\|\CZ_{l,i}\|$ is bounded by
\begin{equation*}
\|\CZ_{l,i}\|_{\psi_1} \leq 2\left( \frac{1}{p}(|\lag \ba_{l,i}, \bv_l\rag|^2)_{\psi_1} + \frac{1}{m} (\|\ba_{l,i}\|^2)_{\psi_1} \right) + 2\max\left\{\frac{1}{p}, \frac{1}{m}\right\}  \leq  C\left( \frac{1}{p} + \frac{n}{m}\right)
\end{equation*}
which follows from Lemma~\ref{lem:psi1} and as a result $R : = \max_{l,i}\|\CZ_{l,i}\|_{\psi_1} \leq C\left( \frac{1}{p} + \frac{n}{m}\right).$

Now we proceed by estimating the variance $\sigma_0^2 : = \left\| \sum_{l=1}^p\sum_{i=1}^m \CZ_{l,i}\CZ_{l,i}^* \right\|$. We express $\CZ_{l,i}$ as follows: 
\begin{equation*}
\CZ_{l,i} = 
\begin{bmatrix}
\frac{1}{p}(|\lag \ba_{l,i}, \bv_l\rag|^2 - 1) \be_i\be_i^* & -\frac{1}{\sqrt{mp}} \be_i (\tbe_l^* \otimes (\bv_l^*(\ba_{l,i}\ba_{l,i}^* - \I_n))) \\ 
-\frac{1}{\sqrt{mp}}(\tbe_l \otimes ((\ba_{l,i}\ba_{l,i}^* - \I_n)\bv_l))\be_i^* & \frac{1}{m} \tbe_l\tbe_l^*\otimes (\ba_{l,i}\ba_{l,i}^* - \I_n)
\end{bmatrix}.
\end{equation*}
The $(1,1)$-th and the $(2,2)$-th block of $\CZ_{l,i}\CZ_{l,i}^*$ are 
\begin{equation*}
(\CZ_{l,i}\CZ_{l,i}^*)_{1,1} = 
\frac{1}{p^2} \left(|\lag \ba_{l,i}, \bv_l\rag|^2 - 1 \right)^2 \be_i\be_i^* + \frac{1}{mp}\bv_l^* (\ba_{l,i}\ba_{l,i}^* - \I_n)^2 \bv_l \be_i \be_i^*,
\end{equation*}
and
\begin{equation*}
(\CZ_{l,i}\CZ_{l,i}^*)_{2,2} =  \tbe_l\tbe_l^* \otimes \left[ \frac{1}{mp}(\ba_{l,i}\ba_{l,i}^* - \I_n) \bv_l\bv_l^*(\ba_{l,i}\ba_{l,i}^* - \I_n) + \frac{1}{m^2}(\ba_{l,i}\ba_{l,i}^* - \I_n)^2\right].  
\end{equation*}
Following from~\eqref{eq:gauss-3}, ~\eqref{eq:gauss-4} and~\eqref{eq:gauss-5}, we have 
\begin{equation*}
\E(\CZ_{l,i}\CZ_{l,i}^*)_{1,1} = \left( \frac{1}{p^2} + \frac{n}{mp}\right)\be_i\be_i^*, \quad \E(\CZ_{l,i}\CZ_{l,i}^*)_{2,2} = \tbe_l\tbe_l^*\otimes \left(\frac{1}{mp}\I_n + \frac{n}{m^2}\I_n\right).
\end{equation*}
Due to Lemma~\ref{lem:pos}, there holds,
\begin{eqnarray*}
\sigma^2_0 &: = & \left\| \sum_{l=1}^p\sum_{i=1}^m \CZ_{l,i}\CZ_{l,i}^* \right\| \leq 2\left\| \sum_{l=1}^p\sum_{i=1}^m
\begin{bmatrix}
\E(\CZ_{l,i}\CZ_{l,i}^*)_{1,1} & \bzero \\
\bzero & \E(\CZ_{l,i}\CZ_{l,i}^*)_{2,2}
 \end{bmatrix} 
\right\| \\
& = & 2\left\| 
\begin{bmatrix}
\left(\frac{1}{p} + \frac{n}{m}\right)\I_m & \bzero \\
\bzero & \left(\frac{1}{p} + \frac{n}{m}\right)\I_{np}
\end{bmatrix}
\right\| \leq 2\left(\frac{1}{p} + \frac{n}{m}\right).
\end{eqnarray*}
 Note that $\sigma_0^2\geq \| \sum_{l=1}^p\sum_{i=1}^m \E(\CZ_{l,i}\CZ_{l,i}^*)_{1,1} \| = \frac{1}{p} + \frac{n}{m}$ since $\E(\CZ_{l,i}\CZ_{l,i}^*)$ is a positive semi-definite matrix. By applying~\eqref{thm:bern2},
\begin{eqnarray*}
\|\BZ^*\BZ - \BC\| & = &
\left\|\sum_{l=1}^p\sum_{i=1}^m \CZ_{l,i}\right\| = \left\| \sum_{l=1}^p\sum_{i=1}^m \E(\bz_{l,i}\bz_{l,i}^*) - \BC\right\|\\
& \leq & C_0\max\Big\{ \sqrt{\frac{1}{p} + \frac{n}{m}} \sqrt{t + \log(np + m)}, \\ 
&& \left( \frac{1}{p} + \frac{n}{m}\right)(t + \log(np + m))\log(np + m) \Big\}  \leq \frac{1}{2}.
\end{eqnarray*}
With $t = \gamma\log(mp + n)$, there holds $
\left\|\BZ^*\BZ - \BC\right\| \leq \frac{1}{2}$
with probability at least $1 - (np +m)^{-\gamma}$ if 
$C_0\left( \frac{1}{p} + \frac{n}{m}\right)(\gamma + 1)\log^2(np + m)  \leq \frac{1}{4}.$
\end{proof}

\subsubsection{Proof of Proposition~\ref{prop:main2}(b)}
\label{s:model2-B}
We prove Proposition~\ref{prop:main2} based on assumption~\eqref{eq:model2-Al}. Denote $\ba_{l,i}$ and $\bh_{l,i}$ as the $i$-th column of $\BA_l^*$ and $\BH_l^*$ and obviously we have $\ba_{l,i} = \BM_l\bh_{l,i}.$
Denote $\BLam_l = \diag(\overline{\BA_l\bv_l})$ and let $\BZ_l$ be the $l$-th block of $\BZ^*$ in~\eqref{eq:A0-2}, i.e., 
\begin{equation*}
\BZ_l = 
\begin{bmatrix}
\frac{1}{\sqrt{p}}\BLam_l \\
-\frac{1}{\sqrt{m}}\tbe_l\otimes \BA_l^*\\
\end{bmatrix} \in\CC^{(np + m)\times m}.
\end{equation*}
With $\BA_l^*\BA_l = m\I_n$, we have
\begin{equation*}
\BZ_l\BZ_l^* = 
\begin{bmatrix}
\frac{1}{p}\BLam_l\BLam^*_l & -\frac{1}{\sqrt{mp}}\widetilde{\be}^*_l\otimes(\BLam_l\BA_l) \\
-\frac{1}{\sqrt{mp}}\widetilde{\be}_l\otimes(\BLam_l\BA_l)^* & (\tbe_l\tbe_l^*)\otimes \I_n
\end{bmatrix}
\in \CC^{(np + m)\times (np + m)} 
\end{equation*} 
 where the expectation of $i$-th row of $\BLam_l\BA_l$ yields $\E(\BLam_l\BA_l)_i = \E( \bv_l^* \ba_{l,i}\ba_{l,i}^* ) = \bv_l^*$. Hence $\E(\BLam_l \BA_l) = \bone_m\bv_l^*\in \CC^{m\times n}.$ Its expectation equals
\begin{equation*}
\E(\BZ_l\BZ_l^*) = 
\begin{bmatrix}
\frac{1}{p}\I_m & - \frac{1}{\sqrt{mp}}\tbe_l^*\otimes (\bone_m \bv_l^*) \\
-\frac{1}{\sqrt{mp}}\tbe_l \otimes (\bv_l\bone_m^*) & \tbe_l\tbe_l^*\otimes \I_n
\end{bmatrix}.
\end{equation*}

\begin{proof}{ [\bf Proof of Proposition~\ref{prop:main2}(b)] }
Note that each block $\BZ_l$ is independent and we want to apply the matrix Bernstein inequality to achieve the desired result. Let $\CZ_l : = \BZ_l\BZ_l^* - \E(\BZ_l\BZ_l^*)$ and by definition, we have 
\begin{eqnarray*}
\|\CZ_l\| 
& = & \left\|
\begin{bmatrix}
\frac{1}{p} (\BLam_l\BLam^*_l - \I_m)  & -\frac{1}{\sqrt{mp}}(\BLam_l\BA_l - \bone_m \bv_l^*) \\
-\frac{1}{\sqrt{mp}} (\BLam_l\BA_l - \bone_m \bv_l^*)^* & \bzero
\end{bmatrix}
\right\| \\
& \leq & \frac{1}{p} \|\BLam_l\BLam_l^* - \I_m\| + \frac{1}{\sqrt{mp}} \|\BLam_l\BA_l - \bone_m\bv_l^*\|.
\end{eqnarray*}
Note that 
\begin{equation*}
\|\BLam_l\BA_l - \bone_m\bv_l^*\| \leq \|\BLam_l\| \|\BA_l\| + \|\bone_m\bv_l^*\| \leq  \sqrt{m} \|\BLam_l\| + \sqrt{m}.
\end{equation*}
Since~\eqref{eq:BLam} implies that $
\Pr\left(\max_{1\leq l\leq p}\|\BLam_{l}\| \geq \sqrt{2\gamma\log (mp)} \right) \leq 2 (mp)^{-\gamma +1}$, we have
\begin{equation*}
R: = \max_{1\leq l\leq p}\|\CZ_l \| \leq \frac{\|\BLam_l\|^2 + 1}{p} + \frac{\|\BLam_l\| + 1}{\sqrt{p}}\leq \frac{2\gamma\log(mp) + 1}{p} + \frac{\sqrt{2\gamma \log(mp)} + 1}{\sqrt{p}}
\end{equation*}
with probability at least $1 - 2(mp)^{-\gamma+ 1}$.
We proceed with estimation of $\sigma_0^2 := \left\|\sum_{l=1}^p \E(\CZ_l\CZ_l^*)\right\|$ by looking at the $(1,1)$-th and $(2,2)$-th block of $\CZ_l\CZ_l^*$, i.e.,  
\begin{eqnarray*}
(\CZ_l\CZ_l^*)_{1,1} & = & 
\frac{1}{p^2} (\BLam_l\BLam^*_l - \I_m)^2 + \frac{1}{mp}(\BLam_l\BA_l - \bone_m \bv_l^*)(\BLam_l\BA_l - \bone_m \bv_l^*)^*, \\
(\CZ_l\CZ_l^*)_{2,2} & = &  \frac{1}{mp} (\tbe_l\tbe_l^*)\otimes((\BLam_l\BA_l - \bone_m \bv_l^*)^*(\BLam_l\BA_l - \bone_m \bv_l^*)).
\end{eqnarray*}
Note that $\E(\BLam_l\BLam^*_l - \I_m)^2 =  \E((\BLam_l \BLam^*_l)^2) - \I_m$.
The $i$-th diagonal entry of $(\BLam_l \BLam^*_l)^2$ is $|\lag \ba_{l,i}, \bv_l\rag|^4 = |\lag \bsm_l, \diag(\bar{\bh}_{l,i})\bv_l\rag|^4 $
where $\ba_{l,i} = \BM_l \bh_{l,i} = \diag(\bh_{l,i})\bsm_l$ and~\eqref{eq:binary-4} implies $\E(|\lag \ba_{l,i}, \bv_l\rag|^4 ) \leq 3$ since $\diag(\bar{\bh}_{l,i})\bv_l$ is still a unit vector (note that $\diag(\bar{\bh}_{l,i})$ is unitary since $\bh_{l,i}$ is a column of $\BH_l^*$). Therefore,
\begin{equation}\label{eq:div-sig-1}
\E(\BLam_l\BLam^*_l - \I_m)^2 =  \E((\BLam_l \BLam^*_l)^2) - \I_m \preceq 3\I_m - \I_m \preceq 2\I_m.
\end{equation}
By using Lemma~\ref{lem:LAAL}, we have
\begin{eqnarray}
\E(\BLam_l\BA_l - \bone_m \bv_l^*)(\BLam_l\BA_l - \bone_m \bv_l^*)^* & = & \E(\BLam_l\BA_l\BA_l^* \BLam_l^*) - \bone_m\bone_m^* \nonumber \\
& =& \frac{(n - 1)(m\I_m - \bone_m\bone_m^*)}{m-1} \preceq \frac{m(n-1)\I_m}{m-1}. \label{eq:div-sig-2}
\end{eqnarray}
With $\ba_{l,i} = \BM_l\bh_{l,i}$ and independence between $\{\bh_{l,i}\}_{i=1}^m$ and $\BM_l$, we have
\begin{align}
& \E( (\BLam_l\BA_l - \bone_m \bv_l^*)^*(\BLam_l\BA_l - \bone_m \bv_l^*)) \nonumber \\
& \qquad\qquad = \E(\BA_l^* \BLam_l^*\BLam_l \BA_l) - m\bv_l\bv_l^* \nonumber \\
& \qquad\qquad = \E\left( \sum_{i=1}^m |\lag \ba_{l,i}, \bv_l\rag|^2 \ba_{l,i}\ba_{l,i}^*\right) - m\bv_l\bv_l^* \nonumber \\
& \qquad\qquad= \E\left(\sum_{i=1}^m \diag(\bh_{l,i})\left( |\lag \bsm_l, \diag(\bar{\bh}_{l,i})\bv_l\rag|^2 \bsm_l\bsm_l^*\right) \diag(\bar{\bh}_{l,i})\right) \nonumber  - m\bv_l\bv_l^* \nonumber \\
& \qquad\qquad \preceq 3m\I_n - m\bv_l\bv_l^* \preceq 3m\I_n \label{eq:div-sig-3}
\end{align}
where $\E |\lag \bsm_l, \diag(\bar{\bh}_{l,i})\bv_l\rag|^2 \bsm_l\bsm_l^* \preceq 3\I_n$ follows from~\eqref{eq:binary-3} and $\diag(\bh_{l,i})\diag(\bar{\bh}_{l,i}) = \I_n.$
By using~\eqref{eq:div-sig-1},~\eqref{eq:div-sig-2},~\eqref{eq:div-sig-3} and Lemma~\ref{lem:pos}, $\sigma_0^2$ is bounded above by
\begin{eqnarray*}
\sigma_0^2 & : = & \left\| \E\left(\sum_{l=1}^p \CZ_l\CZ_l^*\right) \right\| \leq 2\left\| \sum_{l=1}^p
\begin{bmatrix}
\E(\CZ_l\CZ_l^*)_{1,1} & \bzero \\
\bzero & \E(\CZ_l\CZ_l^*)_{2,2} 
\end{bmatrix}
\right\|\\
& \leq & 2\left\| \sum_{l=1}^p
\begin{bmatrix}
\left(\frac{2}{p^2} + \frac{n - 1}{(m-1)p}\right)\I_m & \bzero \\
\bzero & \frac{3}{p}(\tbe_l\tbe_l^*)\otimes \I_n
\end{bmatrix}
\right\| \\
& \leq & 2\left\| 
\begin{bmatrix}
\left(\frac{2}{p} + \frac{n - 1}{m-1}\right)\I_m & \bzero \\
\bzero & \frac{3}{p}\I_{np}
\end{bmatrix}
\right\| \leq 6\left(\frac{1}{p} + \frac{n-1}{m-1}\right).
\end{eqnarray*}
Conditioned on the event $\left\{\max_{1\leq l\leq p}\|\BLam_{l}\| \geq \sqrt{2\gamma\log (mp)} \right\}$, applying~\eqref{thm:bern} with $t = \gamma\log(np + m)$ gives 
\begin{eqnarray*}
\left\| \sum_{l=1}^p \CZ_l\right\| & \leq & C_0 \max\Big\{ \sqrt{\frac{1}{p} + \frac{n-1}{m-1}} \sqrt{(\gamma + 1) \log(np + m)}, \\
&& (\gamma + 1) \sqrt{\frac{2\gamma \log(mp)}{p}} \log(mp)\log(np + m) \Big\} \leq \frac{1}{2}
\end{eqnarray*}
with probability at least $1 - (np + m)^{-\gamma} - 2 (mp)^{-\gamma +1}$ and it suffices to require the condition \\
$C_0\left(\frac{1}{p} + \frac{n-1}{m-1}\right)\gamma^3 \log^4(np + m) \leq \frac{1}{4}.$
\end{proof}

\subsection{Blind Calibration from multiple snapshots}
Recall that $\A_{\bw}$ in~\eqref{eq:Amodel3} and the only difference from~\eqref{eq:A0-2-ms} is that here all $\BA_l$ are equal to $\BA$. 
If $\beps_l = \bzero$,  $\A_{\bw}$ (excluding the last row) can be factorized into
\begin{equation}
\A_0 := 
\underbrace{
\begin{bmatrix}
\|\bx_1\|\I_m & \cdots & \bzero \\
\vdots & \ddots & \vdots \\
\bzero &  \cdots &\|\bx_p\|\I_m
\end{bmatrix}}_{\BQ}
\underbrace{
\begin{bmatrix}
\frac{\diag(\BA\bv_1)}{\sqrt{p}} & -\frac{\BA}{\sqrt{m}} & \cdots & \bzero \\
\vdots & \vdots  & \ddots &  \vdots \\
\frac{\diag(\BA\bv_p)}{\sqrt{p}}  & \bzero & \cdots &  -\frac{\BA}{\sqrt{m}}
\end{bmatrix}
}_{ \BZ\in\CC^{mp\times (np + m)}}
\underbrace{
\begin{bmatrix}
\sqrt{p}\BD & \bzero & \cdots & \bzero \\
\bzero & \frac{\sqrt{m}\I_n}{\|\bx_1\|} & \cdots & \bzero \\
\vdots & \vdots & \ddots & \vdots \\
\bzero & \bzero &  \cdots & \frac{\sqrt{m}\I_n}{\|\bx_p\|}
\end{bmatrix}}_{\BP}
\end{equation}
where $\bv_i = \frac{\bx_i}{\|\bx_i\|}\in\CC^n$ is the normalized $\bx_i$. 

Before we proceed to the main result in this section we need to introduce some notation.
Let $\ba_{i}$ be the $i$-th column of $\BA^*$, which is a complex Gaussian random matrix; define $\BZ_i\in \CC^{(np+m)\times p}$ to be a matrix whose columns consist of the $i$-th column of each block of $\BZ^*$, i.e., 
\begin{equation*}
\BZ_i = 
\begin{bmatrix}
\frac{1}{\sqrt{p}}\overline{\lag \ba_{i}, \bv_1\rag} \be_i & \cdots & \frac{1}{\sqrt{p}}\overline{\lag \ba_{i}, \bv_p\rag} \be_i \\
-\frac{1}{\sqrt{m}} \tbe_1 \otimes \ba_{i}  & \cdots & -\frac{1}{\sqrt{m}} \tbe_p \otimes \ba_{i} 
\end{bmatrix}_{(np+m)\times p}
\end{equation*}
where ``$\otimes$" denotes Kronecker product and both $\{\be_i\}_{i=1}^m\in\RR^m$ and $\{\tbe_l\}_{l=1}^p\in \RR^p$ are the standard orthonormal basis in $\RR^m$ and $\RR^p$, respectively. In this way, the $\BZ_i$ are independently from one another. By definition, 
\begin{eqnarray*}
\BZ_i\BZ_i^*
& = & 
\begin{bmatrix}
\frac{1}{p}\sum_{l=1}^p |\lag \ba_i, \bv_l\rag|^2 \be_i\be_i^* & -\frac{1}{\sqrt{mp}}\be_i\bv^* (\I_p\otimes \ba_i\ba_i^*) \\
-\frac{1}{\sqrt{mp}} (\I_p\otimes \ba_i\ba_i^*) \bv\be_i^*  & \frac{1}{m} \I_p \otimes (\ba_i\ba_i^*)
\end{bmatrix}
\end{eqnarray*}
where $\sum_{l=1}^p \tbe_l\otimes\bv_l = \bv\in \CC^{np\times 1}$ and $\bv = \begin{bmatrix} \bv_1\\ \vdots \\ \bv_p \end{bmatrix}$ with $\|\bv\| = \sqrt{p}.$ 
The expectation of $ \BZ_i\BZ_i^*$ is given by
\begin{equation*}
\E(\BZ_i\BZ_i^* ) = 
\begin{bmatrix}
\be_i\be_i^* & -\frac{1}{\sqrt{mp}}\be_i \bv^* \\
-\frac{1}{\sqrt{mp}}\bv\be_i^* & \frac{1}{m}\I_{np}
\end{bmatrix}, \quad
\BC: = \sum_{i=1}^m \E(\BZ_i\BZ_i^*)  = 
\begin{bmatrix}
\I_m & -\frac{1}{\sqrt{mp}}\bone_m\bv^* \\
-\frac{1}{\sqrt{mp}}\bv\bone_m^* & \I_{np}
\end{bmatrix}.
\end{equation*}

\vskip0.25cm
Our analysis depends on the~\emph{mutual coherence} of $\{\bv_l\}_{l=1}^p$. One cannot expect to recover all $\{\bv_l\}_{l=1}^p$ and $\BD$ if all $\{\bv_l\}_{l=1}^p$ are parallel to each other. 
Let $\BG$ be the Gram matrix of $\{\bv_l\}_{l=1}^p$ with $p\leq n$, i.e., $\BG\in \CC^{p\times p}$ and $G_{k,l} = \lag \bv_k, \bv_l\rag, 1\leq k\leq l\leq p$ and in particular, $G_{l,l} = 1, 1\leq l\leq p$.  Its eigenvalues are denoted by $\{\lambda_l\}_{l=1}^p$ with $\lambda_p\geq \cdots \geq \lambda_1\geq 0$. Basic linear algebra tells that
\begin{equation}\label{eq:Glambda}
\sum_{l=1}^p\lambda_l = p, \quad \BU\BG\BU^* =\BLam,
\end{equation}
where $\BU\in \CC^{p\times p}$ is unitary and $\BLam = \diag(\lambda_1, \cdots, \lambda_p).$ 
Let $\BV = \begin{bmatrix}\bv_1^* \\ \vdots \\ \bv^*_p \end{bmatrix} \in \CC^{p\times n}$, then there holds $\BLam = \BU \BG \BU^* = \BU\BV (\BU\BV)^*$ since $\BG = \BV \BV^*.$ Here $1\leq \|\BG\| \leq \sqrt{p}$ and $\sqrt{p} \leq \|\BG\|_F \leq p$.
In particular, if $\lag \bv_k, \bv_l\rag = \delta_{kl}$, then $\BG = \I_p$; if $|\lag \bv_k, \bv_l\rag| = 1$ for all $1\leq k,l\leq p$, then $\|\BG\| = \sqrt{p}$ and $\|\BG\|_F = p.$

We are now ready to state and prove the main result in this subsection.
\begin{proposition}\label{prop:main3}
There holds
\begin{equation*}
\|\BZ^*\BZ - \BC\| = \left\|\sum_{i=1}^m \BZ_i\BZ_i^* - \BC\right\| \leq \frac{1}{2}
\end{equation*}
with probability at least $1 - 2m(np + m)^{-\gamma}$ if
\begin{equation}\label{eq:cond-mult}
C_0\left(\max\left\{\frac{\|\BG\|}{p}, \frac{\|\BG\|_F^2}{p^2} \right\} + \frac{n}{m} \right)\log^2(np + m) \leq \frac{1}{16(\gamma + 1)}
\end{equation}
and each $\ba_l$ is i.i.d. complex Gaussian, i.e., $\ba_{i} \sim \frac{1}{\sqrt{2}}\mathcal{N}(\bzero, \I_n) + \frac{\mi}{\sqrt{2}}  \mathcal{N}(\bzero, \I_n)$.
In particular, if $\BG = \I_p$ and $\|\BG\|_F = \sqrt{p}$,~\eqref{eq:cond-mult} becomes
\begin{equation}
C_0\left( \frac{1}{p} + \frac{n}{m} \right)\log^2(np + m) \leq \frac{1}{16(\gamma + 1)}.
\end{equation}

\end{proposition}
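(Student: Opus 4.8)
The plan is to apply the matrix Bernstein inequality of Theorem~\ref{BernGaussian} to the sum $\sum_{i=1}^m \CZ_i$, where $\CZ_i := \BZ_i\BZ_i^* - \E(\BZ_i\BZ_i^*)$. Since the columns $\ba_i$ of $\BA^*$ are i.i.d., the matrices $\BZ_i$ are mutually independent, each $\CZ_i$ is centered, and $\sum_{i=1}^m\E(\BZ_i\BZ_i^*) = \BC$; hence $\|\BZ^*\BZ - \BC\| = \|\sum_{i=1}^m\CZ_i\|$. To invoke Bernstein I must control two quantities: the variance proxy $\sigma_0^2 := \|\sum_{i=1}^m\E(\CZ_i\CZ_i^*)\|$ and a uniform bound $R$ on $\max_i\|\CZ_i\|$. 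This mirrors the proof of Proposition~\ref{prop:main2}(a), but with one structural change that drives the whole argument: the same $\BA$ is reused across all $p$ snapshots, so each summand aggregates the full snapshot ensemble and the Gram matrix $\BG$ of $\{\bv_l\}$ enters the estimates.

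For the variance I would compute only the two diagonal blocks of $\E(\CZ_i\CZ_i^*)$ and then invoke Lemma~\ref{lem:pos}, which bounds the norm of a positive semidefinite block matrix by twice the larger diagonal block. Writing $\bb_i$ for the $p$-vector with entries $\lag\ba_i,\bv_l\rag$, the $(1,1)$ block is governed by $\E(\tfrac1p\|\bb_i\|^2 - 1)^2$ together with an off-diagonal contribution; using the complex-Gaussian fourth-moment identities~\eqref{eq:gauss-3},~\eqref{eq:gauss-4},~\eqref{eq:gauss-5} one finds $\E(\tfrac1p\|\bb_i\|^2-1)^2 = \tfrac{1}{p^2}\Var(\|\bb_i\|^2) = \tfrac{\|\BG\|_F^2}{p^2}$, since the quadratic form $\|\bb_i\|^2 = \ba_i^*\BV^*\BV\ba_i$ has variance $\tr((\BV^*\BV)^2)=\tr(\BG^2)=\|\BG\|_F^2$; this is exactly where the Frobenius norm enters. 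The summed $(2,2)$ blocks produce $\tfrac1p\,\BG^T\otimes\I_n + \tfrac nm\,\I_{np}$, whose norm is $\tfrac{\|\BG\|}{p}+\tfrac nm$. Combining via Lemma~\ref{lem:pos} gives $\sigma_0^2 \le 2\big(\max\{\tfrac{\|\BG\|}{p},\tfrac{\|\BG\|_F^2}{p^2}\} + \tfrac nm\big)$, matching the two terms in~\eqref{eq:cond-mult}.

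The delicate step is $R$, which I expect to be the main obstacle. Unlike Proposition~\ref{prop:main2}(a), here $\|\BZ_i\BZ_i^*\| = \tfrac1p\|\bb_i\|^2 + \tfrac1m\|\ba_i\|^2$ is of constant order, so a direct $\psi_1$ bound on $\|\CZ_i\|$ is too weak; one must instead exploit that $\CZ_i$ is a \emph{centered} fluctuation. I would condition on the high-probability event on which $\max_i\|\ba_i\|^2$ and $\max_i\|\bb_i\|^2$ concentrate near their means $n$ and $p$ up to logarithmic factors, a union bound over the $m$ columns accounting for the factor $2m$ in the stated probability $1 - 2m(np+m)^{-\gamma}$. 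On this event, applying Lemma~\ref{lem:pos} to $\CZ_i\CZ_i^*$ reduces $\|\CZ_i\|$ to its diagonal blocks and yields a deterministic bound of the form $R \lesssim \sqrt{\sigma_0^2}\,\mathrm{poly}(\log(np+m))$. Feeding $\sigma_0^2$ and $R$ into Bernstein with $t = \gamma\log(np+m)$, the sub-Gaussian term $\sigma_0\sqrt{t+\log(np+m)}$ is the dominant constraint, and the hypothesis~\eqref{eq:cond-mult} is calibrated precisely to push both Bernstein terms below $\tfrac14$, so that $\|\BZ^*\BZ-\BC\|\le\tfrac12$; specializing to $\BG=\I_p$, $\|\BG\|_F=\sqrt p$ then gives the stated corollary. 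The crux is keeping the Gram-matrix dependence sharp — the $\|\BG\|_F^2/p^2$ variance through the $(1,1)$ block and the $\|\BG\|/p$ variance through the $(2,2)$ block — while absorbing the $\Theta(1)$ size of the individual summands through the centering and conditioning argument.
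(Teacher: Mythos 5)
Your setup and your variance computation are sound and agree with the paper's: the $\BZ_i$ are independent, $\sum_{i=1}^m\E(\BZ_i\BZ_i^*)=\BC$, the $(1,1)$-block fluctuation has variance $\|\BG\|_F^2/p^2$ (via $\Var(\ba_i^*\BV^*\BV\ba_i)=\|\BG\|_F^2$), and the summed $(2,2)$ blocks give $\|\BG\|/p+n/m$. You also correctly locate the obstacle: because the same $\ba_i$ is shared by all $p$ snapshots, $\|\CZ_i\|$ is of constant order, unlike in Proposition~\ref{prop:main2}(a). But your fix does not close the argument under the stated hypothesis. Any conditioning/truncation event that controls $\frac1p\sum_{l=1}^p(|\lag\ba_i,\bv_l\rag|^2-1)$ for all $m$ indices $i$ with failure probability $\lesssim m(np+m)^{-\gamma}$ forces the truncation level, hence $R$, to be at least of order $\sqrt{\gamma\log(np+m)}\,\|\BG\|_F/p$ --- the sub-exponential deviation scale of that scalar, not a constant multiple of its standard deviation. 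Feeding this $R$ into the second (large-deviation) term of Theorem~\ref{BernGaussian} with $t=\gamma\log(np+m)$ produces a contribution of order $\sqrt{\gamma\log(np+m)}\,\frac{\|\BG\|_F}{p}(\gamma+1)\log^2(np+m)$, whereas \eqref{eq:cond-mult} only guarantees $\frac{\|\BG\|_F}{p}\lesssim\frac{1}{\sqrt{\gamma+1}\,\log(np+m)}$; the product is then of order $\gamma\log^{3/2}(np+m)$, which diverges rather than staying below $\frac14$. So, contrary to your last paragraph, in your scheme the $R$-term dominates the sub-Gaussian term $\sigma_0\sqrt{t+\log(np+m)}$ (it always does once $R\gtrsim\sigma_0$), and your route proves the proposition only under a strictly stronger hypothesis in which $\|\BG\|_F^2/p^2$ is replaced by roughly $\|\BG\|_F/p$ times extra $\gamma$ and $\log$ factors. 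No choice of truncation level repairs this: lowering $R$ pushes the failure probability above the advertised $1-2m(np+m)^{-\gamma}$. (There is also the standard but unaddressed point that truncation destroys centering, so the means of the truncated summands must be bounded separately; that part is routine.)

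The idea you are missing is structural rather than technical: the constant-order part of $\CZ_i$ sits entirely in the single diagonal entry $(i,i)$. The paper therefore splits $\CZ_i=\CZ_{i,1}+\CZ_{i,2}$, where $\CZ_{i,1}$ is the matrix whose only nonzero block is $\frac1p\sum_{l=1}^p(|\lag\ba_i,\bv_l\rag|^2-1)\,\be_i\be_i^*$. The key point is that $\sum_{i=1}^m\CZ_{i,1}$ is a \emph{diagonal} matrix, so its operator norm is the maximum of $m$ independent scalars; scalar Bernstein plus a union bound (this is exactly where the factor $m$ in the success probability originates) bounds it by $C_0\max\bigl\{t\|\BG\|/p,\ \sqrt{t}\,\|\BG\|_F/p\bigr\}$, i.e.\ at the variance scale $\sqrt t\,\|\BG\|_F/p$, which is precisely what makes the $\|\BG\|_F^2/p^2$ term in \eqref{eq:cond-mult} sufficient. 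Matrix Bernstein is then applied only to the remainder $\CZ_{i,2}$, whose $\psi_1$-norm is genuinely small, $\|\CZ_{i,2}\|_{\psi_1}\leq C\bigl(\frac nm+\sqrt{\frac nm}\bigr)$, with variance proxy $\lesssim\|\BG\|/p+n/m$ --- no truncation needed at all. If you replace your conditioning step by this diagonal/off-diagonal decomposition, the rest of your write-up (the moment identities, the choice $t=\gamma\log(np+m)$, the specialization to $\BG=\I_p$) goes through essentially as in the paper.
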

\begin{remark}
The proof of Theorem~\ref{thm:main3}  follows exactly from that of Theorem~\ref{thm:main2} when Proposition~\ref{prop:main3} holds. Hence we just give a proof of Proposition~\ref{prop:main3}.
\end{remark}

\begin{proof}{[\bf Proof of Proposition~\ref{prop:main3}]}
Let $\BZ_i\BZ_i^* - \E(\BZ_i\BZ_i^*) =: \CZ_{i,1} + \CZ_{i,2}$, where $\CZ_{i,1}$ and $\CZ_{i,2}$ are defined as 
\begin{eqnarray*}
\CZ_{i,1} & := &
\begin{bmatrix}
\frac{1}{p}\sum_{l=1}^p (|\lag \ba_i, \bv_l\rag|^2 - 1) \be_i\be_i^* & \bzero \\
\bzero  & \bzero
\end{bmatrix}, \\
\CZ_{i,2} & := &  
\begin{bmatrix}
\bzero & -\frac{1}{\sqrt{mp}}\be_i\bv^* (\I_p\otimes (\ba_i\ba_i^* - \I_n)) \\
-\frac{1}{\sqrt{mp}} (\I_p\otimes (\ba_i\ba_i^* - \I_n)) \bv\be_i^*  & \frac{1}{m} \I_p \otimes (\ba_i\ba_i^* - \I_n)
\end{bmatrix}.
\end{eqnarray*}

\paragraph{Estimation of $\|\sum_{i=1}^m\CZ_{i,1}\|$}
Following from~\eqref{eq:Glambda}, we have 
\begin{equation*}
\sum_{l=1}^p |\lag \ba_i, \bv_l\rag|^2 = \| \BV\ba_i\|^2 = \| \BU\BV\ba_i\|^2=  \| \BLam^{1/2}\BLam^{-1/2}\BU\BV\ba_i\|^2 =\frac{1}{2} \sum_{l=1}^p  \lambda_l\xi_{i,l}^2
\end{equation*}
where $\BLam^{-1/2}\BU\BV$ is a $p\times n$ matrix with orthonormal rows and hence each $\xi_{i,l}$ is Rayleigh distributed. (We say $\xi$ is Rayleigh distributed if $\xi = \sqrt{X^2 + Y^2}$ where both $X$ and $Y$ are standard real Gaussian variables.)

Due to the simple form of $\CZ_{i,1}$, it is easy to see from Bernstein's inequality for scalar random variables (See Proposition 5.16 in~\cite{Ver12}) that
\begin{equation}\label{eq:czi1}
\left\| \sum_{i=1}^m \CZ_{i,1}\right\| 
\leq \max_{1\leq i\leq m}\left|\frac{1}{p}\sum_{l=1}^p |\lag \ba_i, \bv_l\rag|^2   - 1\right| \leq C_0\max\left\{ \frac{t\|\BG\|}{p}, \frac{\sqrt{t}\|\BG\|_F}{p} \right\}
\end{equation}
with probability $1 - me^{-t}.$ Here $\left|\frac{1}{p}\sum_{l=1}^p |\lag \ba_i, \bv_l\rag|^2   - 1\right| = \left|\frac{1}{p}\sum_{l=1}^p \lambda_l(\xi^2_{i,l} - 1) \right|$ where $\sum_{l=1}^p\lambda_l = p$. Therefore, 
\begin{eqnarray*}
\Var\left(\sum_{l=1}^p \lambda_l(\xi^2_{i,l} - 1) \right) & \leq & \Var(\xi^2_{i,1} - 1)\sum_{l=1}^p\lambda_l^2 = c_0 \| \BG\|_F^2,  \\
\max_{1\leq l\leq p} (\lambda_l |\xi^2_{i,l} - 1|)_{\psi_1 } & \leq & c_1 \max_{1\leq l\leq p}\lambda_l = c_1 \|\BG\|. 
\end{eqnarray*}
Now we only need to find out $\|\CZ_{i,2}\|_{\psi_1}$ and $\sigma_0^2 = \|\sum_{i=1}^m \E(\CZ_{i,2}\CZ_{i,2}^*)\|$ in order to bound $\|\sum_{i=1}^m \CZ_{i,2}\|$.

\paragraph{Estimation of $\|\CZ_{i,2} \|_{\psi_1}$}
Denote 
$\bz_i :=  (\I_p\otimes (\ba_i\ba_i^* - \I)) \bv$ and there holds, 
\begin{eqnarray}
\|\CZ_{i,2} \| 
& \leq & \frac{1}{m} \max\{\|\ba_i\|^2, 1\} + \frac{1}{\sqrt{mp}} \|\bz_{i}\be_i^*\| \nonumber = \frac{1}{m} \max\{\|\ba_i\|^2, 1\} + \frac{1}{\sqrt{mp}} \|\bz_{i}\| \label{eq:CZi2}.
\end{eqnarray}
For $\|\ba_{i}\|^2$, its $\psi_1$-norm is bounded by $C_0 n$ due to Lemma~\ref{lem:psi1} and we only need to know $\|\bz_i\|_{\psi_1}:$
\begin{eqnarray*}
\| \bz_i \| 
& = & \|(\I_p\otimes (\ba_i\ba_i^* - \I_n)) \bv\|  \leq  \|\I_p\otimes (\ba_i\ba_i^*) \bv \| + \|\bv\|  = \sqrt{ \sum_{l=1}^p |\lag \ba_i, \bv_l\rag|^2}  \| \ba_i\| +  \sqrt{p}.
\end{eqnarray*}
Let $u = \sum_{l=1}^p |\lag \ba_i, \bv_l\rag|^2 = \sum_{l=1}^p\lambda_l \xi_{i,l}^2 $ and $v = \|\ba_i\|^2 $.  The $\psi_1$-norm of $\|\I_p \otimes (\ba_i\ba_i^*)\bv\|$ satifies
\begin{eqnarray*}
(\sqrt{uv})_{\psi_1}
& = & \sup_{q\geq 1} q^{-1}(\E(\sqrt{uv})^q)^{1/q} \leq  \sup_{q\geq 1} q^{-1}(\E u^q)^{1/2q}  (\E v^q)^{1/2q} \\
& \leq & \sqrt{ \sup_{q \geq 1} q^{-1} (\E u^q)^{1/q} \sup_{q \geq 1} q^{-1} (\E v^q)^{1/q} }  \leq  C_1\sqrt{np}
\end{eqnarray*}
where the second inequality follows from the Cauchy-Schwarz inequality, $\|u\|_{\psi_1} \leq C_1 p$, and $\|v\|_{\psi_1} \leq C_1n$.
Therefore, $\|\bz_i\|_{\psi_1} \leq C_2\sqrt{np}$ and there holds
\begin{equation}\label{eq:Rpsi-1}
\| \CZ_{i,2} \|_{\psi_1} \leq C_0\left(\frac{n}{m} + \frac{\sqrt{np}}{\sqrt{mp}}  \right) \leq C_0\left(\frac{n}{m} + \sqrt{\frac{n}{m}} \right).
\end{equation}

\paragraph{Estimation of $\sigma_0^2$} Note that
$\sigma_0^2 
:=  \left\| \sum_{i=1}^m  \E(\CZ_{i,2}\CZ_{i,2}^*) \right\|$.
Let $\BA_{i,1,1}$ and $\BA_{i,2,2}$ be the $(1,1)$-th and $(2,2)$-th block of $\CZ_{i,2}\CZ_{i,2}^*$ respectively, i.e., 
\begin{eqnarray}
\BA_{i,1,1} & = & 
\frac{1}{mp} \| (\I_p \otimes (\ba_i\ba_i^* - \I_n))\bv \|^2 \be_i\be_i^*, \nonumber \\
\BA_{i,2,2} & = &  
\frac{1}{mp} (\I_p\otimes (\ba_i\ba_i^* - \I_n)) \bv\bv^* (\I_p\otimes (\ba_i\ba_i^* - \I_n)) + \frac{1}{m^2} \I_p\otimes (\ba_i\ba_i^* - \I_n)^2 \nonumber \\
& = & \frac{1}{mp} [(\ba_i\ba_i^* - \I_n)\bv_k\bv_l^*(\ba_i\ba_i^* - \I_n) ]_{1\leq k\leq l\leq p} +  \frac{1}{m^2} \I_p\otimes (\ba_i\ba_i^* - \I_n)^2.  \label{eq:Ai22}
\end{eqnarray}
Since $\CZ_{i,2}\CZ_{i,2}^*$ is a positive semi-definite matrix, Lemma~\ref{lem:pos} implies 
\begin{equation}\label{eq:sigma0A}
\sigma_0^2 \leq 2\left\| \sum_{i=1}^m \E
\begin{bmatrix}
\BA_{i,1,1} & \bzero \\
\bzero & \BA_{i,2,2}
\end{bmatrix}
\right\|.
\end{equation}
So we only need to compute $\E(\BA_{i,1,1})$ and $\E(\BA_{i,2,2})$.
\begin{equation*}
\E\| (\I_p \otimes (\ba_i\ba_i^* - \I_n))\bv \|^2 = \sum_{l=1}^p \E\| (\ba_i\ba_i^* - \I_n)\bv_l \|^2 = n\sum_{l=1}^p \|\bv_l\|^2 = np
\end{equation*}
where $\E (\ba_i\ba_i^* - \I_n)^2 = n\I_n.$ Now we have $\E \BA_{i,1,1} = \frac{n}{m}  \be_i\be_i^*.$
For $\E(\BA_{i,2,2}),$ note that
\begin{equation*}
\E((\ba_i\ba_i^* - \I_n)\bv_k\bv_l^*(\ba_i\ba_i^* - \I_n)) = \Tr(\bv_k\bv_l^*) \I_n = \lag \bv_l, \bv_k\rag \I_n = \overline{G}_{k,l}\I_n
\end{equation*}
which follows from~\eqref{eq:gauss-quad}.
By~\eqref{eq:Ai22},~\eqref{eq:sigma0A} and Lemma~\ref{lem:pos}, there holds,
\begin{equation*}
\sigma_0^2 : \leq 2\left\| \sum_{i=1}^m\begin{bmatrix}
\frac{n}{m}\be_i\be_i^* & \bzero \\
\bzero & \frac{1}{mp} \overline{\BG}\otimes \I_n + \frac{n}{m^2}\I_{np}
\end{bmatrix} \right\| 
\leq 2\left\|
\begin{bmatrix}
\frac{n}{m} & \bzero \\
\bzero &  \frac{1}{p}\overline{\BG}\otimes \I_n + \frac{n}{m}\I_{np}
\end{bmatrix}
\right\| \leq 2\left( \frac{\|\BG\|}{p} + \frac{n}{m}\right).
\end{equation*}

One lower bound of $\sigma^2_0$ is
$\sigma_0^2 \geq \left\|\sum_{i=1}^m \E(\BA_{i,1,1})\right\| = \frac{n}{m}$ because each $\E(\CZ_{i,2}\CZ_{i,2}^*)$ is positive semi-definite. Therefore, we have
$\log\left( \frac{\sqrt{m}R}{\sigma_0}\right) \leq \log\left( 2C_0 \frac{\sqrt{n}}{\sqrt{n/m}} \right) \leq \log(2C_0 \sqrt{m})$
where $R : = \max_{1\leq i\leq m}\|\CZ_{i,2}\|_{\psi_1} \leq C_0(\frac{n}{m} + \sqrt{\frac{n}{m}})$ and $m \geq n$.

Applying~\eqref{thm:bern2} to $\sum_{i=1}^m \CZ_{i,2}$ with~\eqref{eq:sigma0A} and~\eqref{eq:Rpsi-1}, we have
\begin{eqnarray}
\left\|\sum_{i=1}^m \CZ_{i,2} \right\| & \leq & C_0\max\Big\{ \left(\frac{n}{m} + \sqrt{\frac{n}{m}}\right)(t + \log(np + m) )\log(np + m), \nonumber \\
&& \sqrt{\left(\frac{\|\BG\|}{p} + \frac{n }{m} \right)(\log(np + m) + t)}  \Big\} \label{eq:czi2}
\end{eqnarray}
with probability $1 - e^{-t}$.
By combining~\eqref{eq:czi2} with~\eqref{eq:czi1} and letting $t = \gamma \log(np + m)$, we have
$\left\|\sum_{i=1}^m \BZ_i\BZ_i^* - \BC\right\| \leq \frac{1}{2}$
with probability $1 - 2m(np + m)^{-\gamma}$ if 
\begin{equation*}
C_0\frac{\sqrt{\gamma\log(np +m)}\|\BG\|_F}{p} \leq \frac{1}{4}, \quad C_0(\gamma + 1)\left(\frac{\|\BG\|}{p} + \frac{n}{m} \right)\log^2(np + m) \leq \frac{1}{16}
\end{equation*}
or equivalently, 
$C_0\left(\max\left\{\frac{\|\BG\|}{p}, \frac{\|\BG\|_F^2}{p^2} \right\} + \frac{n}{m} \right)\log^2(np + m) \leq \frac{1}{16(\gamma + 1)}.$
\end{proof}

\subsection{Proof of the spectral method}\label{ss:svd-bilinear}
We provide the proof of the spectral method proposed in Section~\ref{ss:svd}.  The proof follows two steps: i) we provide an error bound for the recovery under noise by using singular value/vector perturbation. The error bound involves the second smallest singular value of $\MS_0$ and the noise strength $\|\delta \MS\|;$ ii) we give a lower bound for $\sigma_2(\MS_0)$, which is achieved with the help of Proposition~\ref{prop:main1},~\ref{prop:main2} and~\ref{prop:main3} respectively for three different models.


The first result is a variant of perturbation theory for the singular vector corresponding to the smallest singular value. A more general version can be found in~\cite{Wedin72,Stewart90}.
\begin{lemma}\label{lem:svd2}
Suppose $\MS =\MS_0 + \delta\MS$ where $\MS_0$ is the noiseless part of $\MS$ and $\delta\MS$ is the noisy part.
Then
\begin{equation*}
\min_{\alpha_0\in\CC}\frac{\|\alpha_0\hat{\bz} - \bz_0\|}{\|\bz_0\|}  = \left\| \frac{(\I - \hat{\bz}\hat{\bz}^*)\bz_0}{\|\bz_0\|}\right\|\leq \frac{\|\delta \MS\|}{[ \sigma_2(\MS_0) - \|\delta\MS\| ]_+}.
\end{equation*}
where $\sigma_2(\MS_0)$ is the second smallest singular value of $\MS_0$, $\bz_0$ satisfies $\MS_0\bz_0 = \bzero$, and $\hat{\bz}$ is the right singular vector with respect to the smallest singular value of $\MS$, i.e., the solution to~\eqref{prog:svdmin}.
\end{lemma}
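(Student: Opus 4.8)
The plan is to reduce everything to an orthogonal-decomposition estimate carried out in the right singular basis of the \emph{perturbed} matrix $\MS$, rather than that of $\MS_0$. First I would dispose of the equality: since $\hat{\bz}$ is a unit vector, minimizing $\|\alpha_0\hat{\bz}-\bz_0\|$ over $\alpha_0\in\CC$ is an elementary least-squares problem whose optimum is attained at $\alpha_0=\hat{\bz}^*\bz_0$, the coefficient of the orthogonal projection of $\bz_0$ onto $\operatorname{span}\{\hat{\bz}\}$. The residual is then exactly $(\I-\hat{\bz}\hat{\bz}^*)\bz_0$, which gives the stated identity, and dividing by $\|\bz_0\|$ yields the normalized form.

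For the inequality, the key observation is that $\hat{\bz}$ is the right singular vector of $\MS$ associated with its smallest singular value. Let $\{\hat{\bz}=\hat{\bv}_1,\hat{\bv}_2,\dots\}$ be an orthonormal system of right singular vectors of $\MS$ with singular values $\sigma_1(\MS)\le\sigma_2(\MS)\le\cdots$. Expanding $\bz_0=\sum_j c_j\hat{\bv}_j$, the projection $(\I-\hat{\bz}\hat{\bz}^*)\bz_0$ is precisely $\sum_{j\ge 2}c_j\hat{\bv}_j$, so that $\|(\I-\hat{\bz}\hat{\bz}^*)\bz_0\|^2=\sum_{j\ge2}|c_j|^2$. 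Using orthogonality of the left singular vectors,
\begin{equation*}
\|\MS\bz_0\|^2=\sum_j\sigma_j(\MS)^2|c_j|^2\ge\sigma_2(\MS)^2\sum_{j\ge2}|c_j|^2=\sigma_2(\MS)^2\,\|(\I-\hat{\bz}\hat{\bz}^*)\bz_0\|^2,
\end{equation*}
which isolates the quantity we want on the right-hand side.

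It remains to control the two factors. Since $\MS_0\bz_0=\bzero$, we have $\MS\bz_0=(\MS_0+\delta\MS)\bz_0=\delta\MS\bz_0$, whence $\|\MS\bz_0\|\le\|\delta\MS\|\,\|\bz_0\|$. For the denominator I would invoke Weyl's inequality for singular-value perturbation (as in Theorem~1 of~\cite{Stewart90}), giving $\sigma_2(\MS)\ge\sigma_2(\MS_0)-\|\delta\MS\|$. Combining the three estimates produces
\begin{equation*}
\frac{\|(\I-\hat{\bz}\hat{\bz}^*)\bz_0\|}{\|\bz_0\|}\le\frac{\|\delta\MS\|}{\sigma_2(\MS)}\le\frac{\|\delta\MS\|}{\sigma_2(\MS_0)-\|\delta\MS\|},
\end{equation*}
and the nonnegativity convention $[\,\cdot\,]_+$ handles the degenerate regime $\|\delta\MS\|\ge\sigma_2(\MS_0)$, where the right-hand side is read as $+\infty$ and the bound is vacuous.

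I expect the only genuinely delicate point to be the step bounding $\|\MS\bz_0\|$ from below by $\sigma_2(\MS)\,\|(\I-\hat{\bz}\hat{\bz}^*)\bz_0\|$: it is essential to expand in the singular basis of $\MS$ (not $\MS_0$) so that removing the $\hat{\bz}$-component leaves a vector supported on singular values $\ge\sigma_2(\MS)$. This choice is what produces the sharp constant $1$ in the numerator, rather than a factor $2$ that a naive triangle-inequality split $\MS_0\hat{\bz}=\MS\hat{\bz}-\delta\MS\hat{\bz}$ would give; it also gracefully accommodates a possibly degenerate smallest singular value of $\MS$, since the projection still lands in the span of singular vectors indexed by $j\ge2$. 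Once this is set up, the remaining ingredients are the one-line identity $\MS\bz_0=\delta\MS\bz_0$ and a single application of Weyl's inequality.
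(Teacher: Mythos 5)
Your proof is correct and follows essentially the same route as the paper's: both arguments work in the singular basis of the perturbed matrix $\MS$, exploit $\MS\bz_0=\delta\MS\,\bz_0$ (a consequence of $\MS_0\bz_0=\bzero$) to gain the factor $1/\sigma_2(\MS)$, and finish with Weyl's inequality $\sigma_2(\MS)\geq \sigma_2(\MS_0)-\|\delta\MS\|$. The only difference is presentational: where you expand $\bz_0$ in the right singular vectors of $\MS$ and apply a Parseval-type bound, the paper packages the identical step as the pseudo-inverse identity $\I-\hat{\bz}\hat{\bz}^*=\BB_1^*(\BB_1^*)^{\dagger}$ together with $\|(\BB_1^*)^{\dagger}\|=1/\sigma_2(\MS)$.
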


\begin{proof}
By definition, there holds $\MS_0\bz_0 = \bzero$
where $\bz_0 = \begin{bmatrix} \bs_0 \\ \bx_0 \end{bmatrix}$ is the ground truth and also the right singular vector corresponding to the smallest singular value. Without loss of generality, we assume $\|\bz_0\| = \|\hat{\bz}\|=1$.
For $\MS$, we denote its singular value decomposition as 
\begin{equation*}
\MS := \underbrace{\MS - {\sigma}_1(\MS) \hat{\bu}\hat{\bz}^*  }_{\BB_1} + \underbrace{{\sigma}_1(\MS) \hat{\bu}\hat{\bz}^*}_{\BB_0}
\end{equation*}
where $\sigma_1(\MS), \hat{\bu}$ and $\hat{\bz}$ are the smallest singular value/vectors of $\MS$. By definition, the vector $\hat{\bz}$ is also the solution to~\eqref{prog:svdmin}.

First note that $\I - \hat{\bz}\hat{\bz}^* =   \BB_1^* (\BB_1^*)^{\dagger}$ where $(\BB_1^*)^{\dagger}$ is the pseudo-inverse of $\BB_1$.
Therefore, we have
\begin{eqnarray*}
\left\|\frac{(\I - \hat{\bz}\hat{\bz}^*)\bz_0}{\|\bz_0\|}\right\|
& = & \|   \bz_0\bz_0^* \BB_1^* (\BB_1^*)^{\dagger} \| =  \| \bz_0\bz_0^* (\MS^*_0+ (\delta \MS)^* - (\BB_0)^*)(\BB_1^*)^{\dagger} \| \\
& = & \|  \bz_0\bz_0^* ( \delta \MS )^* (\BB_1^*)^{\dagger} \| \leq \|\delta \MS\| \| (\BB_1^*)^{\dagger}\| \leq \frac{\|\delta \A\|}{\sigma_2(\MS)} 
\leq \frac{\|\delta \MS\|}{[\sigma_2(\MS_0) - \|\delta\MS\|]_+}
\end{eqnarray*}
where $\MS_0\bz_0 = 0$ and $(\BB_0)^*(\BB_1^*)^{\dagger} = 0$. And the last inequality follows from $|\sigma_2(\MS_0) - \sigma_2(\MS)| \leq \|\delta\MS\|$ and $\sigma_2(\MS) \geq [\sigma_2(\MS_0) - \|\delta \MS\|]_+.$
\end{proof}

The second smallest singular value $\sigma_2(\MS_0)$ is estimated by using the proposition~\ref{prop:main1},~\ref{prop:main2} and~\ref{prop:main3} and the following fact:
\begin{lemma}\label{lem:sm2-svd}
Suppose $\BP$ is an invertible matrix, and $\BA$ is a positive semi-definite matrix with the dimension of its null space equal to 1. Then the second smallest singular value of $\BP\BA\BP^*$ is nonzero and satisfies
\begin{equation*}
\sigma_2 (\BP\BA\BP^*) \geq \sigma_2(\BA) \sigma_1^2(\BP)
\end{equation*}
where $\sigma_1(\cdot)$ and $\sigma_2(\cdot)$ denotes the smallest and second smallest singular values respectively.
\end{lemma}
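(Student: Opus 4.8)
The plan is to reduce the statement to a clean multiplicative inequality for singular values by passing through the positive semidefinite square root of $\BA$. First I would write $\BA = \BA^{1/2}\BA^{1/2}$, where $\BA^{1/2}$ is the (unique) PSD square root, and set $\BB := \BP\BA^{1/2}$ so that $\BP\BA\BP^* = \BB\BB^*$. Since $\BP$ is invertible, $\Null(\BB) = \Null(\BA^{1/2}) = \Null(\BA)$, which is one-dimensional by hypothesis; hence $\BB$ has exactly one zero singular value, and both $\BB\BB^*$ and $\BP\BA\BP^*$ inherit a one-dimensional kernel. This already shows that the second smallest singular value of $\BP\BA\BP^*$ is strictly positive, settling the first assertion.

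Next I would exploit that $\BP\BA\BP^* = \BB\BB^*$ is PSD, so its singular values coincide with its eigenvalues, namely the squares of the singular values of $\BB$ with order preserved; in particular $\sigma_2(\BP\BA\BP^*) = \sigma_2(\BB)^2$, where on both sides $\sigma_2$ denotes the second smallest singular value. It then remains to lower bound $\sigma_2(\BB) = \sigma_2(\BP\BA^{1/2})$. The key step is the bound $\sigma_2(\BX\BY)\le \|\BX\|\,\sigma_2(\BY)$, valid for the second smallest singular value and proved directly from the Courant--Fischer min--max characterization together with $\|\BX\BY\bv\|\le\|\BX\|\,\|\BY\bv\|$. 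Applying this with $\BX=\BP^{-1}$ and $\BY=\BP\BA^{1/2}$, so that $\BX\BY=\BA^{1/2}$, gives $\sigma_2(\BA^{1/2})\le\|\BP^{-1}\|\,\sigma_2(\BP\BA^{1/2})$, and since $\|\BP^{-1}\|=1/\sigma_1(\BP)$ this rearranges to $\sigma_2(\BP\BA^{1/2})\ge\sigma_1(\BP)\,\sigma_2(\BA^{1/2})$.

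Finally I would assemble the pieces: squaring the last inequality and using $\sigma_2(\BA^{1/2})^2=\sigma_2(\BA)$ (again because the eigenvalues of $\BA$ are the squares of those of $\BA^{1/2}$, with order preserved) yields
\[
\sigma_2(\BP\BA\BP^*)=\sigma_2(\BP\BA^{1/2})^2\ge \sigma_1^2(\BP)\,\sigma_2(\BA^{1/2})^2=\sigma_2(\BA)\,\sigma_1^2(\BP),
\]
as claimed. The only real subtlety, and the step I expect to be the main obstacle to write carefully, is the index bookkeeping in the product inequality $\sigma_2(\BX\BY)\le\|\BX\|\,\sigma_2(\BY)$: one must verify that ``second smallest'' corresponds to a fixed position in the decreasing min--max ordering for all three matrices $\BA^{1/2}$, $\BP\BA^{1/2}$, and $\BP^{-1}(\BP\BA^{1/2})$, which holds since $\BP$ is square and invertible, so all products have the same size. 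An alternative route that avoids square roots altogether is Ostrowski's congruence theorem, which asserts that the eigenvalues of $\BP\BA\BP^*$ equal $\theta_k\,\lambda_k(\BA)$ with $\theta_k\in[\sigma_1^2(\BP),\|\BP\|^2]$; reading off the second smallest eigenvalue gives the bound at once.
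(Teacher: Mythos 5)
Your proof is correct, but it takes a genuinely different route from the paper's. The paper argues through pseudo-inverses: since $\BA$ (hence $\BP\BA\BP^*$) is rank-one deficient, $\|(\BP\BA\BP^*)^{\dagger}\| = 1/\sigma_2(\BP\BA\BP^*)$, and this norm is then bounded by $\|\BA^{\dagger}\|\,\|\BP^{-1}\|^2 = \frac{1}{\sigma_2(\BA)\sigma_1^2(\BP)}$ after writing the pseudo-inverse of the congruence as $(\BP^{-1})^*\BA^{\dagger}\BP^{-1}$. Your factorization $\BP\BA\BP^* = \BB\BB^*$ with $\BB = \BP\BA^{1/2}$, combined with the Courant--Fischer product bound $\sigma_2(\BX\BY)\leq \|\BX\|\,\sigma_2(\BY)$ applied to $\BA^{1/2} = \BP^{-1}(\BP\BA^{1/2})$, reaches the same conclusion without pseudo-inverses at all. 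This is more than a stylistic difference: the identity $(\BP\BA\BP^*)^{\dagger} = (\BP^{-1})^*\BA^{\dagger}\BP^{-1}$, on which the paper's equality step implicitly relies, is actually false for general invertible (non-unitary) $\BP$ --- the matrix $(\BP^{-1})^*\BA^{\dagger}\BP^{-1}$ satisfies only two of the four Moore--Penrose conditions, and simple $2\times 2$ examples (e.g.\ $\BA = \diag(1,0)$ and $\BP$ a unit upper-triangular shear) make the two operator norms differ. The paper's conclusion survives because the Moore--Penrose inverse has minimal operator norm among all generalized inverses, i.e.\ all $\BX$ with $\BM\BX\BM = \BM$ for $\BM := \BP\BA\BP^*$, so the equality can be weakened to the inequality that is actually needed; your square-root argument sidesteps this subtlety entirely and is airtight as written. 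It also generalizes verbatim to the $k$-th smallest singular value, and the Ostrowski congruence theorem you mention is indeed the cleanest one-line packaging of the same fact.
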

\begin{proof}
The proof is very straightforward.
Note that
$\| (\BP\BA\BP^*)^{\dagger} \| = \frac{1}{\sigma_2(\BP\BA\BP^*)}$
since $\BA$ is rank-1 deficient.
Also from the property of pseudo inverse, there holds
\begin{equation*}
\| (\BP\BA\BP^*)^{\dagger} \|  = \| (\BP^{-1})^*\BA^{\dagger}\BP^{-1}\| \leq \|\BA^{\dagger}\| \|\BP^{-1}\|^2 = \frac{1}{\sigma_2(\BA)} \frac{1}{\sigma_1^2(\BP)}.
\end{equation*}
Hence, $\sigma_2 (\BP\BA\BP^*) \geq \sigma_2(\BA) \sigma_1^2(\BP).$
\end{proof}


\begin{proof}{ [Proof of Theorem~\ref{thm:svd}] }
Combined with Lemma~\ref{lem:svd2}, it suffices to estimate the lower bound of the second smallest singular value of $\MS_0$ for the proposed three models.
We start with (a). From~\eqref{def:model1-Zl},
we know that
$\MS_0^*\MS_0 = \sum_{l=1}^p \BP\BZ_l\BZ_l^*\BP^* $
where
\begin{equation*}
\BZ_l := 
\begin{bmatrix}
\BLam_l \\
 -\frac{1}{\sqrt{m}}\BA_l^* \\
\end{bmatrix} \in \CC^{ (m + n)\times m},
\quad 
\BP := 
\begin{bmatrix}
\BD^* \|\bx\|& \bzero \\
\bzero & \sqrt{m}\I_n
\end{bmatrix}
\in \CC^{(m + n)\times (m + n)}.
\end{equation*}
From Proposition~\ref{prop:main1}, we know that the second smallest eigenvalue of $\sum_{l=1}^p\BZ_l\BZ_l^*$ is at least $\frac{p}{2}$ and it is also rank-1 deficient. Applying Lemma~\ref{lem:sm2-svd} gives
\begin{equation*}
\sigma_2(\MS_0^*\MS_0) \geq \sigma_2\left(\sum_{l=1}^p\BZ_l\BZ_l^*\right) \sigma_1(\BP\BP^*)  \geq \frac{p}{2} (\min\{ \sqrt{m}, d_{\min}\|\bx\| \})^2
\end{equation*}
and hence
$\sigma_2(\MS_0) \geq \sqrt{\frac{p}{2}} \min\{ \sqrt{m}, d_{\min}\|\bx\| \}$.

\vskip0.5cm
Since (b) and (c) are exactly the same, it suffices to show (b).
From~\eqref{eq:A0-2}, we know that $\MS_0$ can be factorized into 
$\mathcal{A}_0 := \BQ\BZ\BP$
and Proposition~\ref{prop:main2} implies
\begin{equation*}
\left\| \BZ^*\BZ - \BC\right\|   \leq \frac{1}{2}, \quad \BC: = \E(\BZ^*\BZ)= \begin{bmatrix}
\I_m & -\frac{1}{\sqrt{mp}}\bone_m\bv^* \\
-\frac{1}{\sqrt{mp}}\bv\bone_m^* & \I_{np}
\end{bmatrix}.
\end{equation*}
Therefore, $\MS_0^*\MS_0 = \BQ\BZ^*\BP\BP^*\BZ\BQ^*$ and $\sigma_2(\BZ^*\BZ) \geq \frac{1}{2}$.
Applying Lemma~\ref{lem:sm2-svd} leads to
\begin{align*}
\sigma_2(\MS^*_0\MS_0) & \geq \sigma_2(\BZ^*\BP\BP^*\BZ) \sigma_1^2(\BQ) \geq \sigma_2(\BZ^*\BZ) \sigma_1^2(\BP)\sigma_1^2(\BQ) \\
& \geq \frac{1}{2}x_{\min}^2 \min\left\{\sqrt{p}d_{\min}, \frac{\sqrt{m}}{x_{\max}}\right\}^2
\end{align*}
where $x_{\min} = \min\{\|\bx_l\|\}$ and $x_{\max} = \max\{\|\bx_l\|\}$.
\end{proof}

\section*{Appendix}

\begin{lemma}\label{lem:pos}
For any Hermitian positive semi-definite matrix $\BS$ with $
\BS := 
\begin{bmatrix}
\BS_{11} & \BS_{12} \\
\BS_{12}^* & \BS_{22}
\end{bmatrix}
$, there holds,
\begin{equation*}
\begin{bmatrix}
\BS_{11} & \BS_{12} \\
\BS_{12}^* & \BS_{22}
\end{bmatrix}
\preceq 
2
\begin{bmatrix}
\BS_{11} & \bzero \\
\bzero & \BS_{22}
\end{bmatrix}.
\end{equation*}
In other words, $\|\BS\| \leq 2\max\{\|\BS_{11}\|, \|\BS_{22}\|\}$.
\end{lemma}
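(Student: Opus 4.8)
The plan is to exhibit the gap matrix $2\diag(\BS_{11},\BS_{22}) - \BS$ as a congruence of $\BS$ itself, so that positive semidefiniteness is inherited for free. Define the Hermitian involution
\[
\BJ := \begin{bmatrix} \BI & \bzero \\ \bzero & -\BI \end{bmatrix},
\]
where the two identity blocks are sized to match $\BS_{11}$ and $\BS_{22}$. A direct block multiplication shows that conjugating $\BS$ by $\BJ$ merely flips the sign of the off-diagonal blocks:
\[
\BJ\BS\BJ = \begin{bmatrix} \BS_{11} & -\BS_{12} \\ -\BS_{12}^* & \BS_{22} \end{bmatrix}.
\]

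First I would observe that $\BJ$ is nonsingular (indeed $\BJ = \BJ^* = \BJ^{-1}$), so the congruence $\BS \mapsto \BJ\BS\BJ^*$ preserves membership in the positive semidefinite cone; hence $\BJ\BS\BJ \succeq \bzero$. Adding this to $\BS$ cancels the off-diagonal blocks and doubles the diagonal ones,
\[
\BS + \BJ\BS\BJ = 2\begin{bmatrix} \BS_{11} & \bzero \\ \bzero & \BS_{22} \end{bmatrix},
\]
and rearranging yields $2\diag(\BS_{11},\BS_{22}) - \BS = \BJ\BS\BJ \succeq \bzero$, which is exactly the claimed Loewner inequality $\BS \preceq 2\diag(\BS_{11},\BS_{22})$.

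For the norm statement I would invoke monotonicity of $\lambda_{\max}$ under the Loewner order. Since $\BS$ and $2\diag(\BS_{11},\BS_{22})$ are both positive semidefinite and $\BS \preceq 2\diag(\BS_{11},\BS_{22})$, we get $\|\BS\| = \lambda_{\max}(\BS) \leq \lambda_{\max}\!\big(2\diag(\BS_{11},\BS_{22})\big) = 2\,\|\diag(\BS_{11},\BS_{22})\|$. The operator norm of a block-diagonal matrix is the larger of the block norms, so $\|\diag(\BS_{11},\BS_{22})\| = \max\{\|\BS_{11}\|,\|\BS_{22}\|\}$, giving $\|\BS\| \leq 2\max\{\|\BS_{11}\|,\|\BS_{22}\|\}$.

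There is no serious obstacle here; the only points requiring care are that congruence by the involution $\BJ$ preserves the PSD cone (true because $\BJ$ is invertible) and that $\|\cdot\|$ may be identified with $\lambda_{\max}$ precisely because every matrix involved is PSD. If one prefers to avoid the congruence language, an equivalent route is to test the inequality against an arbitrary vector $\bu$ split as $(\bu_1,\bu_2)$ and reduce it to $2\,\Real(\bu_1^*\BS_{12}\bu_2) \leq \bu_1^*\BS_{11}\bu_1 + \bu_2^*\BS_{22}\bu_2$, which follows from $\bv^*\BS\bv \geq 0$ applied to $\bv = (\bu_1,-\bu_2)$; this is merely the scalar shadow of the same sign-flip identity.
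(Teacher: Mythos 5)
Your proof is correct: the congruence route (conjugating by $\BJ = \diag(\BI,-\BI)$, which is Hermitian and invertible, so that $2\diag(\BS_{11},\BS_{22}) - \BS = \BJ\BS\BJ \succeq \bzero$) is valid, the norm bound then follows as you say from monotonicity of $\lambda_{\max}$ under the Loewner order together with the fact that $\BS$, $\BS_{11}$, $\BS_{22}$ are all positive semi-definite (so operator norm equals largest eigenvalue), and the vector-testing variant you sketch at the end is an equally sound alternative. Note that the paper states Lemma~\ref{lem:pos} in its Appendix without any proof, so there is no argument of the paper's to compare against; yours is the standard argument for this standard fact.
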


\begin{lemma}{Corollary 7.21 in~\cite{FR12}}\label{lem:rade}.
Let $\ba\in \CC^M$ and $\beps = (\eps_1, \cdots, \eps_M)$ be a Rademacher sequence, then for $u > 0$, 
\begin{equation*}
\Pr\left( \left| \sum_{j=1}^M \eps_ja_j \right| \geq \|\ba\| u \right) \leq 2\exp(-u^2/2).
\end{equation*}
\end{lemma}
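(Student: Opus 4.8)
The plan is to prove the inequality first for \emph{real} coefficients by a Chernoff / moment-generating-function argument, and then to reduce the complex case to it. Suppose first that $\ba\in\RR^M$ and write $S=\sum_{j=1}^M\eps_j a_j$. By independence of the signs, $\E e^{\lambda S}=\prod_{j=1}^M\cosh(\lambda a_j)$ for every real $\lambda$, and Hoeffding's elementary bound $\cosh(x)\le e^{x^2/2}$ gives $\E e^{\lambda S}\le e^{\lambda^2\|\ba\|^2/2}$. Markov's inequality then yields $\Pr(S\ge\|\ba\|u)\le e^{-\lambda\|\ba\|u+\lambda^2\|\ba\|^2/2}$, and optimizing over $\lambda$ (the choice $\lambda=u/\|\ba\|$) produces the one-sided estimate $e^{-u^2/2}$. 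Since $S$ and $-S$ are equidistributed, a union bound over the two tails gives $\Pr(|S|\ge\|\ba\|u)\le 2e^{-u^2/2}$, which is the claim in the real case.

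For $\ba\in\CC^M$ the key observation is that both $|S|$ and $\|\ba\|$ are invariant under the rotation $\ba\mapsto e^{\mi\psi}\ba$. For every fixed $\phi$ with $|\phi|=1$, the quantity $\Real(\bar\phi S)=\sum_{j=1}^M\eps_j\,\Real(\bar\phi a_j)$ is a \emph{real} Rademacher sum whose coefficient vector has Euclidean norm at most $\|\ba\|$, because $|\Real(\bar\phi a_j)|^2\le|a_j|^2$ and hence $\sum_j(\Real(\bar\phi a_j))^2\le\|\ba\|^2$. Applying the real case to each such projection shows that every real-linear functional of the planar vector $(\Real S,\operatorname{Im}S)$ is subgaussian with parameter $\|\ba\|^2$, i.e. $\Pr(\Real(\bar\phi S)\ge\|\ba\|u)\le e^{-u^2/2}$ uniformly in $\phi$.

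It then remains to convert these directional estimates into a tail bound for the modulus $|S|=\|(\Real S,\operatorname{Im}S)\|$. A direct route is to note that $\{|S|\ge\|\ba\|u\}\subseteq\{|\Real S|\ge\|\ba\|u\cos\alpha\}\cup\{|\operatorname{Im}S|\ge\|\ba\|u\sin\alpha\}$ for any angle $\alpha$, and to take $\cos^2\alpha$ and $\sin^2\alpha$ proportional to the coefficient budgets $\sum_j(\Real a_j)^2$ and $\sum_j(\operatorname{Im}a_j)^2$; the two-sided real bound applied to each piece then gives $\Pr(|S|\ge\|\ba\|u)\le 4e^{-u^2/2}$. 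The main obstacle is precisely that this splitting loses a factor of two relative to the asserted constant, since it controls the two tails of each coordinate separately. Recovering the sharp constant $2$ requires a more careful argument---for instance a Chernoff bound applied directly to $|S|$ after averaging over the phase $\phi$, using that the extremal case is the real one---which is exactly what is carried out in Corollary~7.21 of~\cite{FR12}. For our purposes only the subgaussian \emph{form} of the tail is used, so the larger constant would in any case be harmless; we record the sharp statement as stated there.
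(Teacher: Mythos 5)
First, a point of reference: the paper itself contains no proof of Lemma~\ref{lem:rade}. The lemma is quoted, with attribution, as Corollary~7.21 of~\cite{FR12} and is used as a black box, so your proposal is being compared against a citation rather than against an argument in the paper. Within your proposal, the real-coefficient step (the $\cosh(x)\le e^{x^2/2}$ moment bound, Markov, optimization in $\lambda$, and symmetry of the two tails) is correct and standard. The genuine gap is exactly where you flag it: the complex case. The coordinate splitting with budgets $\|\Real(\ba)\|$ and $\|\operatorname{Im}(\ba)\|$ applies the \emph{two-sided} real bound to each coordinate and therefore yields $4\exp(-u^2/2)$; moreover this loss is intrinsic to the method, since when $\|\Real(\ba)\|=\|\operatorname{Im}(\ba)\|$ no choice of splitting angle can bring such a union bound below $4\exp(-u^2/2)$. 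Your proposed repair---invoking ``Corollary 7.21 of~\cite{FR12}'' for the sharp constant---is circular, because that corollary \emph{is} the statement under proof; so, as written, the lemma with constant $2$ is not established, only the weaker version with constant $4$. (A minor additional caveat: when $\Real(\ba)=\bzero$ or $\operatorname{Im}(\ba)=\bzero$, one of your two events has threshold zero and the real bound cannot be applied to it; that degenerate case must be treated separately, though it is then immediate from the real case.)

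That said, the shortfall is immaterial to the paper's results. Lemma~\ref{lem:rade} enters only through the union bound~\eqref{eq:BLam}, in the proofs of Proposition~\ref{prop:main1}(b),(c) and Proposition~\ref{prop:main2}(b), where replacing $2\exp(-u^2/2)$ by $4\exp(-u^2/2)$ merely turns the failure probability $2(mp)^{-\gamma+1}$ into $4(mp)^{-\gamma+1}$ and leaves every theorem intact up to absolute constants; so the version you actually prove would serve all of the paper's purposes. If a self-contained proof of the constant $2$ is genuinely wanted, be aware that it requires handling $|S|$ directly rather than through its real and imaginary parts, and it is not the one-line fix your closing sentence suggests: the natural alternatives (bounding $\E\exp(\lambda|S|^2)$ by Gaussian smoothing, or averaging $\exp\left(\lambda\Real(e^{-\mi\theta}S)\right)$ over the phase $\theta$) pick up polynomial-in-$u$ factors precisely in the extremal, real-coefficient case, and hence also fail to give $2\exp(-u^2/2)$ uniformly.
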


For Gaussian and random Hadamard cases, the concentration inequalities are slightly different. The following theorem is mostly due to Theorem 6.1 in \cite{tropp12user} as well as due to~\cite{Ver12}.

\begin{theorem}\label{thm:bern1}
Consider a finite sequence of $\CZ_l$ of independent centered random matrices with dimension $M_1\times M_2$. We assume that $\|\CZ_l \| \leq R$ and introduce the random matrix $\mathcal{S}: = \sum_{l=1}^L \CZ_l.$
Compute the variance parameter
\begin{equation}\label{sigmasq}
\sigma_0^2 = \max\Big\{ \| \sum_{l=1}^L \E(\CZ_l\CZ_l^*)\|, \| \sum_{l=1}^L \E(\CZ_l^* \CZ_l)\| \Big\},
\end{equation}
then for all $t \geq 0$
\begin{equation}\label{thm:bern}
\|\mathcal{S}\| \leq C_0 \max\{ \sigma_0 \sqrt{t + \log(M_1 + M_2)}, R(t + \log(M_1 + M_2)) \}
\end{equation}
with probability at least $1 - e^{-t}$ where $C_0$ is an absolute constant. 

\end{theorem}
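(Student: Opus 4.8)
The plan is to prove this rectangular matrix Bernstein inequality by the standard matrix Laplace transform method, after first reducing to the Hermitian case via the self-adjoint dilation. I would replace each $\CZ_l$ by its dilation $\mathcal{H}(\CZ_l) := \begin{bmatrix} \bzero & \CZ_l \\ \CZ_l^* & \bzero \end{bmatrix}$, an $(M_1+M_2)\times(M_1+M_2)$ Hermitian matrix. Since $\|\CZ_l\| = \|\mathcal{H}(\CZ_l)\|$ and $\mathcal{H}(\mathcal{S}) = \sum_{l=1}^L \mathcal{H}(\CZ_l)$, controlling $\|\mathcal{S}\|$ is equivalent to controlling $\lambda_{\max}(\mathcal{H}(\mathcal{S}))$. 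Crucially, $\mathcal{H}(\CZ_l)^2 = \diag(\CZ_l\CZ_l^*,\, \CZ_l^*\CZ_l)$, so $\big\|\sum_l \E\,\mathcal{H}(\CZ_l)^2\big\| = \max\{\|\sum_l \E\,\CZ_l\CZ_l^*\|,\ \|\sum_l \E\,\CZ_l^*\CZ_l\|\} = \sigma_0^2$, while each dilation inherits $\E\,\mathcal{H}(\CZ_l) = \bzero$ and $\|\mathcal{H}(\CZ_l)\| \leq R$. This places us exactly in the Hermitian bounded-summand setting.

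Next I would apply the matrix Chernoff bound: for any $\theta > 0$, $\Pr(\lambda_{\max}(\mathcal{H}(\mathcal{S})) \geq s) \leq e^{-\theta s}\,\E\,\tr\exp(\theta\,\mathcal{H}(\mathcal{S}))$. The decisive step, and the main obstacle, is to linearize the trace of the matrix exponential of a sum of independent terms. Here I would invoke Lieb's concavity theorem (in Tropp's formulation), which delivers the subadditivity of the matrix cumulant generating function, namely $\E\,\tr\exp\big(\sum_l \theta\,\mathcal{H}(\CZ_l)\big) \leq \tr\exp\big(\sum_l \log \E\, e^{\theta\,\mathcal{H}(\CZ_l)}\big)$. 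This is the one genuinely deep ingredient; everything downstream is elementary once it is available.

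With the cumulant bound in hand, I would estimate the individual matrix moment generating functions. Since each $\mathcal{H}(\CZ_l)$ is centered with $\|\mathcal{H}(\CZ_l)\| \leq R$, the scalar inequality $e^x \leq 1 + x + g(\theta)x^2$ for $|x|\leq R$, with $g(\theta) = (e^{\theta R} - \theta R - 1)/R^2$, transfers eigenvalue-wise to the operator bound $\E\, e^{\theta\,\mathcal{H}(\CZ_l)} \preceq \exp\!\big(g(\theta)\,\E\,\mathcal{H}(\CZ_l)^2\big)$. Taking logarithms, summing, and using $\tr\exp(\cdot) \leq (M_1+M_2)\exp(\lambda_{\max}(\cdot))$ together with $\|\sum_l \E\,\mathcal{H}(\CZ_l)^2\| = \sigma_0^2$ yields $\E\,\tr\exp(\theta\,\mathcal{H}(\mathcal{S})) \leq (M_1+M_2)\exp(g(\theta)\sigma_0^2)$. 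Optimizing the resulting bound $(M_1+M_2)\exp(-\theta s + g(\theta)\sigma_0^2)$ over $\theta>0$ and using $h(u) \geq \tfrac{u^2/2}{1+u/3}$ produces the classical tail $\Pr(\|\mathcal{S}\| \geq s) \leq (M_1+M_2)\exp\!\big(-\tfrac{s^2/2}{\sigma_0^2 + Rs/3}\big)$.

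Finally I would invert this tail bound into the stated form. Setting the right-hand side equal to $e^{-t}$ and solving for $s$ splits into two regimes: when $\sigma_0^2$ dominates $Rs$ the estimate is sub-Gaussian and $s \asymp \sigma_0\sqrt{t + \log(M_1+M_2)}$, while when $Rs$ dominates it is sub-exponential and $s \asymp R(t + \log(M_1+M_2))$. Taking the larger of the two thresholds and absorbing universal factors into $C_0$ gives $\|\mathcal{S}\| \leq C_0\max\{\sigma_0\sqrt{t+\log(M_1+M_2)},\, R(t+\log(M_1+M_2))\}$ with probability at least $1 - e^{-t}$, as claimed. I expect the Lieb concavity input to be the only nontrivial step; the dilation reduction, the per-term MGF estimate, and the final tail inversion are all routine.
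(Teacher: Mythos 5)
The paper does not actually prove this theorem: it is stated as a quoted result, ``mostly due to Theorem 6.1 in \cite{tropp12user} as well as due to~\cite{Ver12},'' and your proposal is a correct reconstruction of exactly that standard proof---self-adjoint dilation to reduce to the Hermitian case, the matrix Laplace transform combined with Lieb's concavity theorem to get subadditivity of the matrix cumulant generating function, the Bernstein-type moment generating function bound for centered bounded summands, and inversion of the resulting tail $(M_1+M_2)\exp\bigl(-\tfrac{s^2/2}{\sigma_0^2+Rs/3}\bigr)$ into the two-regime form. Since your argument coincides with the proof underlying the paper's citation and contains no gaps, there is nothing further to flag.
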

The concentration inequality is slightly different  from Theorem~\ref{thm:bern1} if $\|\CZ_l\|$ is a sub-exponential random variable. Here we are using the form in  \cite{KolVal11}. Before presenting the inequality, we introduce the sub-exponential norm $\|\cdot\|_{\psi_1}$ of a matrix, defined as
\begin{equation}\label{def:psi1}
\|\BZ\|_{\psi_1} := \inf_{u \geq 0} \{ u: \E[ \exp(\|\BZ\|/u)] \leq 2 \}.
\end{equation}
One can find more details about this norm and norm on Orlicz spaces in~\cite{Ver12} and~\cite{VW96}.

\begin{theorem}\label{BernGaussian}
For a finite sequence of independent $M_1\times M_2$ random matrices $\CZ_l$ with $R : = \max_{1\leq l\leq L}
\|\CZ_l\|_{\psi_1}$ and $\sigma_0^2$ as defined in~\eqref{sigmasq}, we have the tail bound on
the operator norm of $\mathcal{S}$, 
\begin{equation}\label{thm:bern2}
\|\mathcal{S}\| \leq C_0 \max\{ \sigma_0 \sqrt{t + \log(M_1 + M_2)}, R\log\left( \frac{\sqrt{L}R}{\sigma_0}\right)(t + \log(M_1 + M_2)) \}
\end{equation}
with probability at least $1 - e^{-t}$ where $C_0$ is an absolute constant. 
\end{theorem}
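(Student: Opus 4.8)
The plan is to reduce Theorem~\ref{BernGaussian} to the bounded matrix Bernstein inequality of Theorem~\ref{thm:bern1} by a truncation argument, paying a logarithmic factor for the passage from bounded to sub-exponential matrices. Throughout I assume, as in every application of the theorem in this paper, that the $\CZ_l$ are centered, $\E\CZ_l = \bzero$. Fix a truncation level $B$ (to be chosen of the order $B \asymp R\log(\sqrt{L}R/\sigma_0)$) and split each summand as $\CZ_l = \BY_l + \BW_l$, where $\BY_l := \CZ_l\,\mathbf{1}_{\{\|\CZ_l\|\leq B\}}$ is the truncated part and $\BW_l := \CZ_l\,\mathbf{1}_{\{\|\CZ_l\|> B\}}$ is the tail. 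Then
\begin{equation*}
\mathcal{S} = \sum_{l=1}^L (\BY_l - \E\BY_l) + \sum_{l=1}^L \E\BY_l + \sum_{l=1}^L \BW_l,
\end{equation*}
and I will bound the three pieces separately.

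For the first piece I apply Theorem~\ref{thm:bern1} to the centered, bounded matrices $\BY_l - \E\BY_l$, which satisfy $\|\BY_l - \E\BY_l\| \leq 2B$. The key point is that truncation does not inflate the variance parameter: since $0 \leq \mathbf{1}_{\{\|\CZ_l\|\leq B\}} \leq 1$ and $\CZ_l\CZ_l^* \succeq \bzero$, one has $\E(\BY_l\BY_l^*) \preceq \E(\CZ_l\CZ_l^*)$, and subtracting the rank-one correction only helps, $\E((\BY_l-\E\BY_l)(\BY_l-\E\BY_l)^*) = \E(\BY_l\BY_l^*) - (\E\BY_l)(\E\BY_l)^* \preceq \E(\CZ_l\CZ_l^*)$; the same inequality holds for $\CZ_l^*\CZ_l$. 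Hence the variance parameter of the truncated sequence is at most $\sigma_0^2$, and Theorem~\ref{thm:bern1} gives, with probability at least $1-e^{-t}$, a bound of the form $C_0\max\{\sigma_0\sqrt{t+\log(M_1+M_2)},\,2B(t+\log(M_1+M_2))\}$, which is the dominant contribution once $B$ is fixed as above.

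The remaining two pieces must be shown to be negligible. For the tail $\sum_l\BW_l$, the definition~\eqref{def:psi1} of the $\psi_1$-norm yields the sub-exponential tail $\Pr(\|\CZ_l\| > u) \leq 2e^{-u/R}$, so a union bound gives $\Pr(\exists\, l:\BW_l \neq \bzero) \leq 2Le^{-B/R}$; with $B \asymp R\log(\sqrt{L}R/\sigma_0)$ this probability is small, so with high probability every $\BW_l = \bzero$ and this piece vanishes. For the mean-shift, centering gives $\sum_l\E\BY_l = -\sum_l\E\BW_l$, and I bound it in norm by $\sum_l\E\|\BW_l\| = \sum_l \E[\|\CZ_l\|\mathbf{1}_{\{\|\CZ_l\|>B\}}]$; integrating the tail bound, $\E[\|\CZ_l\|\mathbf{1}_{\{\|\CZ_l\|>B\}}] = B\Pr(\|\CZ_l\|>B) + \int_B^\infty \Pr(\|\CZ_l\|>s)\,ds \leq 2(B+R)e^{-B/R}$, so $\|\sum_l\E\BY_l\| \leq 2L(B+R)e^{-B/R}$, again negligible for the chosen $B$. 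Combining the three bounds through the triangle inequality produces the claimed estimate~\eqref{thm:bern2}.

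The main obstacle is the precise calibration of the truncation level $B$: it must be taken large enough that both the union-bound failure probability $2Le^{-B/R}$ and the mean-shift $2L(B+R)e^{-B/R}$ are dominated by the target bound, yet small enough that the bounded-Bernstein term $2B(t+\log(M_1+M_2))$ stays of the stated order. This balancing is exactly what forces the factor $\log(\sqrt{L}R/\sigma_0)$ into the second term of~\eqref{thm:bern2}, and getting the dependence on $L$, $R$, and $\sigma_0$ inside the logarithm right requires tracking the constants in the tail-expectation integral and in the variance comparison with some care.
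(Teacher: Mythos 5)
First, note that the paper does not prove this theorem at all: it is quoted from~\cite{KolVal11}, so there is no in-paper argument to compare against, and your proposal has to stand on its own. It does not. The fatal step is your calibration of the truncation level. With your choice $B \asymp R\log\bigl(\sqrt{L}R/\sigma_0\bigr)$, the union-bound failure probability is
$2Le^{-B/R} \asymp 2L\cdot\frac{\sigma_0}{\sqrt{L}R} = \frac{2\sqrt{L}\,\sigma_0}{R}$,
which is typically enormous, not small: since $\sigma_0^2$ is a sum of $L$ per-matrix variances, $\sigma_0$ itself grows like $\sqrt{L}$. In the paper's own application (Proposition~\ref{prop:main1}(a), where $L = mp$, $R \asymp 1+n/m$, $\sigma_0^2 \asymp p(1+n/m)$) this quantity is of order $p\sqrt{m}$. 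The mean-shift term fails for the same reason: $2L(B+R)e^{-B/R} \asymp \sqrt{L}\,\sigma_0\log\bigl(\sqrt{L}R/\sigma_0\bigr)$, which by itself already exceeds the claimed bound $\sigma_0\sqrt{t+\log(M_1+M_2)}$ by a factor of roughly $\sqrt{L}$. So the "balancing" you defer to at the end is not a technical detail to be tracked carefully; it is impossible. To make $2Le^{-B/R}\leq e^{-t}$ you are forced to take $B \gtrsim R\bigl(t+\log(2L)\bigr)$, and then your bounded-Bernstein term becomes $R\bigl(t+\log(2L)\bigr)\bigl(t+\log(M_1+M_2)\bigr)$: quadratic in $t$, and carrying $\log L$ where the theorem has the $t$-independent factor $\log\bigl(\sqrt{L}R/\sigma_0\bigr)$. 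These are genuinely different statements, since $\log\bigl(\sqrt{L}R/\sigma_0\bigr)$ can be $O(1)$ (e.g.\ when $\sigma_0 \asymp \sqrt{L}R$) while $t+\log L$ is always at least $\log L$. Your argument, repaired, proves a weaker inequality, not~\eqref{thm:bern2}.

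The structural reason the approach cannot work is that conditioning on the event that \emph{every} tail $\BW_l$ vanishes is too crude: that event must have probability $\geq 1-e^{-t}$, which hard-wires $t+\log L$ into the truncation level and hence into the final bound. The inequality as stated (Koltchinskii's form) requires pushing the truncation \emph{inside a moment or MGF bound} rather than inside a probability. Concretely, one shows for each $p\geq 2$ a semidefinite moment bound of the form $\sum_l \E[\CZ_l^p] \preceq \frac{p!}{2}\,(\mathrm{const})\,\sigma_0^2\,B^{p-2}$ with $B \asymp R\log\bigl(\sqrt{L}R/\sigma_0\bigr)$: the contribution of the event $\{\|\CZ_l\|>B\}$ to the $p$-th moment is of order $p!\,R^p e^{-B/R}$ per matrix, and summing over $l$ gives $p!\,R^{p-2}\sigma_0^2$, which is absorbed since $R\leq B$ --- no union bound, no $\log L$. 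Feeding this into the moment version of the matrix Bernstein inequality (Tropp's Theorem 6.2 in~\cite{tropp12user}) yields~\eqref{thm:bern2} with the $t$-independent logarithmic factor. Your preliminary steps (the tail bound $\Pr(\|\CZ_l\|>u)\leq 2e^{-u/R}$ from~\eqref{def:psi1}, the variance comparison $\E(\BY_l\BY_l^*)\preceq\E(\CZ_l\CZ_l^*)$, the tail-integration identity) are all correct and reusable, but the probabilistic truncation framework they are plugged into cannot deliver the stated result.
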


The estimation of the $\psi_1$-norm of a sub-exponential random variable easily follows from the following lemma.
\begin{lemma}[Lemma 2.2.1 in~\cite{VW96}]\label{lem:psi1}
Let $z$ be a random variable which obeys $\Pr\{ |z| > u \} \leq a e^{-b u }$, then
$\|z\|_{\psi_1} \leq (1 + a)/b.$ 
\end{lemma}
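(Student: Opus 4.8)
The plan is to unwind the definition \eqref{def:psi1} of the sub-exponential norm and verify directly that the candidate value $u_* := (1+a)/b$ lies in the set over which the infimum is taken. Since $\|z\|_{\psi_1}$ is by definition the infimum of all admissible $u$, exhibiting one admissible value $u_*$ immediately yields $\|z\|_{\psi_1}\le u_*$. Thus the entire task reduces to showing the single inequality $\E[\exp(|z|/u_*)]\le 2$.

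First I would express the moment generating function of $|z|$ through its tail via the layer-cake identity. For any $\lambda\in(0,b)$, starting from $\E[e^{\lambda|z|}]=\int_0^\infty \Pr\{e^{\lambda|z|}>t\}\,dt$ and substituting $t=e^{\lambda u}$ (the region $t\le 1$ contributing the constant $1$ since $e^{\lambda|z|}\ge 1$), one gets $\E[e^{\lambda|z|}] = 1 + \int_0^\infty \lambda e^{\lambda u}\Pr\{|z|>u\}\,du$. Inserting the assumed tail bound $\Pr\{|z|>u\}\le a e^{-bu}$ then gives
\begin{equation*}
\E[e^{\lambda|z|}] \le 1 + a\lambda\int_0^\infty e^{-(b-\lambda)u}\,du = 1 + \frac{a\lambda}{b-\lambda},
\end{equation*}
where the integral converges precisely because $\lambda<b$.

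Next I would make the extremal choice $\lambda = 1/u_* = b/(1+a)$, which is strictly less than $b$ since $a\ge 0$, and substitute it into the bound above. A one-line computation shows $b-\lambda = b\,a/(1+a)$, so that $a\lambda/(b-\lambda)=1$ exactly, and therefore $\E[\exp(|z|/u_*)]\le 1+1 = 2$. This confirms that $u_*=(1+a)/b$ satisfies the defining constraint in \eqref{def:psi1}, and the infimum characterization of $\|\cdot\|_{\psi_1}$ then delivers $\|z\|_{\psi_1}\le (1+a)/b$, which is the claim.

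I do not anticipate a serious obstacle here, as the argument is an elementary computation rather than a delicate estimate. The only two points demanding a little care are the convergence condition $\lambda<b$ needed for the exponential integral to be finite, and the verification that the extremal choice $\lambda=b/(1+a)$ makes the resulting bound meet the threshold $2$ with equality rather than overshooting it; both are settled by the short calculation sketched above.
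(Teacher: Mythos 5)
Your proof is correct, and it is essentially the canonical argument: the paper itself supplies no in-text proof of this lemma (it imports it wholesale from \cite{VW96}), and your layer-cake computation $\E[e^{\lambda|z|}] \le 1 + a\lambda/(b-\lambda)$ followed by the extremal choice $\lambda = b/(1+a)$, which makes the bound equal $2$ exactly, is precisely the Fubini-plus-tail-bound proof of Lemma 2.2.1 there. The only cosmetic caveat is the degenerate case $a=0$, where $\lambda = b/(1+a) = b$ is not strictly below $b$ as you claim; but then the hypothesis forces $z=0$ almost surely, so $\E[\exp(|z|/u_*)]=1\le 2$ and the conclusion holds trivially.
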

\begin{remark}\label{rem:psi1}
A direct implication of Lemma~\ref{lem:psi1} gives the $\psi_1$-norm of $\|\ba\|^2$ where $\ba\sim \mathcal{N}(0, \frac{1}{2}\I_n) + \mi\mathcal{N}(0, \frac{1}{2}\I_n)$ is a complex Gaussian random vector, i.e., $(\|\ba\|^2)_{\psi_1} \leq C_0n$ for the absolute constant $C_0.$
\end{remark}
\begin{lemma}
For $\ba\sim\mathcal{N}(0, \frac{1}{2}\I_n) + \mi\mathcal{N}(0, \frac{1}{2}\I_n),$ there holds
\begin{eqnarray}
\E( \|\ba\|^2 \ba\ba^*) & = & (n + 1) \I_n, \nonumber \\
\E( (\ba^*\BX\ba)\ba\ba^* ) & = & \BX + \Tr(\BX)\I_n, \label{eq:gauss-quad} 
\end{eqnarray}
for any fixed $\bx\in\CC^n$ and $\BX\in\CC^{n\times n}$. In particular, we have
\begin{eqnarray} 
\E |\lag \ba, \bx\rag|^2 \ba\ba^* & = & \|\bx\|^2\I_n + \bx\bx^*, \label{eq:gauss-3}\\
\E |\lag \ba, \bx\rag|^4 & = & 2\|\bx\|^4, \label{eq:gauss-4}\\
\E(\ba\ba^* - \I_n)^2 & = & n\I_n. \label{eq:gauss-5}
\end{eqnarray}
\end{lemma}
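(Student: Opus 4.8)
The plan is to reduce every identity to a single fourth-order (Wick/Isserlis) moment formula for the circularly symmetric complex Gaussian vector $\ba$. First I would record the second-order structure: writing $a_j = \frac{1}{\sqrt{2}}(X_j + \mi Y_j)$ with $X_j, Y_j$ i.i.d.\ real standard Gaussians and using independence across coordinates, one gets $\E(a_i\bar{a}_j) = \delta_{ij}$ together with the circularity relation $\E(a_i a_j) = 0$ for all $i,j$. Because only pairings of an unconjugated entry with a conjugated entry survive (the $\E(aa)$ pairings vanish by circularity), Isserlis' theorem for complex Gaussians yields the key identity
\begin{equation*}
\E(a_i \bar{a}_j a_k \bar{a}_l) = \E(a_i\bar{a}_j)\E(a_k\bar{a}_l) + \E(a_i\bar{a}_l)\E(a_k\bar{a}_j) = \delta_{ij}\delta_{kl} + \delta_{il}\delta_{kj}.
\end{equation*}
This one formula drives the whole lemma, and getting the complex pairing correct (in particular discarding the $\E(a_ia_k)$ pairing) is the single place that requires genuine care.

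Next I would establish the master identity \eqref{eq:gauss-quad} entrywise. For fixed $\BX$, the $(p,q)$ entry of $\E((\ba^*\BX\ba)\ba\ba^*)$ equals $\sum_{i,j} X_{ij}\,\E(\bar{a}_i a_j a_p \bar{a}_q)$; substituting the displayed fourth-order identity gives $\sum_{i,j} X_{ij}(\delta_{ij}\delta_{pq} + \delta_{jq}\delta_{ip}) = \Tr(\BX)\,\delta_{pq} + X_{pq}$, that is, $\E((\ba^*\BX\ba)\ba\ba^*) = \BX + \Tr(\BX)\I_n$. Setting $\BX = \I_n$ immediately produces the first displayed line $\E(\|\ba\|^2\ba\ba^*) = (n+1)\I_n$, while setting $\BX = \bx\bx^*$ (so that $\ba^*\BX\ba = \bx^*\ba\,\ba^*\bx = |\lag\ba,\bx\rag|^2$) yields \eqref{eq:gauss-3}, namely $\E(|\lag\ba,\bx\rag|^2\ba\ba^*) = \bx\bx^* + \|\bx\|^2\I_n$.

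The remaining two identities follow with almost no extra work. For \eqref{eq:gauss-4} I would reduce to a scalar: $u := \lag\ba,\bx\rag = \ba^*\bx$ is a scalar circularly symmetric complex Gaussian with $\E|u|^2 = \bx^*\E(\ba\ba^*)\bx = \|\bx\|^2$ and $\E u^2 = 0$, so the standard scalar fourth moment $\E|u|^4 = 2(\E|u|^2)^2$ gives $2\|\bx\|^4$ (this also follows from the Wick identity directly). Finally, for \eqref{eq:gauss-5} I would exploit that $\ba^*\ba = \|\ba\|^2$ is a scalar, so $(\ba\ba^*)^2 = \ba(\ba^*\ba)\ba^* = \|\ba\|^2\ba\ba^*$; expanding $(\ba\ba^* - \I_n)^2 = (\ba\ba^*)^2 - 2\ba\ba^* + \I_n$ and taking expectations, the already-proven $\E(\|\ba\|^2\ba\ba^*) = (n+1)\I_n$ and $\E(\ba\ba^*) = \I_n$ give $(n+1)\I_n - 2\I_n + \I_n = n\I_n$. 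Integrability is never in question, since all quantities are fixed-degree polynomials in jointly Gaussian coordinates; the only real content is the correct bookkeeping of the complex Wick pairings in the displayed fourth-order identity, which is therefore the step I would write out most carefully.
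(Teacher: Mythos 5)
Your proof is correct. There is, however, nothing in the paper to compare it against: the paper states this Gaussian lemma in the appendix without proof, treating it as a standard moment computation, and only supplies a proof for the companion Rademacher lemma (equations \eqref{eq:binary-1}--\eqref{eq:binary-4}). Your write-up therefore fills a genuine gap rather than duplicating an existing argument. The route you take---record $\E(a_i\bar{a}_j)=\delta_{ij}$ and $\E(a_ia_j)=0$, invoke the complex Wick/Isserlis identity $\E(a_i\bar{a}_ja_k\bar{a}_l)=\delta_{ij}\delta_{kl}+\delta_{il}\delta_{kj}$, verify the master identity \eqref{eq:gauss-quad} entrywise, and then specialize $\BX=\I_n$ and $\BX=\bx\bx^*$---is precisely the Gaussian counterpart of the entrywise case analysis the paper uses for the Rademacher lemma, where the extra terms $\BX^T-2\sum_k X_{kk}\BE_{kk}$ arise because real $\pm1$ variables lack circularity; your derivation makes clear why those terms are absent here. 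The remaining steps also check out: the scalar reduction for \eqref{eq:gauss-4} is valid since $u=\ba^*\bx$ is circularly symmetric with $\E|u|^2=\|\bx\|^2$, and the observation $(\ba\ba^*)^2=\|\ba\|^2\ba\ba^*$ cleanly reduces \eqref{eq:gauss-5} to the first line of the lemma.
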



\begin{lemma}
Suppose that $\ba\in \RR^n$ is a Rademacher sequence and for any fixed $\bx\in\CC^n$ and $\BX\in\CC^{n\times n}$, there holds
\begin{eqnarray}
\E( \|\ba\|^2 \ba\ba^*) & = & n\I_n, \label{eq:binary-1} \\
\E( (\ba^*\BX\ba)\ba\ba^* ) & = & \Tr(\BX) \I_n + \BX + \BX^T - 2\sum_{k=1}^n X_{kk}\BE_{kk} \label{eq:binary-2}
\end{eqnarray}
where $\BE_{kk}$ is an $n\times n$ matrix with only one nonzero entry equal to 1 and at position $(k,k).$
In particular, setting $\BX = \bx\bx^*$ gives 
\begin{eqnarray}
\E |\lag \ba, \bx\rag|^2 \ba\ba^* & = & \|\bx\|^2\I_n + \bx\bx^* + \overline{\bx\bx^*} - 2\diag(\bx)\diag(\bar{\bx}) \preceq 3\|\bx\|^2 \I_n, \label{eq:binary-3}\\
\E |\lag \ba, \bx\rag|^4  & = & 2\|\bx\|^4 + \left|\sum_{k=1}^n x_i^2 \right|^2 - 2\sum_{k=1}^n |x_k|^4 \leq 3\|\bx\|^4. \label{eq:binary-4}
\end{eqnarray}
\end{lemma}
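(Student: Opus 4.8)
The plan is to reduce everything to elementary fourth-moment computations for a Rademacher sequence, exploiting the fact that $a_i \in \{\pm 1\}$ renders several quantities deterministic. For~\eqref{eq:binary-1} I would first note that $\|\ba\|^2 = \sum_i a_i^2 = n$ is a constant, so that $\E(\|\ba\|^2 \ba\ba^*) = n\,\E(\ba\ba^*) = n\I_n$, using $\E(\ba\ba^*)_{ij} = \E(a_i a_j) = \delta_{ij}$ by independence and zero mean.

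The heart of the argument is~\eqref{eq:binary-2}, which I would attack entrywise. Writing the $(i,j)$ entry of $\E((\ba^*\BX\ba)\ba\ba^*)$ as $\sum_{k,l} X_{kl}\,\E(a_k a_l a_i a_j)$, I would invoke the Rademacher fourth-moment identity
\[
\E(a_k a_l a_i a_j) = \delta_{kl}\delta_{ij} + \delta_{ki}\delta_{lj} + \delta_{kj}\delta_{li} - 2\,\delta_{klij},
\]
where $\delta_{klij} = 1$ exactly when all four indices coincide. The correction term $-2\,\delta_{klij}$ encodes that $\E(a_i^4) = 1$ (instead of $3$, as in the Gaussian case), and this is the one place where care is genuinely required. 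Contracting each of the four Kronecker terms against $X_{kl}$ then yields $\Tr(\BX)\I_n$, $\BX$, $\BX^T$, and the diagonal correction $-2\sum_k X_{kk}\BE_{kk}$, respectively, which is precisely~\eqref{eq:binary-2}.

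Specializing $\BX = \bx\bx^*$ gives~\eqref{eq:binary-3} at once upon identifying $\Tr(\bx\bx^*) = \|\bx\|^2$, $(\bx\bx^*)^T = \overline{\bx\bx^*}$, and $X_{kk} = |x_k|^2$ so that $\sum_k X_{kk}\BE_{kk} = \diag(\bx)\diag(\bar{\bx})$. For the semidefinite bound I would discard the negative contribution $-2\diag(\bx)\diag(\bar{\bx}) \preceq \bzero$ and use that each rank-one factor obeys $\bx\bx^* \preceq \|\bx\|^2\I_n$ and $\overline{\bx\bx^*} \preceq \|\bx\|^2\I_n$ (both are rank-one positive semidefinite with top eigenvalue $\|\bx\|^2$), delivering the upper bound $3\|\bx\|^2\I_n$.

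Finally, rather than recompute a fourth moment for~\eqref{eq:binary-4}, I would contract~\eqref{eq:binary-3} against $\bx$ on both sides: since $\bx^*(\ba\ba^*)\bx = |\lag \ba,\bx\rag|^2$, one gets $\E|\lag\ba,\bx\rag|^4 = \bx^*\big[\E(|\lag\ba,\bx\rag|^2\ba\ba^*)\big]\bx$. Substituting the four terms and evaluating $\bx^*\overline{\bx\bx^*}\bx = \big|\sum_i x_i^2\big|^2$ reproduces the stated equality, and the estimate $\leq 3\|\bx\|^4$ then follows from $\big|\sum_i x_i^2\big| \leq \sum_i |x_i|^2 = \|\bx\|^2$ together with $-2\sum_k |x_k|^4 \leq 0$. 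I expect no serious obstacle beyond correctly tracking the diagonal correction in the fourth-moment tensor; the remaining steps are bookkeeping and standard positive-semidefinite estimates.
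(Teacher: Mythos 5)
Your proof is correct and follows essentially the same route as the paper's: an elementary entrywise computation of the Rademacher fourth moments $\E(a_k a_l a_i a_j)$, which you package as a Kronecker-delta identity with the $-2\delta_{klij}$ correction where the paper instead splits into the diagonal ($k=l$) and off-diagonal ($k\neq l$) cases --- the same bookkeeping in different notation. You also spell out the specializations \eqref{eq:binary-3} and \eqref{eq:binary-4} (via contraction against $\bx$ and the bounds $\bx\bx^*\preceq\|\bx\|^2\I_n$, $\bigl|\sum_i x_i^2\bigr|\leq\|\bx\|^2$), which the paper states without proof; these details are correct.
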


\begin{proof}
Since $\ba$ is a Rademacher sequence, i.e, each $a_i$ takes $\pm 1$ independently with equal probability, this implies $a_i^2 = 1$ and $\|\ba\|^2 = n.$ Therefore, $\E(\|\ba\|^2 \ba\ba^*) = n\E(\ba\ba^*) = n\I_n.$ 
The $(k,l)$-th entry of $(\ba^*\BX\ba)\ba\ba^*$ is 
$\sum_{i=1}^n\sum_{j=1}^n X_{ij}a_i a_j a_k a_l$.
\begin{enumerate}
\item If $k = l$, 
\begin{equation*}
\E\left(\sum_{i=1}^n\sum_{j=1}^n X_{ij}a_i a_j |a_k|^2\right) = \sum_{i=1}^n \E(X_{ii}|a_i|^2 |a_k|^2) = \Tr(\BX)
\end{equation*}
where $\E(a_ia_j |a_k|^2) = 0$ if $i\neq j.$
\item If $k\neq l$, 
\begin{equation*}
\E\left(\sum_{i=1}^n\sum_{j=1}^n X_{ij}a_i a_j a_ka_l\right) = X_{kl}\E(|a_k|^2 |a_l|^2) + X_{lk} \E(|a_k|^2|a_l|^2)= X_{kl} + X_{lk}.  
\end{equation*}
\end{enumerate}
Hence, we have 
$\E((\ba^*\BX\ba) \ba\ba^*) = \Tr(\BX) \I_n + \BX + \BX^T - 2\sum_{k=1}^n X_{kk}\BE_{kk}.$
\end{proof}

\begin{lemma}\label{lem:LAAL}
There holds
\begin{equation*}
\E(\BLam\BA(\BLam\BA)^*) = 
\frac{(n - 1)(m\I_m - \bone_m\bone_m^*)}{m-1} + \bone_m\bone_m^*
\end{equation*}
where $\BA = \BH\BM$, $\BLam = \diag(\overline{\BA\bv})$ and $\bv = (v_1, \cdots, v_n)^T\in\CC^n$ is a deterministic unit vector. $\BH\in\CC^{m\times n}$ is a random partial Fourier/Hadamard matrix with $\BH^*\BH = m\I_n$ and $m\geq n$, i.e., the columns of $\BH$ are uniformly sampled without replacement from an $m\times m$ DFT/Hadamard matrix; $\BM$ is a diagonal matrix with entries random sampled from $\pm 1$ with equal probability; moreover, we assume $\BM$ and $\BH$ are independent from each other. In particular, if $m=n=1$, $\E(\BLam\BA(\BLam\BA)^*) = 1.$
\end{lemma}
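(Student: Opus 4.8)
The plan is to compute the claimed expectation entrywise, exploiting the product structure $\BA = \BH\BM$ and integrating out the two independent sources of randomness (the Rademacher signs in $\BM$ and the random column selection in $\BH$) one after the other. Write $\bh_i$ for the $i$-th column of $\BH^*$, so that the $i$-th row of $\BA$ is $\bh_i^*\BM$ and the $i$-th column of $\BA^*$ is $\ba_i := \BM\bh_i$; recall $|H_{ij}| = 1$. Since $\BLam = \diag(\overline{\BA\bv})$ has $i$-th diagonal entry $\bv^*\ba_i$, the $(i,k)$ entry of $\BLam\BA\BA^*\BLam^*$ equals $(\bv^*\ba_i)(\ba_i^*\ba_k)(\ba_k^*\bv)$. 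Because $\BM$ is a real $\pm1$ diagonal matrix, $\BM^2 = \I$, so the middle factor simplifies to the $\BM$-independent quantity $\ba_i^*\ba_k = \bh_i^*\bh_k$.

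First I would take the expectation over the Rademacher sequence $\BM = \diag(\epsilon_1,\dots,\epsilon_n)$, conditioning on $\BH$. Expanding $(\bv^*\BM\bh_i)(\bh_k^*\BM\bv)$ and using $\E[\epsilon_a\epsilon_b] = \delta_{ab}$ collapses the double sum to $\sum_a |v_a|^2\,\overline{H_{ia}}H_{ka}$, so the $\BM$-conditional expectation of the $(i,k)$ entry becomes the product $\big(\sum_j H_{ij}\overline{H_{kj}}\big)\big(\sum_a |v_a|^2\overline{H_{ia}}H_{ka}\big)$. Next I would take the expectation over the random column selection. Writing the columns of $\BH$ as $\bff_{s_1},\dots,\bff_{s_n}$, an ordered tuple of distinct columns of the full $m\times m$ DFT/Hadamard matrix $\BF$ (with $|F_{ij}| = 1$ and $\BF\BF^* = m\I_m$), I expand the product over index pairs $(j,a)$ and split into the diagonal case $j=a$ and the off-diagonal case $j\neq a$. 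The diagonal terms contribute $\sum_j |v_j|^2 |H_{ij}|^2|H_{kj}|^2 = \|\bv\|^2 = 1$ deterministically for \emph{every} $(i,k)$, i.e.\ exactly the matrix $\bone_m\bone_m^*$. For the off-diagonal terms, by exchangeability each pair $(s_j,s_a)$ with $j\neq a$ is uniform over distinct ordered pairs, so the required quantity is $\tfrac{1}{m(m-1)}\sum_{s\neq t} F_{is}\overline{F_{ks}}\,\overline{F_{it}}F_{kt}$.

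This last average is the technical heart of the argument, and I would evaluate it using the row-orthogonality $\BF\BF^* = m\I_m$. The full (unrestricted) double sum factors as $(\BF\BF^*)_{ik}\,\overline{(\BF\BF^*)_{ik}} = m^2\delta_{ik}$, while the $s=t$ diagonal part equals $\sum_s |F_{is}|^2|F_{ks}|^2 = m$, so the restricted sum over $s\neq t$ is $m^2\delta_{ik}-m$ and the average equals $\frac{m\delta_{ik}-1}{m-1}$. Multiplying by $\sum_{j\neq a}|v_a|^2 = (n-1)\|\bv\|^2 = n-1$ produces the matrix $\frac{n-1}{m-1}(m\I_m-\bone_m\bone_m^*)$ (with value $n-1$ on the diagonal and $-\frac{n-1}{m-1}$ off-diagonal), and adding the $\bone_m\bone_m^*$ coming from the diagonal terms yields precisely the stated identity. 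The $m=n=1$ case is degenerate: there are no off-diagonal pairs, so only the $\bone_m\bone_m^* = 1$ term survives, matching $\E(\BLam\BA(\BLam\BA)^*) = 1$.

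The main obstacle, and the step requiring the most care, is the correct handling of the sampling \emph{without replacement} in the off-diagonal average: one must argue that distinct columns give a uniform distribution over ordered pairs $(s,t)$ with $s\neq t$, and then reduce the resulting fourth-order sum to the rank/orthogonality structure of $\BF\BF^*$. Everything else (the Rademacher reduction and the recombination of the diagonal and off-diagonal contributions into $\I_m$ and $\bone_m\bone_m^*$ pieces) is routine bookkeeping once this average is in hand.
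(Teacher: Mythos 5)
Your proposal is correct and follows essentially the same route as the paper's proof: integrate out the Rademacher matrix $\BM$ first (conditioning on $\BH$), split the resulting sum into the $j=a$ terms (which give $\bone_m\bone_m^*$) and the $j\neq a$ terms, and evaluate the latter by exchangeability of the ordered distinct column pairs of $\BF$ together with $\BF\BF^* = m\I_m$, yielding $\frac{m\I_m-\bone_m\bone_m^*}{m-1}$ times the factor $n-1$. The only difference is notational — you work entrywise where the paper uses Hadamard products and $\diag$ identities — but the decomposition, the key average, and the bookkeeping are identical.
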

\begin{proof}
We only prove the case when $\BA$ is a random Fourier matrix since the Hadamard case is essentially the same modulo very minor differences. 
By definition, 
\begin{equation*}
\BLam\BA\BA^* \BLam^* = \BLam\BH\BH^* \BLam^* =  \diag(\overline{\BH\BM\bv}) \BH\BH^*  \diag(\BH\BM\bv).
\end{equation*}

Let $\bh_i$ be the $i$-th column of $\BH^*$ and the $(i,j)$-th entry of $\diag(\overline{\BH\BM\bv}) \BH\BH^* \diag(\BH\BM\bv)$ is
$\overline{\lag \bh_{i}, \BM\bv\rag} \lag \bh_{j}, \BM\bv\rag \lag \bh_{i}, \bh_{j}\rag$
where $\lag \bu, \bv\rag = \bu^*\bv $. The randomness of $\BLam\BA\BA^* \BLam^*$ comes from both $\BH$ and $\BM$ and we first take the expectation with respect to $\BM$.
\begin{eqnarray*}
\E(\overline{\lag \bh_{i}, \BM\bv\rag} \lag \bh_{j}, \BM\bv\rag |\BH) 
& = & \overline{\bh_{i}^*\E( \BM\bv \bv^*\BM|\BH) \bh_{j}} =  \overline{\bh_{i}^* \diag(\bv)\diag(\bar{\bv}) \bh_{j}} \\
& = & (\overline{\BH \diag(\bv)\diag(\bar{\bv})\BH^* })_{i,j}
\end{eqnarray*}
where $\E(\BM\bv\bv^*\BM) = \diag(\bv)\diag(\bar{\bv})$ follows from each entry in $\BM$ being a Bernoulli random variable. Hence, $\E((\BLam\BA\BA^* \BLam^*)_{i,j}|\BH) = (\overline{\BH \diag(\bv)\diag(\bar{\bv})\BH^* })_{i,j} \cdot (\BH\BH^*)_{i,j}.$

Let $\bu_k$ be the $k$-th column of $\BH$ and $1\leq k\leq n$ and ``$\odot$" denotes the Hadamard (pointwise) product. So we have $\BH\BH^* = \sum_{k=1}^n \bu_k\bu_k^*.$ There holds,
\begin{eqnarray}
\E(\BLam\BA\BA^* \BLam^*|\BH)  & = &  (\overline{\BH \diag(\bv)\diag(\bar{\bv})\BH^* })\odot \BH\BH^* = \left( \sum_{k=1}^n |v_k|^2 \bar{\bu}_k\bar{\bu}_k^* \right)\odot \BH\BH^* \nonumber \\
& = & \sum_{k=1}^n |v_k|^2\diag(\bar{\bu}_k)\BH\BH^*\diag(\bu_k) = \sum_{k=1}^n \sum_{l=1}^n |v_k|^2 \diag(\bar{\bu}_k)\bu_l\bu_l^*\diag(\bu_k) \nonumber \\
& = &  \sum_{1\leq k\neq l \leq n} |v_k|^2 \diag(\bar{\bu}_k)\bu_l\bu_l^*\diag(\bu_k) + \bone_m\bone_m^* \label{eq:LAAL-H}
 \end{eqnarray}
where the third equation follows from linearity of the Hadamard product and from
\begin{equation*}
\bar{\bu}_k\bar{\bu}_k^*\odot \BH\BH^* = \diag(\bar{\bu}_k)\BH\BH^*\diag(\bu_k).
\end{equation*} 
The last one uses the fact that $\diag(\bar{\bu}_k)\bu_k = \bone_m$ if $\bu_k$ is a vector from the DFT matrix or Hadamard matrix. 
By the property of conditional expectation, we know that $\E(\BLam \BA\BA^*\BLam^*)  =\E(\E(\BLam \BA\BA^*\BLam^*)|\BH)$ and due to the linearity of expectation, it suffices to find out for $k\neq l$,
$\E(\diag(\bar{\bu}_k)\bu_l\bu_l^*\diag(\bu_k))$
where $\bu_k$ and $\bu_l$, by definition, are the $k$-th and $l$-th columns of $\BH$ which are sampled uniformly without replacement from an $m\times m$ DFT matrix $\BF$. Note that $(\bu_k, \bu_l)$ is actually an \emph{ordered} pair of random vectors sampled without replacement from columns of $\BF$. Hence there are in total $m(m-1)$ different choices of $\diag(\bar{\bu}_k)\bu_l$ and
\begin{equation*}
\Pr(\bu_k = \bphi_i, \bu_l = \bphi_j) = \frac{1}{m(m - 1)}, \quad \forall 1\leq i\neq j\leq m, \forall 1\leq k\neq l\leq n
\end{equation*}
where $\bphi_i$ is defined as the $i$-th column of an $m\times m$ DFT matrix $\BF$. Now we have, for any $k\neq l$, 
\begin{eqnarray}
\E(\diag(\bar{\bu}_k)\bu_l\bu_l^*\diag(\bu_k))
& = & \frac{1}{m(m - 1)}\sum_{i \neq j} \diag(\bar{\bphi}_i)\bphi_j\bphi_j^*\diag(\bphi_i) \nonumber \\
& = &  \frac{1}{m(m - 1)}\left( \sum_{1\leq i, j\leq m}\diag(\bar{\bphi}_i)\bphi_j\bphi_j^*\diag(\bphi_i) - m\bone_m\bone_m^* \right) \nonumber \\
& = & \frac{1}{m-1}\left(\sum_{i=1}^m \diag(\bar{\bphi}_i) \diag(\bphi_i) - \bone_m\bone_m^* \right) \nonumber \\
& = & \frac{m\I_m - \bone_m\bone_m^*}{m - 1} \label{eq:ukul}.
\end{eqnarray}
where $\diag(\bar{\bphi}_i)\bphi_i = \bone_m$ and $\sum_{i=1}^m \bphi_i\bphi_i^* = m\I_m.$ Now we return to $\E(\BLam\BA\BA^*\BLam^*)$. By substituting~\eqref{eq:ukul} into~\eqref{eq:LAAL-H}, we get the desired formula:
\begin{eqnarray*}
\E(\BLam\BA\BA^* \BLam^*) & = & \E(\E(\BLam\BA\BA^* \BLam^*|\BH)  ) =  \E\left(\sum_{1\leq k\neq l\leq n} |v_k|^2 \diag(\bu_k)\bu_l\bu_l^*\diag(\bu_k) \right) + \bone_m\bone_m^* \\
& = & \frac{m\I_m - \bone_m\bone_m^*}{m-1} \sum_{1\leq k\neq l\leq n}|v_k|^2 + \bone_m\bone_m^* =\frac{(n - 1)(m\I_m - \bone_m\bone_m^*)}{m-1} + \bone_m\bone_m^*,
\end{eqnarray*}
where $\sum_{1\leq k\neq l\leq n} |v_k|^2  = n - \sum_{k=1}^n |v_k|^2 = n - 1$ follows from $\sum_{k=1}^m |v_k|^2 = 1.$
\end{proof}


\end{document}